\renewcommand{\Tilde}{\widetilde}
\newcommand{\N}{\mathbb{N}}     
\newcommand{\Z}{\mathbb{Z}}     
\newcommand{\R}{\mathbb{R}}     
\newcommand{\C}{\mathbb{C}}     
\newcommand{\T}{\mathbb{T}}     
\newcommand{\cO}{\mathcal{O}}   
\newcommand{\cZ}{\mathcal{Z}}
\newcommand{\cT}{\mathcal{T}}   
\newcommand{\eps}{\varepsilon}
\newcommand{\one}{\mathbbm{1}}
\renewcommand{\epsilon}{\varepsilon}
\newcommand{\dd}{\,\mathrm{d}}
\newcommand{\dvol}{\,\mathrm{dvol}}
\newcommand{\rad}{\mathrm{rad}}
\DeclareMathOperator{\ran}{ran}
\DeclareMathOperator{\diam}{diam}
\DeclareMathOperator{\dist}{dist}
\newcommand{\scalar}[1]{\ensuremath{\langle #1 \rangle}}
\numberwithin{equation}{section}
\numberwithin{figure}{section}
\newtheorem{theorem}{Theorem}[section]
\newtheorem{prop}[theorem]{Proposition}
\newtheorem{lemma}[theorem]{Lemma}
\newtheorem{corol}[theorem]{Corollary}
\theoremstyle{definition}
\newtheorem{defin}[theorem]{Definition}
\newtheorem{example}[theorem]{Example}
\newtheorem{remark}[theorem]{Remark}
\begin{document}

\title{\bf Embedded trace operator\\ for infinite metric trees}

\author{Valentina Franceschi\footnote{Universit\`a di Padova, Dipartimento di Matematica Tullio Levi-Civita, Via Trieste 63,  35121 Padova PD, Italy, E-mail: \url{valentina.franceschi@unipd.it}, Webpage: \url{https://sites.google.com/site/franceschivale/home}}
\and
Kiyan Naderi\footnote{Carl von Ossietzky Universit\"at Oldenburg, Institut f\"ur Mathematik, Carl-von-Ossietzky-Stra{\ss}e 9--11, 26129 Oldenburg, Germany, E-mail:
\url{kiyan.naderi@uol.de}}
\and
Konstantin Pankrashkin\footnote{Carl von Ossietzky Universit\"at Oldenburg, Institut f\"ur Mathematik, Carl-von-Ossietzky-Stra{\ss}e 9--11, 26129 Oldenburg, Germany,
E-mail: \url{konstantin.pankrashkin@uol.de},
Webpage: \url{https://uol.de/pankrashkin}}
}

\date{}

%
%
%

{

\small 	
\maketitle

}

\begin{abstract}
We consider a class of infinite weighted metric trees obtained as perturbations of self-similar regular trees. Possible definitions of the boundary traces of functions in the Sobolev space on such a structure are discussed by using identifications of the tree boundary with a surface. Our approach unifies some constructions proposed by Maury, Salort, Vannier (2009) for dyadic discrete weighted trees (expansion in orthogonal bases of harmonic functions on the graph and using Haar-type bases on the domain representing the boundary),
and by Nicaise, Semin (2018) and Joly, Kachanovska, Semin (2019) for fractal metric trees (approximation by finite sections and identification of the boundary with a interval): we show that both machineries give the same trace map, and for a range of parameters we establish the precise Sobolev regularity of the traces. In addition, we introduce new geometric ingredients by proposing an identification with arbitrary Riemannian manifolds. It is shown that any compact manifold admits a suitable multiscale decomposition and, therefore, can be identified with a metric tree boundary in the context of trace theorems.
\end{abstract}



\tableofcontents




\section{Introduction}

In the present paper we discuss some aspects of the trace theory for Sobolev spaces on infinite metric graphs. By a metric graph we mean a geometric configuration which arises if one replaces the edges of a discrete graph by intervals, and one introduces a differential operator on such a structure by defining a differential expression on each interval and by imposing a gluing condition at each node. A metric graph together with a differential operator
on it is often called a quantum graph. While
quantum graphs represent by now a well-established theory \cite{berk,bk, AGA}, the most attention was concentrated on the study of regular configurations with suitable lower bounds on the edge lengths and other parameters: in that case it is known that gluing conditions at the nodes are sufficient to define a self-adjoint operator or a non-self-adjoint one
with good properties \cite{HKS}. More recent papers \cite{rc1,rc2,dkmpt,ekmn,haes,kmn,lsv,SSVW} initiated the discussion
of the most general quantum graphs, which shows that in many cases additional ``boundary conditions at the external boundary'' must be imposed. It should be noted that the notion of boundary for general graphs is not obvious, which is a well-known issue for both metric and discrete infinite graphs \cite{rc2,kn,klr,kong,LSW,woess}; we recall that metric and discrete graphs show a number of common features \cite{kn,jvb,vB,Cat,kp1,LP}, and in case of equilateral metric graphs even a kind of unitary equivalence between respective Laplacians can be established \cite{kp2,kp3}. If the boundary is naturally defined (for example, for a tree, the set of infinite paths starting at a fixed vertex
can be naturally viewed as the boundary), one
arrives at the question of the description of possible boundary conditions,
which requires a construction of suitable function spaces at the boundary as well as a rigorous definition of boundary values for functions defined on the graph. For some classes of infinite trees, the abstract boundary can be endowed with a metric structure, which gives rise to Sobolev-type spaces and associated trace theorems: we refer to \cite{bms,rc2,bb,bb2,kosk1,kosk2} for related results.
On the other hand, if a tree models a structure embedded
into a space, then in many cases the boundary can be naturally
identified with a surface.
The aim of the present work is to define the boundary traces on metric trees using such an identification.

Our work is mainly motivated by the papers \cite{jks,msv,nisem,semin}
dealing with the analysis of Dirichlet-to-Neumann operators and wave equations
on trees viewed as a model of human lung. In particular, our main object (geometric tree) is directly borrowed from \cite{jks}. The paper \cite{msv}
deals with weigthed discrete Laplacians on an infinite dyadic tree, and it established a trace theorem for discrete Sobolev spaces by identifying the boundary with a Euclidean domain. The works \cite{jks,nisem,semin} proposed
a modified model with the help of the continuous weighted Laplacians, and an identification of the boundary trace
with an interval was addressed. Moreover in \cite{semin} the Sobolev regularity was partially studied. The notion of the boundary trace
was then used as a theoretical tool in \cite{nisem,semin} in order to establish the equivalence between various definitions of the Sobolev spaces on the fractal trees, which further surved for numerical approximation of infinite trees by their finite truncations when solving the wave equation \cite{jk1,jk2,jk3,js}. It should be said that the approaches of \cite{jks} and  \cite{jks,nisem,semin}
to the definition of the boundary trace were quite different: the paper \cite{msv}
uses an orthonormal basis of harmonic functions (so that the definition of the boundary trace of an arbitrary function is recovered from its expansion in this basis), while \cite{jks,nisem,semin} used more explicit approximations by finite truncations. As a by-result of our analysis one obtains that the both approaches are equivalent. In addition, we discuss for the first time an indentification
of the boundary with general Riemannian manifolds (in particular, Euclidean surfaces), which goes beyond the Euclidian domains considered in previous works.

We now describe our configuration and the main results in greater detail. Let $p\in\N$ with $p\ge 2$ be given and a root $o$ be given. We glue to $o$ an edge $e_{0,0}$ represented by an interval of length $\ell_{0,0}$, the second vertex of $e_{0,0}$ will be called $X_{0,0}$. If all $e_{n,k}$ and $X_{n,k}$ with $n\in\N_0$ and $k\in\{0,\dots,p^n-1\}$ are already constructed, then to each $X_{n,k}$ we attach
$p$ new edges $e_{n+1,pk+j}$, with $j\in\{0,\dots,p-1\}$, having lengths $\ell_{n+1,pk+j}$,
and the pendant vertices of $e_{n+1,pk+j}$, to be denoted by $X_{n+1,pk+j}$, will be viewed as children of $X_{n,k}$. This process continues infinitely, which creates a infinite rooted metric tree $\cT$. The subtree of $\cT$ starting at $X_{n,k}$, i.e. the subtree spanned by the offsping of $X_{n,k}$ (the children, the children of the children etc.), will be denoted by $\cT_{n,k}$. See Figure \ref{fig1}.

\begin{figure}[t]
\centering\fontsize{8pt}{1.1}

\begin{tabular}{ccc}
			\adjustbox{valign=m}{\begin{tikzpicture}[font=\scriptsize,edge from parent/.style={draw},scale=0.9]
	
	\tikzstyle{solid node}=[circle,draw,inner sep=1.2,fill=black];
	\tikzstyle{hollow node}=[circle,draw,inner sep=1.2];
	\tikzstyle{level 1}=[level distance=10mm,sibling distance=50mm]
	\tikzstyle{level 2}=[level distance=8mm,sibling distance=25mm]
	\tikzstyle{level 3}=[level distance=12mm,sibling distance=18mm]
	\tikzstyle{level 4}=[level distance=14mm,sibling distance=14mm]
	\node(0)[above]{$o$}
	child{node[above]{$X_{00}$}
		child{node[left]{$X_{1,0}$}
			child{node[below]{$X_{2,0}$}
				child{node[below]{} edge from parent[dotted]}
				child{node[below]{} edge from parent[dotted]}	
				edge from parent }
			child{node[below]{$X_{2,p-1}$}
				child{node[below]{} edge from parent[dotted]}
				child{node[below]{} edge from parent[dotted]}	
				edge from parent }
			edge from parent 
		}
		child{node[right]{$X_{1,p-1}$}
			child{node[below]{$X_{2,p(p-1)}$}
				child{node[below]{} edge from parent[dotted]}
				child{node[below]{} edge from parent[dotted]}	
				edge from parent }
			child{node[below]{$X_{2,p^2-1}$}
				child{node[below]{} edge from parent[dotted]}
				child{node[below]{} edge from parent[dotted]}	
				edge from parent }
			edge from parent 
		}
		edge from parent 
	};
	\node at (0,0) {};
	\node at (0,-1.4) {$\dots$};
	\node at (-1.7,-2.4) {$\dots$};
	\node at (1.9,-2.4) {$\dots$};
	\node at (2.75,-4) {$\dots$};
	\node at (1,-4) {$\dots$};
	\node at (-0.8,-4) {$\dots$};
	\node at (-2.6,-4) {$\dots$};
	
	%
	\node  at (-0.3, -0.15) {$e_{0,0}$};
	\node  at (-1, -0.8) {$e_{1,0}$};
	\node  at (1.1, -0.8) {$e_{1,p-1}$};
	\node  at (0.8, -2) {$e_{2,p(p-1)}$};
	\node  at (2.9, -2) {$e_{2,p^2-1}$};
\end{tikzpicture}}
&\quad&
	\adjustbox{valign=m}{\begin{tikzpicture}[font=\footnotesize,edge from parent/.style={draw,thick},scale=0.68]
	
	\tikzstyle{solid node}=[circle,draw,inner sep=1.2,fill=black];
	\tikzstyle{hollow node}=[circle,draw,inner sep=1.2];
	\tikzstyle{level 1}=[level distance=8mm,sibling distance=20mm]
	\tikzstyle{level 2}=[level distance=8mm,sibling distance=30mm]
	\tikzstyle{level 3}=[level distance=12mm,sibling distance=18mm]
	\node(0)[above]{}
	child{node[left]{$X_{n,k}$}
		child{node[below]{$X_{n+1,kp}$}
			child{node[below]{} edge from parent[dotted]}
			child{node[below]{} edge from parent[dotted]}	
			edge from parent[solid]}
		child{node[below]{$X_{n+1,kp+(p-1)}$}
			child{node[below]{} edge from parent[dotted]}
			child{node[below]{} edge from parent[dotted]}	
			edge from parent[solid] }
		edge from parent[dashed]
	}
	child{node[right]{} edge from parent[dashed]
	};
	\node at (0,0) {};
	\node at (-1.5,-1.55) {$\dots$};
	\node at (-1.5,-3.3) {$\dots$};
	\draw[decorate,decoration={brace,amplitude=6pt}] (-3.9,-3.0) -- (-3.9,-0.5) node[pos=0.5, left=0.2cm]{$\cT_{n,k}$};
\end{tikzpicture}}\\
(a) & &(b)
\end{tabular}

\caption{(a) The tree $\cT$. (b) A subtree $\cT_{n,k}$}\label{fig1}	

\end{figure}

For subsequent constructions it will be useful to introduce coordinates on $\cT$. Denote by $L_{n,k}$ the distance between the root $o$ and  $X_{n,k}$, i.e. the length of the unique path between $o$ and $X_{n,k}$ obtained by summing the lengths of all edges in the path. Then by $(n,k,t)$ with $t\in[L_{n,k}-\ell_{n,k},L_{n,k}]$ we denote the point of $e_{n,k}$ which is at the distance  $L_{n,k}-t$ from $X_{n,k}$. In this notation,
\[
X_{n,k}=(n,k,L_{n,k})=(n+1,pk+j,L_{n,k}) \text{ for any } j\in\{0,\dots,p-1\}.
\]
Let $w:\cT\to(0,\infty)$ be a locally bounded measurable function, which will be used
as an integration weight: for $f:\cT\to\C$ one defines
\[
\int_\cT f\dd\mu:=\sum_{n=0}^\infty\sum_{k=0}^{p^n-1} \int_{L_{n,k}-\ell_{n,k}}^{L_{n,k}}f(n,k,t)\,w(n,k,t)\dd t,
\]
then
\[
L^2(\cT):=\big\{f:\cT\to\C:\ \|f\|^2_{L^2(\cT)}:=\int_\cT |f|^2\dd\mu<\infty\big\}.
\quad
\]
Due to the above definition the set of vertices has zero measure. Therefore, each measurable function $f:\cT\to\C$
can be identified with a family of functions $(f_{n,k})$,
\begin{gather*}
f_{n,k}:=f(n,k,\cdot):\ (L_{n,k}-\ell_{n,k},L_{n,k})\to \C,\quad
n\in\N_0,
\quad
k\in\{0,\dots,p^n-1\}.
\end{gather*}
Then $f=(f_{n,k})$ belongs to $L^2(\cT)$ if and only if
\[
\|f\|^2_{L^2(\cT)}:=\sum_{n=0}^\infty\sum_{k=0}^{p^n-1} \int_{L_{n,k}-\ell_{n,k}}^{L_{n,k}}\big|f_{n,k}(t)\big|^2w_{n,k}(t)\dd t<\infty.
\]
If $f=(f_{n,k})$ such that all $f_{n,k}$ are weakly differentiable, we denote $f':=(f'_{n,k})$.
The first Sobolev space $H^1(\cT)$ is then introduced as
\begin{align*}
	H^1(\cT)&:=\{f\in L^2(\cT): f \text{ is continuous with } f'\in  L^2(\cT)\},\\
	\|f\|^2_{H^1(\cT)}&:=\|f\|^2_{L^2(\cT)}+\|f'\|^2_{L^2(\cT)}.
\end{align*}
Moreover, we denote
\begin{align*}
	H^1_c(\cT)&:=\{f\in H^1(\cT): \text{ there exists $N\in\N$ such that}\\
	&\qquad f_{n,k}\equiv 0 \text{ for all $(n,k)$ with $n>N$}\},\\
	H^1_0(\cT)&:=\text{the closure of $H^1_c(\cT)$ in $H^1(\cT)$.}
\end{align*}

\begin{figure}[t]
	\centering
	\begin{tikzpicture}[font=\footnotesize,edge from parent/.style={draw,thick}]
		\draw[fill=gray!50, draw=none, shift={(3.5, -3.2)},scale=0.7, rotate=180]
		(0, 0) to[out=20, in=140] (3, -0.4) to [out=60, in=160]
		(10, 0.5) to[out=130, in=60]
		cycle;
		
		\tikzstyle{solid node}=[circle,draw,inner sep=1.2,fill=black];
		\tikzstyle{hollow node}=[circle,draw,inner sep=1.2];
		\tikzstyle{level 1}=[level distance=10mm,sibling distance=50mm]
		\tikzstyle{level 2}=[level distance=8mm,sibling distance=22mm]
		\tikzstyle{level 3}=[level distance=8mm,sibling distance=15mm]
		\tikzstyle{level 4}=[level distance=12mm,sibling distance=10mm]
		\node(0)[above]{$o$}
		child{node[above]{$X_{00}$}
			child{node[left]{$X_{1,0}$}
				child{node[below]{$X_{2,0}$}
					child{node[below]{} edge from parent[dotted]}
					child{node[below]{} edge from parent[dotted]}	
					edge from parent }
				child{node[below]{$X_{2,p-1}$}
					child{node[below]{} edge from parent[dotted]}
					child{node[below]{} edge from parent[dotted]}	
					edge from parent }
				edge from parent 
			}
			child{node[right]{$X_{1,p-1}$}
				child{node[below]{$X_{2,p(p-1)}$}
					child{node[below]{} edge from parent[dotted]}
					child{node[below]{} edge from parent[dotted]}	
					edge from parent }
				child{node[below]{$X_{2,p^2-1}$}
					child{node[below]{} edge from parent[dotted]}
					child{node[below]{} edge from parent[dotted]}	
					edge from parent }
				edge from parent 
			}
			edge from parent 
		};
		\node at (0,0) {};
		\node at (1,0) {{\large $\cT$}};
		\node at (0,-1.4) {$\dots$};
		\node at (-1.6,-2.2) {$\dots$};
		\node at (1.7,-2.2) {$\dots$};
		\node at (2.47,-3.5) {$\dots$};
		\node at (0.95,-3.5) {$\dots$};
		\node at (-0.8,-3.5) {$\dots$};
		\node at (-2.3,-3.5) {$\dots$};
		

		\shade[thin, left color=gray!10, right color=gray!50, draw=none,
		shift={(3.5, -3.2)},scale=0.7, rotate=180]
		(0, 0) to[out=10, in=140] (6.6, -0.2) to [out=60, in=120] (10, 0.5)
		to[out=130, in=60] cycle;
		\node at (0,-4.5) {{\large $\Omega$}};
	\end{tikzpicture}
	\caption{The boundary of $\cT$ viewed as a surface $\Omega$}\label{fig2b}
\end{figure}

One arrives at the following quite natural questions:
\begin{itemize}
	\item[(a)] Do we have $H^1_0(\cT)=H^1(\cT)$? 
	\item[(b)] If not, can we characterize the functions in $H^1_0(\cT)$ by their ``behavior of infinity'', i.e. by the behavior of $f=(f_{n,k})\in H^1(\cT)$ for $n\to \infty$?
	\item[(c)] Can this ``behavior at infinity'' be characterized by a function defined on some set $\Omega$ viewed as the ``boundary'' of $\cT$?
\end{itemize}
Remark that the $H^1$-norm on $\cT$ represents the sesquilinear form
of the Neumann Laplace operator $\Delta$, which is important for the study of various diffusion
processes on $\cT$. One can also consider first a thickened version $\cT_\eps$ of $\cT$ (i.e. one embeds $\cT$ in $\R^n$ and takes the $\eps$-neighborhood) and consider the associated Neumann Laplacian $\Delta_\eps$, then one has a suitably defined convergence of $\Delta_\eps$ to $\Delta$ as $\eps\to 0$, see \cite{pwz,post}. The above problem (a) 
is related to the question whether the boundary of $\cT$ is penetrable, i.e. whether one can impose alternative boundary conditions at the tree boundary, and the problems (b) and (c)
are closely related to a concrete representation of such conditions and
to the existence and uniqueness of solutions of the associated boundary value problems.

We provide answers to the above questions by identifying the abstract boundary $\partial\cT$
with a geometric object, more precisely, an open set $\Omega$ with compact closure in a $d$-dimensional Riemannian manifold (in particular, $\Omega$ is allowed to be an arbitrary compact Riemannian manifold), see Figure \ref{fig2b}. The main assumption on
$\Omega$ is that it admits a special decomposition: there exists $\Omega_{n,k}\subset \Omega$, $n\in\N_0$, $k\in\{0,\dots,p^n-1\}$, constructed as follows. One sets $\Omega_{0,0}:=\Omega$. If some $\Omega_{n,k}$ is constructed, one chooses $p$ non-empty disjoint subsets $\Omega_{n+1,pk+j}\subset \Omega_{n,k}$, $j\in\{0,\dots,p-1\}$, such that
\[
\Big| \Omega_{n,k}\setminus \bigcup_{j=0}^{p-1}\Omega_{n+1,pk+j}\Big|=0,
\]
and this process continues infinitely (Figure \ref{fig2}). In addition, one needs to impose some geometric conditions on $\Omega_{n,k}$ for large $n$: informally, all $\Omega_{n,k}$ must have approximately the same volume, and their shape is not allowed to become ``too complicated''. A decomposition satisfying all necessary assumptions will be called \emph{a regular strongly balanced $p$-multiscale decomposition} of $\Omega$ (we refer to Subsection \ref{ssmult} for rigorous definitions concerning Euclidean open sets and to Subsection \ref{sec44} for an extension to the case of manifolds). 

\begin{figure}[t]
	\centering
\begin{minipage}{0.25 \textwidth}
	\begin{tikzpicture}[scale=0.75]
		\draw (0,0) -- (4,0) -- (4,4) -- (0,4) -- (0,0);
		\node at (2,2) {$\Omega_{0,0}$};
	\end{tikzpicture}
\end{minipage}
$\to$
\begin{minipage}{0.25 \textwidth}
	\begin{tikzpicture}[scale=0.75]
		\draw (0,0) -- (4,0) -- (4,4) -- (0,4) -- (0,0);
		\draw (2,0) -- (2,4);
		\node at (1,2) {$\Omega_{1,0}$};
		\node at (3,2) {$\Omega_{1,1}$};
	\end{tikzpicture}
\end{minipage}
$\to$
\begin{minipage}{0.25 \textwidth}
	\begin{tikzpicture}[scale=0.75]
		\draw (0,0) -- (4,0) -- (4,4) -- (0,4) -- (0,0);
		\draw (2,0) -- (2,4);
		\draw (2,0) -- (2,4);
		\draw (0,2) -- (4,2);
		\node at (1,1) {$\Omega_{2,0}$};
		\node at (1,3) {$\Omega_{2,1}$};
		\node at (3,1) {$\Omega_{2,2}$};
		\node at (3,3) {$\Omega_{2,3}$};
	\end{tikzpicture}
\end{minipage}
$\to\dots$
	\caption{An example of a multiscale decomposition (for $p=2$)}\label{fig2}
\end{figure}

Remark that the combinatorial structure of the family $(\Omega_{n,k})$ repeats the combinatorial structure of the family of substrees $(\cT_{n,k})$: for arbitrary $(n,k)$ and $(n',k')$ one has
\begin{itemize}
	\item $\Omega_{n,k}\subset \Omega_{n',k'}$ if and only if $\cT_{n,k}\subset \cT_{n',k'}$,
	\item $\Omega_{n,k}\cap \Omega_{n',k'}\ne\emptyset$ if and only if $\cT_{n,k}\cap \cT_{n',k'}\ne\emptyset$,
\end{itemize}
and this observation is used to create a link between the functions defined on $\cT$
and those defined on $\Omega$. More precisely, one imagines that the boundary of $\cT$
is glued to $\Omega$ in such a way that the boundary of each $\cT_{n,k}$ is glued to $\Omega_{n,k}$. In this case, if a function $f$ on $\cT$ has a constant value $\alpha_{n,k}$ along some $\cT_{n,k}$ and is zero on all other subtrees $\cT_{n,j}$ with $j\ne k$, then
it natural to identify the boundary trace of $f$ with the function $\alpha_{n,k}\one_{\Omega_{n,k}}$ (Figure \ref{fig4}).
It appears that this somewhat naive definition can be given a rigorous form, and
a part of our main results can be summarized as follows:

\begin{figure}
	\centering
		\begin{tikzpicture}[font=\footnotesize,edge from parent/.style={draw,thick},scale=0.8]
		
		\tikzstyle{solid node}=[circle,draw,inner sep=1.2,fill=black];
		\tikzstyle{hollow node}=[circle,draw,inner sep=1.2];
		\tikzstyle{level 1}=[level distance=10mm,sibling distance=50mm]
		\tikzstyle{level 2}=[level distance=8mm,sibling distance=30mm]
		\tikzstyle{level 3}=[level distance=8mm,sibling distance=20mm]
		\tikzstyle{level 4}=[level distance=12mm,sibling distance=10mm]
		\node(0)[above]{$o$}
		child{node[above]{$X_{0,0}$}
			child{node[left]{$X_{1,0}$}
				child{node[below]{$X_{2,0}$}	
					edge from parent[solid]}
				child{node[below]{}
					child{node[below]{} edge from parent[dotted]}
					child{node[below]{} edge from parent[dotted]}	
					edge from parent[black, dotted] }
				edge from parent[]
			}
			child{node[right]{}
				child{node[below]{}
					edge from parent}
				child{node[below]{}	
					edge from parent}
				edge from parent[dotted, black]
			}
			edge from parent[black]
		};
		\node at (0,0) {};
		\node at (1,0.3) {{\large $\cT$}};
		\node at (4.2,-5.5) {{\large $\Omega$}};
		\draw (0.5,-3.9) to [closed, curve through = {(3.5,-4.3) (-0.2,-5.8)}]  (-4,-4.4);
		\draw (0.5,-3.9) .. controls (0,-5) .. (-0.2,-5.8)
		(-4,-4.4) .. controls (0,-5.3) .. (3.5,-4.3);
		\draw[dotted, thick] (-2.9,-2.67) -- (-2.8,-4.1)
		(-2.65,-2.67) -- (-1,-4.3);
		\node[black] at (-4.4,-2.5) {{\normalsize $\cT_{2,0}$}};
		\draw [thick, <->] (-4.4,-2.8) -- (-4.4,-3.5);
		\node[black] at (-4.4,-3.8) {{\normalsize $\Omega_{2,0}$}};
		\filldraw[pattern=north west lines, pattern color=black, opacity=0.2](0.5,-3.9)  -- (0,-5.05) {[rounded corners] 
			--(-1.1,-5)} --(-4,-4.4) {[rounded corners]  --(-3.79,-3.8) --(-3.45,-3.5) --(-3.1,-3.42) --(-2.5,-3.43) --(0,-3.91)} -- cycle; 
	\end{tikzpicture}
\caption{Identifying $\cT_{n,k}$ with $\Omega_{n,k}$}\label{fig4}
\end{figure}

\begin{theorem}\label{thm11}
Assume that there exist constants $\alpha>0$, $0<\ell<1$ and $c\ge 1$ such that for any $n\in\N_0$ and $k\in\{0,\dots,p^n-1\}$ there holds
\begin{equation}
	 \label{cc}
c^{-1}\ell^n\le \ell_{n,k}\le c\ell^n,\quad 
c^{-1}\alpha^n\le w_{n,k}\le c\alpha^n.
\end{equation}
Then $H^1_0(\cT)\ne H^1(\cT)$ if and only if
\begin{equation}
	\label{lap0}
\ell<\alpha p<\dfrac{1}{\ell}.
\end{equation}

Assume that \eqref{lap0} is satisfied and let $\Omega$ be a non-empty open set with compact closure in a $d$-dimensional Riemannian manifold of bounded geometry. We denote by $H^s(\Omega)$ the associated fractional Sobolev spaces of order $s\ge 0$ and require that $\Omega$ admits a regular strongly balanced $p$-multiscale decomposition $(\Omega_{n,k})$ as defined in Subsection~\ref{sec44}.
Denote
\begin{equation*}
	\sigma:=\frac{1}{2}\Big ( 1-\frac{\log\ell-\log\alpha}{\log p}\Big)>0
\end{equation*}
and let $0\le s<\frac{1}{2}$ such that $s\le \sigma d$, then for any $f\in H^1(\cT)$
there exists the limit
\[
\gamma^\cT_\Omega f:=\lim_{N\to\infty}\sum_{K=0}^{p^N-1} f(X_{N,K})\one_{\Omega_{N,K}}\in H^s(\Omega).
\]
The embedded trace operator defined by $\gamma^\cT_\Omega:H^1(\cT)\to H^s(\Omega)$
is a bounded linear operator with $\ker \gamma^\cT_\Omega=H^1_0(\Omega)$, and
\[
\gamma^\cT_\Omega\big(H^1(\cT)\big)=H^{\sigma d}(\Omega)
\text{ if } \sigma d<\frac{1}{2}.
\]
\end{theorem}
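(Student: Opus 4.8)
The plan is to treat the two assertions separately, and to reduce everything to a martingale/multiscale analysis of the approximants $g_N:=\sum_{K}f(X_{N,K})\one_{\Omega_{N,K}}$. For the dichotomy, I would isolate the roles of the two inequalities in \eqref{lap0} through the constant function $\one$. A direct computation with \eqref{cc} gives $\|\one\|_{L^2(\cT)}^2\asymp\sum_{n\ge0}(p\ell\alpha)^n$, which is finite exactly when $\alpha p<1/\ell$; in that regime $\one\in H^1(\cT)$. Whether $\one\in H^1_0(\cT)$ is, on the other hand, governed by the effective resistance from the root to infinity, $R_\infty\asymp\sum_{n\ge0}\big(\ell/(\alpha p)\big)^n$, finite precisely when $\ell<\alpha p$; finite resistance means the tree boundary carries positive capacity, so $\one\notin H^1_0(\cT)$. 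Hence \eqref{lap0} forces $H^1_0(\cT)\ne H^1(\cT)$. For the converse I would argue that if $\alpha p\le\ell$ the boundary has zero capacity and a cheap cut-off shows $H^1_0(\cT)=H^1(\cT)$, while if $\alpha p\ge 1/\ell$ a Hardy/Poincaré inequality at infinity forces every $H^1$-function to be approximable by compactly supported ones.

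For the trace itself I would first establish $L^2$-convergence. Writing $g_N$ as a telescoping series and estimating each increment edge by edge, Cauchy--Schwarz together with \eqref{cc} yields $|f(X_{N+1,pK+j})-f(X_{N,K})|^2\le C(\ell/\alpha)^{N+1}\int_{e_{N+1,pK+j}}|f'|^2\dd\mu$, while strong balance gives $|\Omega_{N+1,pK+j}|\asymp p^{-(N+1)}|\Omega|$. Summing over the level,
\[
\|g_{N+1}-g_N\|_{L^2(\Omega)}^2\le C\,p^{-2\sigma(N+1)}\sum_{k=0}^{p^{N+1}-1}\int_{e_{N+1,k}}|f'|^2\dd\mu ,
\]
where I used $\ell/(\alpha p)=p^{-2\sigma}$, which follows from the very definition of $\sigma$. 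Since $\sigma>0$ the right-hand side is summable, so $\gamma^\cT_\Omega f:=\lim_N g_N$ exists in $L^2(\Omega)$. To upgrade to $H^s$ I would invoke the Haar/multiscale characterization of the fractional space, valid for $0\le s<1/2$,
\[
\|g\|_{H^s(\Omega)}^2\asymp \|E_0 g\|_{L^2(\Omega)}^2+\sum_{N\ge0}p^{2sN/d}\,\|E_{N+1}g-E_N g\|_{L^2(\Omega)}^2,\qquad E_N g:=\sum_K\Big(\tfrac{1}{|\Omega_{N,K}|}\int_{\Omega_{N,K}}g\Big)\one_{\Omega_{N,K}} .
\]
Applying this to $g=\gamma^\cT_\Omega f$ (after identifying $g_N$ with $E_N g$ up to a summable correction) and inserting the increment bound turns the weighted series into $C\sum_N p^{2N(s/d-\sigma)}\sum_k\int_{e_{N+1,k}}|f'|^2\dd\mu$; the exponent is nonpositive exactly when $s\le\sigma d$, so the sum telescopes to $C\|f'\|_{L^2(\cT)}^2$. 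This proves $\gamma^\cT_\Omega f\in H^s(\Omega)$ and the bound $\|\gamma^\cT_\Omega f\|_{H^s(\Omega)}\le C\|f\|_{H^1(\cT)}$. The constraint $s<1/2$ is essential and structural: $\one_{\Omega_{N,K}}\notin H^s(\Omega)$ for $s\ge1/2$, so step functions cease to characterize $H^s$.

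For the kernel, the inclusion $H^1_0(\cT)\subset\ker\gamma^\cT_\Omega$ is immediate: if $f\in H^1_c(\cT)$ then $g_N\equiv0$ for large $N$, hence $\gamma^\cT_\Omega f=0$, and one passes to the closure by boundedness. The reverse inclusion is what I expect to be the \emph{main obstacle}. I would use the $H^1$-orthogonal splitting $H^1(\cT)=H^1_0(\cT)\oplus\cH$, where $\cH$ is the space of energy minimizers, and prove that $\gamma^\cT_\Omega$ is injective on $\cH$. The cleanest route is to construct an explicit extension that is a right inverse (next paragraph): it provides a two-sided energy estimate $\|h\|\asymp\|\gamma^\cT_\Omega h\|$ on $\cH$, whence $\ker\gamma^\cT_\Omega\cap\cH=\{0\}$; combined with the easy inclusion this gives $\ker\gamma^\cT_\Omega=H^1_0(\cT)$ (the space written $H^1_0(\Omega)$ in the statement is of course $H^1_0(\cT)$).

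Finally, for the image when $\sigma d<1/2$, given $g\in H^{\sigma d}(\Omega)$ I would set $\beta_{N,K}:=|\Omega_{N,K}|^{-1}\int_{\Omega_{N,K}}g$ and let $f$ be the function on $\cT$ that is affine on every edge with $f(X_{N,K})=\beta_{N,K}$. Running the per-edge computation in reverse,
\[
\|f'\|_{L^2(\cT)}^2\asymp\sum_{N\ge1}(\alpha p/\ell)^N\,\|E_N g-E_{N-1}g\|_{L^2(\Omega)}^2=\sum_{N\ge1}p^{2\sigma N}\,\|E_N g-E_{N-1}g\|_{L^2(\Omega)}^2 ,
\]
and the multiscale characterization with $s=\sigma d<1/2$ identifies the right-hand side with $\|g\|_{H^{\sigma d}(\Omega)}^2$. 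Thus $f\in H^1(\cT)$ with $\gamma^\cT_\Omega f=g$, giving $\gamma^\cT_\Omega\big(H^1(\cT)\big)=H^{\sigma d}(\Omega)$ and also the norm equivalence on $\cH$ used above. The two technical inputs I anticipate needing the most care are the multiscale characterization of $H^s(\Omega)$ for $0\le s<1/2$ (this is precisely where the \emph{regular strongly balanced} hypothesis and the bounded geometry of the ambient manifold enter, controlling $|\Omega_{N,K}|\asymp p^{-N}|\Omega|$ and the diameters $\diam\Omega_{N,K}\asymp p^{-N/d}$) and the justification that $g_N$ and $E_N(\gamma^\cT_\Omega f)$ differ by a quantity summable in the weighted norm.
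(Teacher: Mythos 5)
Your route is genuinely different from the paper's: you work with direct telescoping estimates on the approximants $g_N$, whereas the paper first decomposes $H^1(\T)$ into a direct sum of one-dimensional problems via the symmetries of the geometric tree (Theorem \ref{thm-expansion1}), builds an explicit orthonormal basis $\{U_zF_z\}$ of the orthogonal complement of $\Tilde H^1_0(\T)$, and only then identifies the resulting $\ell^2_\sigma(\cZ)$-trace with the approximation space $A^{\sigma d}(\Omega)$. Your increment bound $\|g_{N+1}-g_N\|^2_{L^2(\Omega)}\le Cp^{-2\sigma(N+1)}\sum_k\int_{e_{N+1,k}}|f'|^2\dd\mu$ is correct, the $L^2$- and $H^s$-convergence arguments (with a Young-convolution treatment of the correction $E_N(\gamma^\cT_\Omega f)-g_N$) can be made to work, and the surjectivity construction by edge-affine interpolation of cell averages is essentially the paper's $I_\Omega$ in disguise. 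Two caveats: the multiscale characterization of $H^s(\Omega)$ that you invoke is not off-the-shelf for general regular strongly balanced decompositions of open sets in manifolds — it is the content of Section \ref{secappr}, where (A5)--(A6) and the chart-transfer lemmas do real work — and your dichotomy argument in the regime $\alpha p\ge 1/\ell$ (where $\one\notin L^2(\cT)$, so the constant function tells you nothing) is only sketched.

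The genuine gap is the inclusion $\ker\gamma^\cT_\Omega\subset H^1_0(\cT)$. A bounded right inverse $E$ with $\gamma^\cT_\Omega E=\mathrm{id}$ gives surjectivity and exhibits $\ran E$ as a closed complement of $\ker\gamma^\cT_\Omega$ with $\ran E\cap H^1_0(\cT)=\{0\}$, but this does not force $\ker\gamma^\cT_\Omega=H^1_0(\cT)$: in infinite dimensions one can have $H^1_0\subsetneq\ker\gamma$ while $\ker\gamma$ still admits a complement meeting $H^1_0$ trivially (consider $\ell^2\oplus\ell^2\oplus\ell^2\to\ell^2$, $(a,b,c)\mapsto c$, with the first summand playing the role of $H^1_0$). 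What you actually need is injectivity of the trace on the harmonic complement $\cH$, i.e.\ that a finite-energy harmonic function with vanishing trace is zero; and your $E$ does not satisfy $E\gamma^\cT_\Omega h=h$ for $h\in\cH$, because the cell average of $\gamma^\cT_\Omega h$ over $\Omega_{N,K}$ is the limit of averages of $h$ over the boundary of the subtree $\cT_{N,K}$, not $h(X_{N,K})$ — so the claimed two-sided bound on $\cH$ does not follow from the right inverse. In the paper this is precisely where the explicit basis is indispensable: $\tau$ acts diagonally on $\{\phi_z\}$ with the explicitly computed nonzero coefficients $p^{-\nu(z)/2}F^\infty_{\nu(z)}$ (Lemma \ref{prop-unique}), so $\ker\tau=\ker\Tilde P_\Delta=\Tilde H^1_0(\T)$ by construction. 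You would need either this diagonalization or an independent uniqueness theorem for the Dirichlet problem at infinity to close this step.
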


\begin{remark}
If $\Omega$ is a Euclidean open set, then one can show that the linear map $\gamma^\cT_\Omega$ given by the same expression is bounded and surjective
as a map from $H^1(\cT)$ to $A^{\sigma d}(\Omega)$ for \emph{any} value of $\sigma$,
where $A^{\sigma d}(\Omega)$ is a so-called approximation space (which happens to coincide with $H^{\sigma d}$ if $\sigma d<\frac{1}{2}$): we refer to Subsection \ref{sec43} for more detailed formulations. This settles the open question \cite[Sec. 5, Question 2]{nisem} about the range of the embedded trace operator for our class of metric graphs, even for
a more general geometric trace realization.

For $d=1$ our result is very close to the construction of the bounded trace operator in \cite[Thm. 5.4.13]{semin} and \cite[Sec. 3.1--3.2]{jks}, but even in this case our result is stronger (for the class of trees we consider) as we show its surjectivity for a range of parameters.
\end{remark}

A large part of the paper is devoted to the proof of the assertions of Theorem \ref{thm11} for the case $c=1$ in \eqref{cc}. For this special case, the tree $\cT$ will be denoted by $\T$ and called \emph{geometric tree} following the convention proposed in \cite{jks}. The advantage of the geometric tree is that it allows for a decomposition
into a direct sum of one-dimensional problems, and trace theorems in one dimension are much simpler to study. Such a decomposition
is well-known \cite{nasol,pwz}, but we need a number of explicit formulas
for various intermediate transformation operators, which are missing in the existing literature, so we opted for a self-contained
presentation  in Section \ref{ttt}. This part of analysis
is concluded by constructing an abstract trace operator in Subsection \ref{sec-abstrace}, which maps $H^1(\T)$ into a discrete $\ell^2$-type space inherited from the direct sum decomposition.

In Section \ref{secappr} we introduce
approximation spaces $A^r(\Omega)$, which consist of the functions defined on $\Omega$ that can be ``well approximated'' by linear combinations of indicator functions of some subsets of $\Omega$. In Subsection \ref{sec31} we recall the most important constructions for fractional Sobolev spaces which are used in the analysis. In Subsection \ref{ssmult} we introduce special decompositions of Euclidean domains and define the associated approximation spaces. In Subsection \ref{sec33} we show that
in some important cases the approximation spaces coincide with the usual fractional Sobolev spaces. The constructions of Subsections \ref{ssmult} and \ref{sec33} are an adaptation
of the respective $2$-adic spaces in \cite{msv}, which were in turn motivated by more general considerations coming from the wavelet analysis \cite{cohen,meyer}. In Subsection \ref{sec44} we transfer these constructions to the case of open sets on manifolds
using the traditional approach with local charts. We note that Sections \ref{ttt}
and \ref{secappr} are independent from each other. They also contain a lot of introductory material and we hope that they can be of independent interest
beyond the immediate scope of the present work.

In Section \ref{sec4} we make last steps in the construction of the embedded trace operator.
First, in Subsection \ref{sec41} we identify the $\ell^2$-space from the construction of the abstract trace operator with the approximation spaces $A^r(\Omega)$ using an identification of suitable bases. The embedded trace operator is then obtained as the superposition of this identification with the abstract trace operator, and the resulting properties are summarized
in Subsection \ref{sec42}. At this point, all assertions of Theorem \ref{thm11} are proved for the geometric tree $\T$, and in Subsection \ref{sec43} we transfer them to the general $\cT$ using a coordinate change.

All preceding results require the existence of decompositions of open sets or manifolds into pieces with special properties; it seems that these questions
were not addressed in sufficient generality in earlier works. In the last Section \ref{sec52} we show that such decompositions exist for large classes of $\Omega$, in particular, for all convex polyhedrons, all convex smooth domains and all compact manifolds. This is done by adapting the existing results from  very diverse areas of analysis to the context of multiscale decompositions.

We consider the present work as an initial component for the systematic analysis
of boundary value and transmission problems on infinite metric graphs, which will be continued in several directions. A key role in our analysis
is played by the decomposition of trees into a direct sum of one-dimensional
problems. It was noted in \cite{breuer} that such decomposition actually exists
for a much larger class of metric graphs, so we hope that at least some elements of our analysis will be useful beyond the context of trees. The possibility of the identification of the tree boundary with a prescribed surface gives a possible
approach to describe the interaction between fractal trees touching each other along some interface and to include
fractal building blocks in the so-called hybrid spaces, \cite{bg,bgp,exner,hybrid,pry,post2}. Such applications will be covered in ongoing works.

\subsection*{Acknowledgments}
The authors thank Nadine Gro{\ss}e, Dorothee Haroske, Michael Hinz, Patrick Joly, Maryna Kachanovska, Massimo Lanza de Cristoforis, Marius Mitrea and Noema Nicolussi for useful comments on preliminary versions of the work and bibliographic hints. KN thanks the mathematics department of the University of Padua for the hospitality during the stay in November 2022. KP had several fruitful discussions on the topic of the article during
the workshop ``Spectral theory of differential operators in quantum theory"
at the Erwin Schr\"odinger International Institute for Mathematics and Physics in Vienna in November 2022, and he thanks the institute for the support provided.

\section{Analysis on geometric trees}\label{ttt}

\subsection{Tree structure and function spaces}

In this section we analyze in greater detail the ``ideal'' case $\ell_{n,k}:=\ell^n$ and $w_{n,k}=\alpha^n$ for all $(n,k)$, i.e. with $c=1$ in \eqref{cc}. The corresponding tree will be denoted $\T$ (as opposite to $\cT$ for the general case) and called a \emph{geometric tree}. The geometric trees have a lot of symmetries, which will be exploited for the analysis, and some expressions can be written in a slightly different form.

Remark that the underlying combinatorial
graph is $G:=(V,E)$, with the set of vertices $V$ and the set of edges $E$ given by
\begin{align*}
	V&:=\{o\}\cup \big\{X_{n,k}:\ n\in\N_0, \ k\in\{0,1,\dots,p^n-1\}
	\big\},\\
	E&:=\big\{e_{n,k}:\, n\in\N_0,\ k\in\{0,1,\dots,p^n-1\}\big\},\\
	e_{n,k}&:=\begin{cases}
		(0,X_{0,0}), & n=0,\\
		(X_{n-1, [\log_p k]},X_{n,k}), & n\ge 0,
	\end{cases}
\end{align*}
where $[t]$ stands for the integer part of $t\in[0,\infty)$, i.e. the largest integer not exceeding $t$. Remark that each edge $e_{n,k}$ connects each $X_{n,k}$ with its uniquely defined parent, which is $X_{n-1,[\log_p k]}$ for $n\ge 1$ and $o$ for $n=0$.
%
%
All vertices except the root have the degree $p+1$ (i.e. have $p+1$ neighbors: $p$ children and $1$ parent), and the degree of the root is~1. For $p=1$ the graph $G$ is simply a half-infinite chain, so we assume from now on that $p\ge 2$.

Consider the numbers
\begin{equation}
t_{-1}:=0, \quad
t_n=\sum_{k=0}^{n}\ell^k, \quad n\in\N_0,
\quad
L:=\lim_{n\to\infty} t_n\in (0,\infty].
\label{tnl}
\end{equation}
By construction, the numbers $t_{-1}<t_0<t_1<\dots$ subdivide $(0,L)$ into the infinitely many intervals $(t_{n-1},t_n)$ of length $\ell^n$, $n\in \N_0$ (Figure \ref{fig3}).
The combinatorial tree $G$ is related to the metric tree $\T$ as follows:
we identify each $e_{n,k}$ with a copy of $[t_{n-1},t_n]$ using the convention that the endpoint of $e_{n,k}$ is identified with the initial point of each of its children.
In other words, 
\[
\T:=\big\{ \big((n,k), t\big):\ n\in\N_0,\  k\in\{0,\dots,p^n-1\},\  t\in[t_{n-1},t_n]  \big\}/\sim
\]
for the identification $\sim$ defined by
\begin{gather*}
\big((n,k), t_n\big)\sim \big((n+1,pk+j),t_n\big),\\
\text{$n\in\N_0$, $k\in\{0,\dots,p^n-1\}$, $j\in\{0,\dots,p-1\}$}.
\end{gather*}

\begin{figure}[h]
	\centering
		\begin{tikzpicture}[font=\scriptsize,edge from parent/.style={draw}, scale=1]
		
		\tikzstyle{solid node}=[circle,draw,inner sep=1.2,fill=black];
		\tikzstyle{hollow node}=[circle,draw,inner sep=1.2];
		\tikzstyle{level 1}=[level distance=10mm,sibling distance=50mm]
		\tikzstyle{level 2}=[level distance=8mm,sibling distance=25mm]
		\tikzstyle{level 3}=[level distance=12mm,sibling distance=18mm]
		\tikzstyle{level 4}=[level distance=14mm,sibling distance=14mm]
		\node(0)[above]{$o$}
		child{node[above]{$X_{00}$}
			child{node[left]{$X_{1,0}$}
				child{node[below]{$X_{2,0}$}
					child{node[below]{} edge from parent[dotted]}
					child{node[below]{} edge from parent[dotted]}	
					edge from parent }
				child{node[below]{$X_{2,p-1}$}
					child{node[below]{} edge from parent[dotted]}
					child{node[below]{} edge from parent[dotted]}	
					edge from parent }
				edge from parent 
			}
			child{node[right]{$X_{1,p-1}$}
				child{node[below]{$X_{2,p(p-1)}$}
					child{node[below]{} edge from parent[dotted]}
					child{node[below]{} edge from parent[dotted]}	
					edge from parent }
				child{node[below]{$X_{2,p^2-1}$}
					child{node[below]{} edge from parent[dotted]}
					child{node[below]{} edge from parent[dotted]}	
					edge from parent }
				edge from parent 
			}
			edge from parent 
		};
		\node at (0,0) {};
		\node at (1,0) {{\large $\T$}};
		\node at (0,-1.4) {$\dots$};
		\node at (-1.7,-2.4) {$\dots$};
		\node at (1.9,-2.4) {$\dots$};
		\node at (2.75,-4) {$\dots$};
		\node at (1,-4) {$\dots$};
		\node at (-0.8,-4) {$\dots$};
		\node at (-2.6,-4) {$\dots$};
		
		\draw [thick, |->] (4,0) -- (4,-4.5);
		\node [thick] at (3.7,0) {0};
		\node [thick] at (4,-0.7) {--};
		\node [thick] at (3.5,-0.7) {$1=t_0$};
		\node [thick] at (4,-1.5) {--};
		\node [thick] at (3.35,-1.5) {$1+\ell=t_1$};
		\node [thick] at (4,-2.9) {--};
		\node [thick] at (3.7,-2.9) {$t_2$};
		\draw [decorate, decoration = {calligraphic brace}] (4.2,-0.1) -- (4.2,-0.6);
		\node [scale=0.8] at (5,-0.35) {length = 1};
		\draw [decorate, decoration = {calligraphic brace}] (4.2,-0.8) -- (4.2,-1.4);
		\node [scale=0.8] at (5,-1.15) {length = $\ell$};
		\draw [decorate, decoration = {calligraphic brace}] (4.2,-1.6) -- (4.2,-2.8);
		\node [scale=0.8] at (5,-2.2) {length = $\ell^2$};
		\node at (4,-4.7) {$L$};
		
		\node  at (-0.7, -0.15) {weight$=1$};
		\node  at (-1, -0.8) {$\alpha$};
		\node  at (1, -0.8) {$\alpha$};
		\node  at (1.2, -2) {$\alpha^2$};
		\node  at (2.55, -2) {$\alpha^2$};
		\node  at (-1, -2) {$\alpha^2$};
		\node  at (-2.35, -2) {$\alpha^2$};
	\end{tikzpicture}
\caption{The structure of a geometric tree}\label{fig3}
\end{figure}

For what follows for $x,y\in\T$ we write
\begin{itemize}
	\item $x\le y$ if the path from $o$ to $y$ passes through $x$ (equivalently one can say that $y$ belongs to the offspring of $x$),
	\item $x<y$ if $x\le y$ and $x\ne y$.
\end{itemize}

We will also need to consider some special subgraphs of $\T$. For $n\in\N$ denote
\[
\T^n:= \text{ the tree truncated after the $n$th generation,}
\]
i.e. $\T^n$ is composed of all edges $e_{m,k}$ with $m\le n$.
For $n\in \N$, $k\in\{0,\dots,p^{n}-1\}$, consider
\[
\T_{n,k}:=\{x\in\T:\ X_{n,k}\le x\}.
\]
Remark that 
\[
\T_{n,k}=\bigcup_{j=0}^{p-1} \T^j_{n,k}, \qquad
\T^j_{n,k}:=e_{n+1,pk+j}\cup \T_{n+1,pk+j},
\]
see Figure \ref{fig5}.
By construction, each $\T^j_{n,k}$ is a rooted metric tree (having the same combinatorial structure as $\T$ itself) with $X_{n,k}$ being the root. The vertices of $\T^j_{n,k}$ are $X_{n,k}$ and $X_{n+m,k p^m +jp^{m-1}+r}$
with $m\in \N$ and $r\in\{0,\dots,p^{m-1}-1\big\}$.

\begin{figure}[h]
\centering

	\begin{tikzpicture}[font=\footnotesize,edge from parent/.style={draw,thick}]
	
	\tikzstyle{solid node}=[circle,draw,inner sep=1.2,fill=black];
	\tikzstyle{hollow node}=[circle,draw,inner sep=1.2];
	\tikzstyle{level 1}=[level distance=8mm,sibling distance=20mm]
	\tikzstyle{level 2}=[level distance=8mm,sibling distance=30mm]
	\tikzstyle{level 3}=[level distance=12mm,sibling distance=18mm]
	\node(0)[above]{}
	child{node[left]{$X_{n,k}$}
		child{node[below]{$X_{n+1,kp}$}
			child{node[below]{} edge from parent[dotted]}
			child{node[below]{} edge from parent[dotted]}	
			edge from parent[solid]}
		child{node[below]{$X_{n+1,kp+(p-1)}$}
			child{node[below]{} edge from parent[dotted]}
			child{node[below]{} edge from parent[dotted]}	
			edge from parent[solid] }
		edge from parent[dashed]
	}
	child{node[right]{} edge from parent[dashed]
	};
	\node at (0,0) {};
	\node at (-1.5,-1.55) {$\dots$};
	\node at (-1.5,-3.3) {$\dots$};
	\draw[decorate,decoration={brace,amplitude=8pt}] (-2.1,-3.1) -- (-3.8,-3.1) node[midway, below,yshift=-10pt,]{$\T^0_{n,k}$};
	\draw[decorate,decoration={brace,amplitude=8pt}] (0.9,-3.1) -- (-0.8,-3.1) node[midway, below,yshift=-10pt,]{$\T^{p-1}_{n,k}$};
	\draw[decorate,decoration={brace,amplitude=6pt}] (-3.9,-3.0) -- (-3.9,-0.5) node[pos=0.5, left=0.2cm]{$\T_{n,k}$};
\end{tikzpicture}
\caption{The subtrees $\T_{n,k}$ and $\T^j_{n,k}$}\label{fig5}
\end{figure}

The set $\T$ becomes a metric space if one considers the natural distance $\rho$,
\begin{align*}
\rho(x,y)&:=\text{the length of the unique path between $x,y\in \T$,}\\
|x|&:=\rho(x,o) \text{ for } x\in \T,
\end{align*}
which gives rise to the notion of a continuous function on $\T$.
The number $L$ in \eqref{tnl} is usually referred to as the height of $\T$,
and
\[
\fbox{\begin{minipage}{100mm}for all subsequent constructions we assume
\[
L<\infty \qquad \text{ or, equivalently,} \qquad \ell<1.
\]
\end{minipage}
}
\]

We consider the measure $\mu$ on $\T$ which coincides with $\alpha^n\dd t$ along $e_{n,k}$, where $\dd t$ is the one-dimensional Lebesgue measure and $\alpha>0$ is a fixed constant. A function $f:\T\to \C$ is measurable if 
each of its components
\[
f_{n,k}: [t_{n-1},t_n]\ni t\mapsto f\big((n,k),t\big)\in\C, \qquad f_{n,k}:=f|_{e_{n,k}} \text{ for short,}
\]
is measurable; in most cases we will identify $f$ with the set of its components $(f_{n,k})$.
The integral of such $f$ over $\T$ with respect to $\mu$ is then given by
\[
\int_\T f \dd \mu:=\sum_{n=0}^\infty\sum_{k=0}^{p^n-1}\alpha^n\int_{t_{n-1}}^{t_n} f_{n,k}(t)\dd t.
\]

The above integration gives rise to the naturally defined space $L^2(\T)$:
\begin{align*}
	L^2(\T)&:=\big\{f:\T\to \C \text{ measurable}:\ \|f\|^2_{L^2(\T)}:=\int_\T |f|^2\dd\mu<\infty\big\},\\
	\int_\T |f|^2\dd\mu&:=\sum_{n=0}^\infty\sum_{k=0}^{p^n-1} \alpha^n\|f_{n,k}\|^2_{L^2(e_{n,k})},
		\quad
		\|f_{n,k}\|^2_{L^2(e_{n,k})}:=\int_{t_{n-1}}^{t_n} \big|f_{n,k}(t)\big|^2\dd t.
\end{align*}
In addition we consider the Sobolev-type space $H^1(\T)$ defined by
\begin{align*}
	H^1(\T):=\big\{f\in L^2(\T):\ &f_{n,k}\in H^1(t_{n-1},t_n) \text{ for any $(n,k)$,}\\
	&f':=(f'_{n,k})\in L^2(\T) \text{ and $f$ is continuous on $\T$}\big\}.
\end{align*}
Recall that $H^1(t_{n-1},t_n)\subset C^0\big([t_{n-1},t_n]\big)$ due to Sobolev embedding theorem, so the continuity
of $f$ on $\T$ in the above definition of $H^1(\T)$ actually means the continuity at the vertices,
\begin{equation*}
	f_{n,k}(t_n)=f_{n+1,pk+j}(t_n) \text{ for all $n\in\N_0$, $k\in\{0,\dots,p^n-1\}$, $j\in\{0,\dots,p-1\}$.}
\end{equation*}
We equip $H^1(\T)$ with the scalar product $\langle\cdot,\cdot\rangle_{H^1(\T)}$ defined by
\[
\langle f,g\rangle_{H^1(\T)}:=
\langle f,g\rangle_{L^2(\T)}
+
\langle f',g'\rangle_{L^2(\T)}
\]
and the induced norm $\|\cdot\|_{H^1(\T)}$, then one easily checks that  $H^1(\T)$
 becomes a Hilbert space. The following result is known from \cite[Section 3.5]{jks}, and it will be important below:

\begin{lemma}\label{prop-compact}
	For any $ p\in\N$, $\ell\in(0,1)$, $\alpha>0$ the embedding $H^1(\T) \hookrightarrow L^2(\T)$ is compact.
\end{lemma}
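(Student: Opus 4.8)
The plan is to use the classical two-step scheme: compactness on finite truncations plus a uniform control of the tail. Write $B$ for the closed unit ball of $H^1(\T)$; it suffices to show that $B$ is totally bounded in $L^2(\T)$. For fixed $n$ the restriction $f\mapsto f|_{\T^n}$ maps $B$ into a bounded subset of $H^1(\T^n)$, and since $\T^n$ consists of finitely many edges of finite length, the embedding $H^1(\T^n)\hookrightarrow L^2(\T^n)$ is compact (on each edge it is the one-dimensional Rellich embedding $H^1(t_{m-1},t_m)\hookrightarrow L^2(t_{m-1},t_m)$, the continuity at the vertices only selects a closed subspace, and a finite sum of compact maps is compact). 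Hence $B|_{\T^n}$ is relatively compact in $L^2(\T^n)$, and the whole statement reduces to the uniform tail estimate
\[
\varepsilon_n:=\sup_{f\in B}\ \int_{\T\setminus\T^n}|f|^2\dd\mu\ \xrightarrow[n\to\infty]{}\ 0;
\]
indeed, extending a finite $\eta$-net of $B|_{\T^n}$ by zero then produces a finite $O(\eta+\sqrt{\varepsilon_n})$-net of $B$ in $L^2(\T)$.

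The mechanism behind the tail estimate is the shrinking of the subtrees: every $\T_{n,k}$ has height at most $L_n:=\sum_{i>n}\ell^{i}=\ell^{n+1}/(1-\ell)\to0$. Writing, for $x\in\T_{n,k}$, $f(x)=f(X_{n,k})+\int_{X_{n,k}}^{x}f'\dd s$ and using Cauchy--Schwarz gives $|f(x)|^2\le 2|f(X_{n,k})|^2+2L_n\int_{\mathrm{path}}|f'|^2\dd s$. Integrating this against $\mu$, applying Fubini to the energy term, and controlling the vertex values $f(X_{n,k})$ by the elementary trace inequality on the edge $e_{n,k}$, one obtains --- provided the total mass $\mu(\T)=\sum_m (p\alpha\ell)^m$ is finite --- an estimate of the form $A_n\le C\,a_n+\delta_n$, where $A_n:=\int_{\T\setminus\T^n}|f|^2\dd\mu$, $a_n:=\int_{\T^n\setminus\T^{n-1}}|f|^2\dd\mu$ is the mass on the $n$-th generation, $C$ is a fixed constant, and $\delta_n\to0$ uniformly on $B$ because it carries the vanishing factors $L_n$ and $\ell^{2n}$. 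The constant $C$ is in general larger than $1$, so this is not yet a contraction; the trick is to write $a_n=A_{n-1}-A_n$, whence $(1+C)A_n\le C\,A_{n-1}+\delta_n$, i.e. $A_n\le q\,A_{n-1}+\delta_n'$ with $q=C/(1+C)<1$. Iterating this recursion from $A_0\le1$ yields $\varepsilon_n\to0$.

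The main obstacle is the complementary regime $p\alpha\ell\ge1$, in which $\mu(\T)=\infty$: the constant functions leave $L^2(\T)$, the measures $\mu(\T_{n,k})$ are infinite, and one can no longer peel off the vertex value $f(X_{n,k})$ as above. The point that rescues compactness is that finiteness of $\int_\T|f|^2\dd\mu$ now forces $f$ to decay towards the boundary of $\T$ (a non-vanishing boundary trace would create infinite mass near the tips), so the correct device is a Hardy-type inequality read from the boundary inwards rather than a Poincaré inequality read from the root. It is cleanest to recast the whole problem spectrally: the embedding is compact iff the Neumann Laplacian on $\T$ has purely discrete spectrum, and by a Persson-type argument it suffices that the bottom of the essential spectrum be $+\infty$. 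Any $f$ that vanishes on $\T^n$ vanishes at every $X_{n,k}$, and on the subtree $\T_{n,k}$ --- an exact copy of $\T$ rescaled by the length factor $\ell^{n}$ and the weight factor $\alpha^{n}$ --- the Rayleigh quotient equals $\ell^{-2n}$ times the corresponding quotient on $\T$. Thus the bottom of the essential spectrum is at least $\ell^{-2n}\lambda$, with
\[
\lambda:=\inf\Big\{\ \|f'\|_{L^2(\T)}^2\big/\|f\|_{L^2(\T)}^2\ :\ f\in H^1(\T),\ f(o)=0\ \Big\},
\]
and letting $n\to\infty$ gives $+\infty$ as soon as $\lambda>0$.

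Everything therefore comes down to the single scale-invariant inequality $\lambda>0$, that is $\|f\|_{L^2(\T)}^2\le C\,\|f'\|_{L^2(\T)}^2$ for $f\in H^1(\T)$ with $f(o)=0$. When $\mu(\T)<\infty$ this is exactly the Fubini bound from the second paragraph, with $C=L\,\sup_{s\in\T}\mu(\T_{\ge s})/w(s)<\infty$. When $\mu(\T)=\infty$ this supremum diverges and the inequality has to be drawn from the forced decay of $f$; I expect this to be the delicate step, since the boundary of $\T$ is a Cantor-type set and the branching must be accounted for when upgrading the one-dimensional weighted Hardy inequality along a single path to the whole tree. I would establish it either by a self-similar recursion for the optimal constant (again turning a non-contractive bound into a contraction through $a_n=A_{n-1}-A_n$), or, exploiting the symmetry of the geometric tree, by reducing to the radial one-dimensional weighted problem on $(0,L)$, where $p\alpha\ell\ge1$ implies $p\alpha>\ell$ and the weighted Hardy inequality is classical.
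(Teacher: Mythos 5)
The paper does not actually prove this lemma: it is quoted from \cite[Section 3.5]{jks}, so there is no internal argument to compare yours against. Judged on its own merits, your architecture (Rellich on the finite truncations $\T^n$ plus a uniform tail estimate on the unit ball) is sound, and your treatment of the regime $p\alpha\ell<1$ is complete and correct -- in particular the device of converting the non-contractive bound $A_n\le C a_n+\delta_n$ into a contraction via $a_n=A_{n-1}-A_n$ is valid.

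The one point that needs care is the regime $p\alpha\ell\ge 1$, where you reduce everything to $\lambda>0$, i.e.\ to the Poincar\'e inequality on $\{f\in H^1(\T):f(o)=0\}$. Be aware that in this paper that inequality is Lemma~\ref{lem-poincare}, whose proof \emph{uses} the compactness you are trying to establish, so you cannot import it; you must prove it independently, as you suspect. Your second proposed route does work, but ``reducing to the radial problem'' is not enough by itself: you need the full orthogonal decomposition of Subsection~\ref{sec-fsym} (Theorem~\ref{thm-expansion1} and Corollary~\ref{corol-l3}, which are proved without compactness), under which $\Tilde H^1(\T)$ splits into the radial component on $(0,L)$ and the components on $(t_n,L)$ with weight $p^{-n}q$ and a Dirichlet condition at $t_n$; the Poincar\'e constant of the direct sum is the supremum of the one-dimensional constants, and by the scaling $t\mapsto t_n+\ell^{n+1}s$ these are all dominated by the $n=0$ case. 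For that one-dimensional problem your ``Hardy read from the boundary'' mechanism closes the gap: since $p\alpha\ell\ge1$ forces $\int_r^Lq=\infty$ for every $r$, any $F\in L^2(q)$ satisfies $\liminf_{t\to L}|F(t)|=0$, while $\alpha p>\ell$ gives $\int_0^Lq^{-1}<\infty$, hence $F'\in L^1$ and $F(L^-)=0$; then
\begin{equation*}
|F(t)|^2=\Big|\int_t^LF'\Big|^2\le\int_t^L|F'|^2q\cdot\int_t^Lq^{-1},
\qquad
\int_0^Lq(t)\int_t^Lq^{-1}\dd t\le\frac{1}{1-\frac{\ell}{\alpha p}}\sum_{n\ge0}\ell^{2n}<\infty,
\end{equation*}
which is the required inequality. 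So the proof can be completed along your lines, but as written the crucial step is a plan rather than a proof, and the phrase ``the weighted Hardy inequality is classical'' hides exactly the two facts (forced vanishing at $L$, and the Muckenhoupt-type sum above) that have to be checked. Two further minor imprecisions: $\T_{n,k}$ is $p$ rescaled copies of $\T$ joined at $X_{n,k}$ rather than a single copy (harmless, since a function vanishing at the joint splits), and the Persson-type identification of $\inf\sigma_{\mathrm{ess}}$ on a metric tree deserves at least a reference or an IMS-localization argument using the local compactness from your first step.
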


%


Now we denote
\begin{align*}
H^1_c(\T)&:=\big\{
f=(f_{n,k})\in H^1(\T):\, \text{ $\exists N\in \N$ such that $f_{n,k}\equiv 0$ for $n>N$}\big\},\\
H^1_0(\T)&:=\text{the closure of $H^1_c(\T)$ in $H^1(\T)$.}
\end{align*}


\subsection{Orthogonal decompositions}\label{sec-fsym}

All constructions in this section are essentially from the paper \cite{nasol}, but we need several intermediate objects which did not appear there explicitly, so we prefer to give a complete argument.

We now introduce several subspaces of $L^2(\T)$ determined through additional invariance properties.
We start with the space of radial functions,
\begin{align*}
	L^2_\rad(\T) := \big\{f\in L^2(\T):
		\text{ for any $x,y\in \T$ with $|x|=|y|$ one has $f(x)=f(y)$}\big\},
\end{align*}
which will be considered with the induced scalar product.
Remark that $f\in L^2_\rad(\T)$ means the existence of a function $F:(0,L)\to \C$ with $f(x)=F(|x|)$ for all $x\in \T$, which means that the components $(f_{n,k})$ satisfy $f_{n,k}(t)=F(t)$ for all $e_{n,k}\in E$ and $t\in(t_{n-1},t_n)$.
This yields
\begin{align*}
\|f\|^2_{L^2(\T)}&=\sum_{n=0}^\infty\alpha^n\sum_{k=0}^{p^n-1} \int_{t_{n-1}}^{t_n}\big|f_{n,k}(t)\big|^2\dd t=\sum_{n=0}^\infty\alpha^n\sum_{k=0}^{p^n-1} \int_{t_{n-1}}^{t_n}\big|F(t)\big|^2\dd t\\
&=\sum_{n=0}^\infty\alpha^n p^n \int_{t_{n-1}}^{t_n}\big|F(t)\big|^2\dd t=\int_0^L \big|F(t)\big|^2\, q(t)\dd t
\end{align*}
for the weight function
\[
q:\ (0,L)\mapsto (0,\infty), \quad
q(t)=(\alpha p)^n \text{ for }t\in (t_{n-1},t_n).
\]
The above computation shows that the map
\[
	U_\rad:\ L^2\big((0,L), q(t)\dd t\big) \to L^2_\rad (\T), \qquad
	U_\rad F:\  \T\ni x\mapsto F(|x|),
\]
is a unitary operator.

Furthermore, consider the roots of unity:
\[
\theta_s:=e^{\frac{2\pi i}{p}s}, \quad s\in\{0,\dots,p-1\},
\]
and define, for $e_{n,k} \in E$ and $s\in\{1,\dots,p-1\}$,
\begin{align*}
	L^2_{n,k,s}(\T)&:= \big\{f \in L^2(\T):\ f|_{\T \setminus \T_{n,k}} = 0 \text{ and for any $j,j'\in\{0,\dots,p-1\}$ and}\\
	&\text{$x\in \T_{n,k}^j$ and $y\in \T_{n,k}^{j'}$ with $|x|=|y|$ one has $\theta_s^{-j} f(x)= \theta_s^{-j'}f(y)$}\big\},
\end{align*} 
Observe that each function $f\in L^2_{n,k,s}(\T)$ is radial on each subtree $\T_{n,k}^j$, which means that $f(x)$ only depends on the distance $\rho(x,X_{n,k})$ for all $x\in \T_{n,k}^j$.
More precisely, for some functions $F_j:(t_n,L)\to \C$, $j\in\{0,\dots,p-1\}$, there holds $f(x)=F_j\big(|x|\big)$ for all $x\in \T_{n,k}^j$, and, in addition, $F_j=\theta_s^j F_0$ for each $j\in\{0,\dots,p-1\}$, so $f$ is uniquely determined by $F:=F_0$.
By construction we have
\begin{align*}
	&\int_{\T_{n,k}^j} |f|^2\dd\mu= \sum_{m=1}^\infty \alpha^{n+m}
	\sum_{r=0}^{p^{m-1}-1} \int_{t_{n+m-1}}^{t_{n+m}} \big|f_{n+m,k p^{m}+jp^{m-1}+r}(t)\big|^2\dd t\\
	&=\sum_{m=1}^\infty \alpha^{n+m}
	\sum_{r=0}^{p^{m-1}-1} \int_{t_{n+m-1}}^{t_{n+m}} \big|F_j(t)\big|^2\dd t
	=\sum_{m=1}^\infty \alpha^{n+m} p^{m-1} \int_{t_{n+m-1}}^{t_{n+m}} \big|F_j(t)\big|^2\dd t\\
	&=p^{-n-1}\sum_{m=n+1}^\infty (\alpha p)^{m} \int_{t_{m-1}}^{t_{m}} \big|F_j(t)\big|^2\dd t
	=p^{-n-1} \int_{t_n}^L \big|F_j(t)\big|^2\,q(t)\dd t,
\end{align*}	
and
\begin{align*}
	\int_\T |f|^2\dd\mu&=\int_{\T_{n,k}} |f|^2\dd\mu=\sum_{j=0}^{p-1}\int_{\T_{n,k}^j} |f|^2\dd\mu\\
	&= p^{-n-1}\sum_{j=0}^{p-1} \int_{t_n}^L \big|F_j(t)\big|^2\,q(t)\dd t=p^{-n-1}\sum_{j=0}^{p-1} \int_{t_n}^L \big|\theta_s^j F(t)\big|^2\,q(t)\dd t\\
	&=p^{-n-1}\sum_{j=0}^{p-1} \int_{t_n}^L \big|F(t)\big|^2\,q(t)\dd t=p^{-n}\int_{t_n}^L \big|F(t)\big|^2\,q(t)\dd t.
\end{align*}
This computation shows that the map
\begin{gather*}
U_{n,k,s}:L^2\big((t_n,L),p^{-n}q(t)\dd t\big)\to L^2_{n,k,s}(\T),\\
U_{n,k,s}F:\  x\mapsto \begin{cases}
	\theta_s^j F\big(|x|\big), & \text{if $x\in \T_{n,k}^j$ for some $j\in\{0,\dots,p-1\}$},\\
	0, & \text{otherwise}.
\end{cases}
\end{gather*}
is a unitary operator.

We further denote
\[
H^1_\rad(\T):=H^1(\T)\cap L^2_\rad(\T), \quad
H^1_{n,k,s}(\T):=H^1(\T)\cap L^2_{n,k,s}(\T),
\]
and note the obvious implications
\begin{equation}
	\label{prim-inv}
f\in H^1_\rad(\T)\,\Rightarrow\,f'\in L^2_\rad(\T),
\quad
f\in H^1_{n,k,s}(\T)\,\Rightarrow\,f'\in L^2_{n,k,s}(\T).
\end{equation}

\begin{theorem}[\cite{nasol}]
	\label{thm-expansion1}
One has the orthogonal direct sum decomposition
	\begin{equation}
		\label{expansionl2}
		L^2(\T) = L^2_\rad(\T) \oplus \bigoplus\limits_{e_{n,k}\in E}\bigoplus\limits_{s=1}^{p-1} L^2_{n,k,s}(\T). 
	\end{equation}
\end{theorem}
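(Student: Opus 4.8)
The plan is to verify the two properties that together give an orthogonal Hilbert-space direct sum: first, that the listed subspaces are pairwise orthogonal, and second, that their closed linear span exhausts $L^2(\T)$. For the latter I would show that the orthogonal complement of the sum of all these subspaces is trivial. The orthogonality is essentially a bookkeeping exercise built on the single identity $\sum_{j=0}^{p-1}\theta_s^{j}=0$, valid for $s\in\{1,\dots,p-1\}$, so the real content lies in the completeness step.

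For orthogonality I would use the explicit unitaries $U_\rad$ and $U_{n,k,s}$ to write the generic elements of the subspaces. An element of $L^2_{n,k,s}(\T)$ equals $\theta_s^{j}F(|x|)$ on $\T^j_{n,k}$ and vanishes off $\T_{n,k}$, while a radial function, or an element of another space $L^2_{n',k',s'}(\T)$ whose support meets $\T_{n,k}$, restricts on each child subtree $\T^{j}_{n,k}$ to a function of $|x|$ times a factor that is either $j$-independent (radial case) or a $\theta_{s'}$-twisted constant. The inner product over $\T_{n,k}$ then factorizes as a common radial integral $I$ multiplied by a sum $\sum_j\theta_s^{-j}$, $\sum_{j}\theta_{s-s'}^{j}$, or similar, which is a geometric sum of nontrivial $p$-th roots of unity and hence vanishes; the disjoint-support configurations ($\T_{n,k}\cap\T_{n',k'}=\emptyset$) are immediate. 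One only has to run through the finitely many index configurations (equal, nested, disjoint).

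For completeness, let $h\in L^2(\T)$ be orthogonal to $L^2_\rad(\T)$ and to every $L^2_{n,k,s}(\T)$. Testing $\langle h,U_\rad F\rangle=0$ with $F$ supported in a single annulus $(t_{m-1},t_m)$ yields the sphere-sum condition
\begin{equation*}
S_m(t):=\sum_{k=0}^{p^m-1} h_{m,k}(t)=0\quad\text{for a.e.\ } t\in(t_{m-1},t_m),\ m\in\N_0.\tag{A}
\end{equation*}
Testing $\langle h,U_{n,k,s}F\rangle=0$ and introducing the partial sphere-sums over the child subtrees, for $t\in(t_{m-1},t_m)$ with $m>n$,
\[
T^{j}_{n,k}(t):=\sum_{e_{m,k'}\subset \T^j_{n,k}} h_{m,k'}(t),\qquad j\in\{0,\dots,p-1\},
\]
one finds $\sum_{j}\theta_s^{-j}\,T^{j}_{n,k}(t)=0$ for all $s\in\{1,\dots,p-1\}$. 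Since the vectors $(\theta_s^{j})_{j=0}^{p-1}$, $s=0,\dots,p-1$, form an orthogonal basis of $\C^p$, this forces $(T^{j}_{n,k}(t))_j$ to be a multiple of $(1,\dots,1)$, i.e.\ independent of $j$.

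Finally I would run a downward recursion at each fixed radius. Fix $t\in(t_{m-1},t_m)$, write $v_k:=h_{m,k}(t)$, and for $0\le n\le m$ set $B_{n,k}:=\sum_{r=0}^{p^{m-n}-1} v_{kp^{m-n}+r}$, the total of the $v$'s over the level-$m$ descendants of $X_{n,k}$. Then $B_{n,k}=\sum_{j=0}^{p-1}B_{n+1,pk+j}$ and $T^{j}_{n,k}(t)=B_{n+1,pk+j}$, so the $j$-independence above gives $B_{n+1,pk+j}=\tfrac1p B_{n,k}$ for every $j$, while (A) gives $B_{0,0}=S_m(t)=0$. Descending the tree yields $B_{n,k}=0$ for all $n\le m$, and at $n=m$ this reads $v_k=B_{m,k}=0$; hence $h_{m,k}(t)=0$ for a.e.\ $t$ and all $m,k$, so $h=0$ and the closed span is all of $L^2(\T)$. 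I expect the only genuine difficulty to be the completeness step—namely translating the two families of orthogonality relations into the clean conditions (A) and the $j$-independence of $T^{j}_{n,k}$, while keeping the vertex/edge indexing $X_{n,k}\mapsto X_{n+1,pk+j}$ consistent—after which the recursion anchored at $B_{0,0}=0$ collapses everything.
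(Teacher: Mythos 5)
Your proof is correct. The orthogonality part coincides with the paper's argument (explicit inner products via the unitaries $U_\rad$, $U_{n,k,s}$, reduced to vanishing geometric sums of roots of unity), but your completeness step takes a genuinely different route. The paper fixes a level $m$ and a profile $h\in L^2(t_{m-1},t_m)$, forms the $p^m$-dimensional space $S$ of functions that are multiples of $h$ on the level-$m$ edges and zero elsewhere, and shows by a dimension count ($1+\sum_{n=0}^{m-1}p^n(p-1)=p^m$) that $S$ sits inside the already-orthogonal sum; since such $S$ span a dense subspace, totality follows. You instead show the annihilator is trivial: testing a function $h$ orthogonal to all summands against $U_\rad F$ and $U_{n,k,s}F$ supported in a single annulus yields the vanishing of the total sphere-sum and the $j$-independence of the partial sums $T^j_{n,k}(t)=B_{n+1,pk+j}$, and the downward recursion $B_{n+1,pk+j}=\tfrac1p B_{n,k}$ anchored at $B_{0,0}=0$ collapses everything to $h_{m,k}(t)=0$. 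Both arguments are sound; the paper's dimension count is shorter once orthogonality is in hand and avoids the pointwise a.e.\ bookkeeping, while yours is more constructive in that it exhibits exactly which linear conditions on the sphere data characterize the orthogonal complement (essentially a discrete Fourier inversion on each vertex's children), at the cost of having to intersect countably many full-measure sets of $t$ and keep the indexing $e_{m,k'}\subset\T^j_{n,k}\Leftrightarrow k'\in[(pk+j)p^{m-n-1},(pk+j+1)p^{m-n-1})$ straight.
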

\begin{proof} (A) We begin with the orthogonality. Let $f\in L^2_{n,k,s}(\T)$ for some $(n,k,s)$, then
	$f=U_{n,k,s} F$ for some $F\in L^2\big((t_n,L),p^{-n}q(t)\dd t\big)$.

	(A.1) Let $g\in L^2_\rad (\T)$, then $g=U_\rad G$ for some $G\in L^2\big((0,L),q(t)\dd t\big)$ and
\begin{align*}
	\langle f,g\rangle_{L^2(\T)}&=\sum_{j=0}^{p-1} \int_{\T^j_{n,k}} f \Bar g\dd\mu\\
	=\sum_{j=0}^{p-1} \sum_{m=1}^\infty& \alpha^{n+m}
	\sum_{r=0}^{p^{m-1}-1} \int_{t_{n+m-1}}^{t_{n+m}} f_{n+m,k p^{m}+jp^{m-1}+r}(t)\,\overline{g_{n+m,k p^{m}+jp^{m-1}+r}(t)}\dd t\\
	&= \sum_{j=0}^{p-1} \sum_{m=1}^\infty \alpha^{n+m}\sum_{r=0}^{p^{m-1}-1} \int_{t_{n+m-1}}^{t_{n+m}} \theta_s^j F(t)\,\overline{G(t)}\dd t\\
	&= \sum_{j=0}^{p-1} \sum_{m=1}^\infty \alpha^{n+m}p^{m-1}\int_{t_{n+m-1}}^{t_{n+m}} \theta_s^j F(t)\,\overline{G(t)}\dd t\\
	&= \sum_{j=0}^{p-1} p^{-n-1} \theta_s^j\sum_{m=n+1} (\alpha p)^m  \int_{t_{m-1}}^{t_{m}}  F(t)\,\overline{G(t)}\dd t\\
	&= p^{-n-1} \Big(\sum_{j=0}^{p-1} \theta_s^j\Big) \int_{t_n}^L F(t)\,\overline{G(t)} \,q(t)\dd t
\end{align*}
while
\begin{equation}
	  \label{thetas}
\sum_{j=0}^{p-1} \theta_s^j=\dfrac{1-\theta_s^p}{1-\theta_s}\equiv \dfrac{1-1}{1-\theta_s}\equiv 0, 
\end{equation}
so $\langle f,g\rangle_{L^2(\T)}=0$.

(A.2) Now let $g\in L^2_{n',k',s'}(\T)$ with $(n,k,s)\neq (n',k',s')$, then $g=U_{n',k',s'} G$
for some $G\in L^2\big((t_n,L),p^{-n}q(t)\dd t\big)$. We have three possiblities:

(A.2.1) None of $X_{n,k}\le X_{n',k'}$ and $X_{n',k'}\le X_{n,k}$ holds. In this case one has $\T_{n,k}\cap \T_{n',k'}=\emptyset$, so $f$ and $g$ have disjoint supports and $\langle f,g\rangle_{L^2(\T)}=0$.

(A.2.2) One has $X_{n,k}<X_{n',k'}$ or $X_{n,k}<X_{n',k'}$. To be definite assume that
$X_{n',k'}<X_{n,k}$, then $\T_{n,k}\subset \T^{j'}_{n',k'}$ for some $j'\in\{0,\dots,p-1\}$.
\begin{align*}
	\langle& f,g\rangle_{L^2(\T)}=\int_{\T} f\Bar g\dd\mu=\int_{\T_{n,k}} f\Bar g\dd\mu=\sum_{j=0}^{p-1}
	\int_{\T^j_{n,k}} f\Bar g\dd\mu\\
	&=\sum_{j=0}^{p-1}\sum_{m=1}^\infty \alpha^{m+n}\sum_{r=0}^{p^{m-1}-1} \int_{t_{n+m-1}}^{t_{n+m}} f_{n+m,k p^{m}+jp^{m-1}+r}(t)\,\overline{g_{n+m,k p^{m}+jp^{m-1}+r}(t)}\dd t.
\end{align*}
By assumptions in each summand we have
\begin{align*}
	f_{n+m,k p^{m}+jp^{m-1}+r}(t)&= \theta_s^j F(t),&
	g_{n+m,k p^{m}+jp^{m-1}+r}(t)&= \theta_{s'}^{j'} G(t),
\end{align*}	
so similarly to the preceding computation we obtain
\begin{align*}
\langle f,g\rangle_{L^2(\T)}&=\sum_{j=0}^{p-1} \sum_{m=1}^\infty \alpha^{m+n}\sum_{r=0}^{p^{m-1}-1} \theta_s^j\theta_{s'}^{-j'}
\sum_{r=0}^{p^{m-1}-1} \int_{t_{n+m-1}}^{t_{n+m}} F(t)\,\overline{G(t)}\dd t\\
&=\sum_{j=0}^{p-1} \sum_{m=1}^\infty \alpha^{m+n}p^{m-1} \theta_s^j\theta_{s'}^{-j'}
\int_{t_{n+m-1}}^{t_{n+m}} F(t)\,\overline{G(t)}\dd t\\
&=\sum_{j=0}^{p-1} \sum_{m=n+1}^\infty (\alpha p)^m p^{-n-1} \theta_s^j\theta_{s'}^{-j'}
\int_{t_{m-1}}^{t_{m}} F(t)\,\overline{G(t)}\dd t\\
&=p^{-n-1} \theta_{s'}^{-j'} \underbrace{\sum_{j=0}^{p-1} \theta_s^j}_{=0 \text{ by }\eqref{thetas}} \int_{t_n}^L F(t)\,\overline{G(t)} q(t)\dd t=0.
\end{align*}

(A.2.3) One has $X_{n,k}=X_{n',k'}$. In this case $s\ne s'$,
\begin{align*}
	f_{n+m,k p^{m}+jp^{m-1}+r}(t)&= \theta_s^j F(t),&
	g_{n+m,k p^{m}+jp^{m-1}+r}(t)&= \theta_{s'}^{j} G(t),
\end{align*}	
for all $m\in\N$, $j\in\{0,\dots,p-1\}$ and $r\in\{0,\dots,p^{m-1}-1\}$, 
and
\begin{align*}
	\langle& f,g\rangle_{L^2(\T)}=\int_{\T} f\Bar g\dd\mu=\int_{\T_{n,k}} f\Bar g\dd\mu=\sum_{j=0}^{p-1}
	\int_{\T^j_{n,k}} f\Bar g\dd\mu\\
	&=\sum_{j=0}^{p-1}\sum_{m=1}^\infty \alpha^{m+n}\sum_{r=0}^{p^{m-1}-1} \int_{t_{n+m-1}}^{t_{n+m}} f_{n+m,k p^{m}+jp^{m-1}+r}(t)\,\overline{g_{n+m,k p^{m}+jp^{m-1}+r}(t)}\dd t\\
	&= \sum_{j=0}^{p-1}\sum_{m=1}^\infty \alpha^{m+n}\sum_{r=0}^{p^{m-1}-1} \theta_s^j\theta_{s'}^{-j}\int_{t_{n+m-1}}^{t_{n+m}} F(t)\,\overline{G(t)}\dd t\\
	&= \sum_{j=0}^{p-1}\sum_{m=1}^\infty \alpha^{m+n}p^{m-1}\theta_s^j\theta_{s'}^{-j}\int_{t_{n+m-1}}^{t_{n+m}} F(t)\,\overline{G(t)}\dd t\\
	&=\sum_{j=0}^{p-1} \big(\frac{\theta_s}{\theta_{s'}}\big)^j \sum_{m=n+1} (\alpha p)^m p^{-n-1} \int_{t_{m-1}}^{t_m} F(t)\,\overline{G(t)}\dd t\\
	&=p^{-n-1} \sum_{j=0}^{p-1} \big(\frac{\theta_s}{\theta_{s'}}\big)^j \int_{t_n}^L F(t)\,\overline{G(t)}\,q(t)\dd t,
\end{align*}
while
\[
 \sum_{j=0}^{p-1} \big(\frac{\theta_s}{\theta_{s'}}\big)^j=\dfrac{1-\big(\frac{\theta_s}{\theta_{s'}}\big)^p}{1-\frac{\theta_s}{\theta_{s'}}}
 \equiv
 \dfrac{1-1}{1-\frac{\theta_s}{\theta_{s'}}}=0,
\]
so $\langle f,g\rangle_{L^2(\T)}=0$.

The orthogonality of the decomposition is completely proved.

(B) In order to prove the totality of the decomposition it is sufficient to show that any function supported on a single edge belongs to the right-hand side of \eqref{expansionl2}.
So let $e_{m,r}\in E$ and $h\in L^2(t_{m-1},t_m)$. Define  $v\in L^2(\T)$ by
\[
v_{n,k}:=\begin{cases}
	h, & (n,k)=(m,r),\\
	0, & \text{otherwise},
\end{cases}
\]
then we need to show that $v$ belongs to the right-hand side of \eqref{expansionl2}.

Consider the $p^m$-dimensional subspace
\begin{align*}
	S := \big\{
	u\in L^2(\T):\  & u_{n,k}\in \C h \text{ for $n=m$ and $k=0,\dots,p^m-1$, }\\
	&u_{n,k}=0 \text{ for $n\ne m$}	\big\},
\end{align*}
then by construction we have $v\in S$. In addition,
\begin{align*}
	S\cap L^2_{n,k,s}(\T)&=\{0\} \text{ for $n\ge m$ and any $(k,s)$,}\\
	\dim \big(S\cap L^2_\rad(\T)\big)&=1,\\
	\dim \big(S\cap L^2_{n,k,s}(\T)\big)&=1  \text{ for $n<m$ and any $(k,s)$.}
\end{align*}
This yields
\begin{align*}
	\dim \Big[S &\cap \Big( L^2_\rad (\T) \oplus \bigoplus_{e_{n,k}\in E}\bigoplus_{s=1}^{p-1} L^2_{n,k,s}(\T)\Big)\Big]\\
	&=\dim \big(S\cap L^2_\rad(\T)\big)
	+ \sum_{n=0}^{m-1}\sum_{k=0}^{p^{n}-1}\sum_{s=1}^{p-1} \dim \big(S\cap L^2_{n,k,s}(\T)\big)\\
	&= 1 + \sum_{n=0}^{m-1}\sum_{k=0}^{p^{n}-1}\sum_{s=1}^{p-1} 1
	= 1 +\sum_{n=0}^{m-1} p^n(p-1)=p^m=\dim S,
\end{align*}
which shows that
\[
S\subset L^2_\rad (\T) \oplus \bigoplus_{e_{n,k}\in E}\bigoplus_{s=1}^{p-1} L^2_{n,k,s}(\T),
\]
and due to $v\in S$ we arrive at the conclusion.
\end{proof}

By using the inclusions \eqref{prim-inv} we immediately obtain:
\begin{corol}
	One has the orthogonal decomposition
	\begin{align*}
		H^1(\T)= H^1_\rad (\T) \oplus \bigoplus\limits_{e_{n,k}\in E}\bigoplus\limits_{s=1}^{p-1} H^1_{n,k,s}(\T).
	\end{align*}
\end{corol}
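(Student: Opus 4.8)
The plan is to derive the $H^1$-decomposition directly from the $L^2$-decomposition of Theorem~\ref{thm-expansion1} by transporting it through the bounded inclusion $H^1(\T)\hookrightarrow L^2(\T)$, using \eqref{prim-inv} to control derivatives. Let $P_\rad$ and $P_{n,k,s}$ denote the orthogonal projections of $L^2(\T)$ onto $L^2_\rad(\T)$ and $L^2_{n,k,s}(\T)$. By Theorem~\ref{thm-expansion1} these resolve the identity, $P_\rad+\sum_{e_{n,k}\in E}\sum_{s=1}^{p-1}P_{n,k,s}=I$ on $L^2(\T)$, together with the Parseval identity $\|h\|^2_{L^2(\T)}=\|P_\rad h\|^2_{L^2(\T)}+\sum_{n,k,s}\|P_{n,k,s}h\|^2_{L^2(\T)}$ for every $h\in L^2(\T)$.

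First I would settle orthogonality, which is immediate. If $f,g$ lie in two distinct subspaces among $H^1_\rad(\T)$ and the $H^1_{n,k,s}(\T)$, then $\langle f,g\rangle_{H^1(\T)}=\langle f,g\rangle_{L^2(\T)}+\langle f',g'\rangle_{L^2(\T)}$. The first term vanishes since $f,g$ belong to distinct $L^2$-summands (Theorem~\ref{thm-expansion1}); by \eqref{prim-inv} the derivatives $f',g'$ belong to the corresponding—hence again distinct—$L^2$-summands, so the second term vanishes as well. Thus the subspaces are pairwise $H^1(\T)$-orthogonal.

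The hard part will be totality, and I would reduce it to showing that the $L^2$-projections map $H^1(\T)$ into the respective $H^1$-subspaces and commute with the edgewise derivative. Via the unitaries $U_\rad$, $U_{n,k,s}$ the projections are explicit symmetrizations: $P_\rad f$ is the radial function with profile $t\mapsto p^{-n}\sum_{k}f_{n,k}(t)$ on $(t_{n-1},t_n)$, while $P_{n,k,s}f$ has on $\T^j_{n,k}$ the profile $\theta_s^j\cdot\tfrac1p\sum_{j'}\theta_s^{-j'}f_{n+1,pk+j'}$ near $X_{n,k}$ (and analogous weighted averages deeper in $\T_{n,k}$). For $f\in H^1(\T)$ two things must be checked. (i) The averaging preserves the vertex-continuity condition: for $P_\rad$ this follows from $f_{n+1,pk+j}(t_n)=f_{n,k}(t_n)$, which makes the two one-sided profile limits agree; for $P_{n,k,s}$ the delicate point is continuity at the outer vertex $X_{n,k}$, where the function must vanish—this holds automatically because the one-sided profile limit equals $\tfrac1p f(X_{n,k})\sum_{j}\theta_s^{-j}=0$ by \eqref{thetas}. (ii) Each projection commutes with weak differentiation on every edge, so that $(P_\rad f)'=P_\rad(f')$ and $(P_{n,k,s}f)'=P_{n,k,s}(f')$, with the projections on the right acting on $f'\in L^2(\T)$. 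Granting (i)--(ii), each projection sends $H^1(\T)$ into the corresponding $H^1$-subspace and satisfies $\|Pf\|^2_{H^1(\T)}=\|Pf\|^2_{L^2(\T)}+\|P(f')\|^2_{L^2(\T)}$.

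To assemble the decomposition, fix $f\in H^1(\T)$, sum the last identity over all projections, and apply the $L^2$-Parseval identity of Theorem~\ref{thm-expansion1} to both $f$ and $f'$:
\[
\|P_\rad f\|^2_{H^1(\T)}+\sum_{n,k,s}\|P_{n,k,s}f\|^2_{H^1(\T)}=\|f\|^2_{L^2(\T)}+\|f'\|^2_{L^2(\T)}=\|f\|^2_{H^1(\T)}<\infty.
\]
Since the summands are mutually $H^1(\T)$-orthogonal, the series $P_\rad f+\sum_{n,k,s}P_{n,k,s}f$ converges in $H^1(\T)$; its $H^1$-limit coincides with its $L^2$-limit, which is $f$ by Theorem~\ref{thm-expansion1}. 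Hence $f$ lies in the orthogonal $H^1$-sum of the subspaces, which yields the claimed decomposition. I expect the only genuine work to be in verifying (i)--(ii) of the previous step—in particular the cancellation \eqref{thetas} forcing $P_{n,k,s}f$ to vanish at $X_{n,k}$—with everything else being a routine transfer of the $L^2$-statement.
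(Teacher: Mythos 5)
Your argument is correct and follows the same route the paper intends: the corollary is stated there as an immediate consequence of Theorem~\ref{thm-expansion1} and the inclusions \eqref{prim-inv}, and your proposal simply supplies the details that the paper leaves implicit (that the $L^2$-projections preserve vertex continuity, commute with the edgewise derivative, and hence restrict to $H^1$, after which Parseval for $f$ and $f'$ gives totality). The verification that $P_{n,k,s}f$ vanishes at $X_{n,k}$ via the root-of-unity cancellation \eqref{thetas} is exactly the right point to single out.
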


%


\begin{corol}\label{corol-l2}
The map
\[
U:=U_\rad \oplus \bigoplus_{e_{n,k}\in E}\bigoplus_{s=1}^{p-1} U_{n,k,s}
\]
defines a unitary operator
\begin{multline*}
L^2\big((0,L),q(t)\dd t\big)\oplus \bigoplus_{e_{n,k}\in E}\bigoplus_{s=1}^{p-1} L^2\big((t_n,L),p^{-n} q(t)\dd t\big)\\
\to
L^2_\rad (\T) \oplus \bigoplus_{e_{n,k}\in E}\bigoplus_{s=1}^{p-1} L^2_{n,k,s}(\T).
\end{multline*}
\end{corol}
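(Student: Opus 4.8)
The plan is to deduce the corollary directly from Theorem~\ref{thm-expansion1} together with the unitarity of the individual maps $U_\rad$ and $U_{n,k,s}$, which was already verified in the computations preceding that theorem. The only genuinely new content is the elementary observation that an orthogonal direct sum of unitaries, each identifying a Hilbert space with a summand of an orthogonal direct sum decomposition of the target, is again unitary; everything else is bookkeeping.

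First I would make the domain explicit: writing a generic element of
\[
\mathcal{K}:=L^2\big((0,L),q(t)\dd t\big)\oplus\bigoplus_{e_{n,k}\in E}\bigoplus_{s=1}^{p-1}L^2\big((t_n,L),p^{-n}q(t)\dd t\big)
\]
as $\Phi=\big(G,(F_{n,k,s})\big)$, so that by definition $U\Phi=U_\rad G+\sum_{n,k,s}U_{n,k,s}F_{n,k,s}$, the sum being a priori interpreted in $L^2(\T)$. To check that $U$ is well-defined and isometric, I would compute $\|U\Phi\|^2_{L^2(\T)}$ and invoke the orthogonality part~(A) of Theorem~\ref{thm-expansion1}: since $U_\rad G\in L^2_\rad(\T)$ and each $U_{n,k,s}F_{n,k,s}\in L^2_{n,k,s}(\T)$ lie in mutually orthogonal summands, the squared norm splits as $\|U_\rad G\|^2_{L^2(\T)}+\sum_{n,k,s}\|U_{n,k,s}F_{n,k,s}\|^2_{L^2(\T)}$. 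The already-established unitarity of each factor turns this into $\|G\|^2+\sum_{n,k,s}\|F_{n,k,s}\|^2=\|\Phi\|^2_{\mathcal{K}}$, which simultaneously shows that the defining series converges in $L^2(\T)$ and that $U$ is an isometry.

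For surjectivity I would use the totality part~(B) of Theorem~\ref{thm-expansion1}: any $h\in L^2(\T)$ admits the orthogonal expansion $h=h_\rad+\sum_{n,k,s}h_{n,k,s}$ with $h_\rad\in L^2_\rad(\T)$ and $h_{n,k,s}\in L^2_{n,k,s}(\T)$. Because each component map is surjective onto its summand, I set $G:=U_\rad^{-1}h_\rad$ and $F_{n,k,s}:=U_{n,k,s}^{-1}h_{n,k,s}$; the resulting family $\Phi:=\big(G,(F_{n,k,s})\big)$ satisfies $\sum_{n,k,s}\|F_{n,k,s}\|^2=\sum_{n,k,s}\|h_{n,k,s}\|^2_{L^2(\T)}<\infty$, hence lies in $\mathcal{K}$, and $U\Phi=h$. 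An isometric surjection between Hilbert spaces is unitary, which concludes the argument.

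There is no serious obstacle here, as the heavy lifting was already done in Theorem~\ref{thm-expansion1}. The only point requiring a little care is the manipulation of the infinite orthogonal sum, namely ensuring that $U$ is genuinely defined on all of $\mathcal{K}$ rather than merely on the dense subspace of finitely supported families; but this is handled automatically by the norm identity above, which shows that the partial sums of $U\Phi$ form a Cauchy sequence in $L^2(\T)$.
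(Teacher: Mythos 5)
Your proposal is correct and follows exactly the route the paper intends: the corollary is stated without proof as an immediate consequence of Theorem~\ref{thm-expansion1} (orthogonality for the isometry, totality for surjectivity) combined with the unitarity of each $U_\rad$ and $U_{n,k,s}$ established in the preceding computations. Your write-up simply makes explicit the standard bookkeeping for infinite orthogonal direct sums that the paper leaves implicit.
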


In order to have a similar transformation for $H^1$-spaces, we recall that for an interval $(a,b)\subset\R$ and a piecewise continuous weight function $v:(a,b)\to (0,\infty)$ one usually denotes
\[
H^1\big((a,b),v(t)\dd t\big):=\big\{f\in L^2\big((a,b),v(t)\dd t\big):\, f'\in L^2\big((a,b),v(t)\dd t\big)\big\}
\]
which becomes a Hilbert space if considered with the scalar product
\begin{align*}
	\langle f,g\rangle_{H^1\big((a,b),v(t)\dd t\big)}&=\langle f,g\rangle_{L^2((a,b),v(t)\dd t)}+\langle f',g'\rangle_{L^2((a,b),v(t)\dd t)}\\
	&\equiv \int_a^b \Big ( f(t)\overline{g(t)} + f'(t)\overline{g'(t)}\Big)v(t)\dd t.
\end{align*}	
In addition we make the following observations:
\begin{itemize}
\item if $F\in L^2\big((0,L),q(t)\dd t\big)$, then:
\begin{itemize}
	\item $U_\rad F$ is continuous on $\T$ if and only if $F$ is continuous on $(0,L)$,
	\item $U_\rad F\in H^1(\T)$ if and only if $F\in H^1\big((0,L),q(t)\dd t\big)$,
	\item and in this case we have 
	\[
	\|(U_\rad F)'\|^2_{L^2(\T)}=\|U_\rad (F')\|^2_{L^2(\T)}=\|F'\|^2_{L^2((0,L),q(t)\dd t)},
	\]
\end{itemize}	
	\item if $F\in L^2\big((t_n,L),p^{-n}q(t)\dd t\big)$, then:
	\begin{itemize}
	\item $U_{n,k,s} F$ is continuous on $\T$ if and only if $F$ is continuous on $[t_n,L)$ with $F(t_n)=0$,
	\item in this case $U_{n,k,s} F\in H^1(\T)$ if and only if $F\in H^1\big((t_n,L),p^{-n}q(t)\dd t\big)$,
	\item and in this case 
	\[
    \|(U_{n,k,s} F)'\|^2_{L^2(\T)}=\|U_{n,k,s} (F')\|^2_{L^2(\T)}=\|F'\|^2_{L^2((t_n,L),p^{-n}q(t)\dd t)}.
    \]
	 \end{itemize}
\end{itemize}	
Therefore, it will be convenient to denote
\[
\Tilde H^1\big((t_n,L),p^{-n} q(t)\dd t\big):=\big\{
f\in H^1\big((t_n,L),p^{-n} q(t)\dd t\big):\, f(t_n)=0
\big\},
\]
then it follows that
\begin{align*}
U_\rad: H^1\big((0,L),q(t)\dd t\big)&\to H^1_\rad(\T),\\
U_{n,k,s}:\Tilde H^1\big((t_n,L),p^{-n} q(t)\dd t\big)&\to H^1_{n,k,s}(\T)
\end{align*}
are unitary operators, which gives:
\begin{corol}\label{corol-l3}
	The map
	\[
	U:=U_\rad \oplus \bigoplus_{e_{n,k}\in E}\bigoplus_{s=1}^{p-1} U_{n,k,s}
	\]
	defines a unitary operator
	\begin{align*}
	H^1\big((0,L),q(t)\dd t\big)\oplus \bigoplus_{e_{n,k}\in E}\bigoplus_{s=1}^{p-1} &\,\Tilde H^1\big((t_n,L),p^{-n} q(t)\dd t\big)\\
	&\to\,H^1_\rad (\T) \oplus \bigoplus_{e_{n,k}\in E}\bigoplus_{s=1}^{p-1} H^1_{n,k,s}(\T)
	\equiv H^1(\T).
	\end{align*}
\end{corol}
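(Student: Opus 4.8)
The plan is to observe that Corollary~\ref{corol-l3} is simply the $H^1$-analogue of Corollary~\ref{corol-l2}, obtained by upgrading the $L^2$-unitarity of $U$ to an $H^1$-unitarity once the three bulleted observations preceding the statement are in hand. The essential structural fact is that an orthogonal direct sum of unitary operators is again unitary, so the bulk of the work reduces to checking that each summand of $U$ acts unitarily between the corresponding weighted $H^1$-spaces and that both the domain and the codomain are genuine \emph{orthogonal} direct sums with respect to their $H^1$ scalar products.

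First I would record that the codomain decomposition
\[
H^1(\T)=H^1_\rad(\T)\oplus\bigoplus_{e_{n,k}\in E}\bigoplus_{s=1}^{p-1}H^1_{n,k,s}(\T)
\]
is orthogonal for the $H^1(\T)$ scalar product, and not merely for the $L^2(\T)$ one. Indeed, for $f$ and $g$ lying in two distinct summands, the $L^2$-orthogonality established in Theorem~\ref{thm-expansion1} gives $\langle f,g\rangle_{L^2(\T)}=0$, while the implications \eqref{prim-inv} place $f'$ and $g'$ in the corresponding distinct $L^2$-summands, whence $\langle f',g'\rangle_{L^2(\T)}=0$ as well; adding the two identities yields $\langle f,g\rangle_{H^1(\T)}=0$. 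The domain is an orthogonal Hilbert-space direct sum by construction.

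Next I would invoke the three observations to conclude that $U_\rad$ restricts to a unitary map $H^1\big((0,L),q(t)\dd t\big)\to H^1_\rad(\T)$ and that each $U_{n,k,s}$ restricts to a unitary map $\Tilde H^1\big((t_n,L),p^{-n}q(t)\dd t\big)\to H^1_{n,k,s}(\T)$. The continuity statements identify the right domains, the constraint $F(t_n)=0$ encoding exactly the continuity of $U_{n,k,s}F$ at the vertex $X_{n,k}$, which is why the tilde space appears; and the displayed norm identities for the derivatives, combined with the $L^2$-isometry from Corollary~\ref{corol-l2}, give
\[
\|U_\rad F\|_{H^1(\T)}^2=\|F\|_{H^1((0,L),q(t)\dd t)}^2,\qquad
\|U_{n,k,s}F\|_{H^1(\T)}^2=\|F\|_{\Tilde H^1((t_n,L),p^{-n}q(t)\dd t)}^2.
\]

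Finally, combining the component-wise isometries with the two orthogonal decompositions, the global $H^1$-norm of $U\big(F_\rad,(F_{n,k,s})\big)$ splits over the summands and equals the norm of $\big(F_\rad,(F_{n,k,s})\big)$ in the domain; surjectivity onto $H^1(\T)$ then follows from the codomain decomposition and the surjectivity of each component. I do not expect a genuine obstacle here, since all of the analytic content has already been extracted in Theorem~\ref{thm-expansion1} and in the bulleted observations. The only point requiring a moment's care is the verification that the decomposition is orthogonal in the $H^1$ inner product, handled above via \eqref{prim-inv}, after which the convergence of the infinite direct sum in $H^1(\T)$ is automatic once each component is an isometry onto its orthogonal summand.
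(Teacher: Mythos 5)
Your proposal is correct and follows essentially the same route as the paper: the paper likewise derives the $H^1$-orthogonality of the codomain decomposition from Theorem~\ref{thm-expansion1} together with the inclusions \eqref{prim-inv}, and then obtains the corollary directly from the three bulleted observations giving component-wise unitarity of $U_\rad$ and $U_{n,k,s}$ on the weighted $H^1$-spaces. No gaps.
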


\subsection{Embeddings and equivalent norms}

We denote
\begin{equation*}
	\Tilde H^1(\T) := \big\{f \in H^1(\T):\, f(o)=0\big\},
\end{equation*}
which is a closed subspace of $H^1(\T)$.

\begin{lemma}[Poincaré inequality]\label{lem-poincare}
There is a constant $C>0$ such that
\begin{equation}
	 \label{poincare}
	\| f\|_{L^2(\T)} \leq C\|f'\|_{L^2(\T)} \text{ for all } f\in \Tilde H^1(\T).
\end{equation}
\end{lemma}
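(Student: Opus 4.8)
The plan is to exploit the orthogonal direct sum decomposition from Corollary \ref{corol-l3}, which reduces the Poincaré inequality on the whole tree to a family of weighted one-dimensional Poincaré inequalities, and then to establish a single uniform constant across all summands. Since $U$ is unitary and preserves both $L^2(\T)$-norms and the $L^2(\T)$-norms of derivatives, the inequality \eqref{poincare} for $f \in \Tilde H^1(\T)$ is equivalent to the same inequality, with the same constant, holding simultaneously on the radial component $H^1\big((0,L),q(t)\dd t\big)$ (restricted by $F(o)$, i.e. $F(0)=0$, coming from $f(o)=0$) and on each nonradial component $\Tilde H^1\big((t_n,L),p^{-n}q(t)\dd t\big)$, whose defining condition $F(t_n)=0$ already encodes a vanishing boundary value.

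First I would treat each one-dimensional piece. For a weight $v$ on an interval $(a,b)$ and a function $F$ with $F(a)=0$, one has $F(t)=\int_a^t F'(\tau)\dd\tau$, so by Cauchy--Schwarz
\[
|F(t)|^2 \le \Big(\int_a^t \frac{\dd\tau}{v(\tau)}\Big)\Big(\int_a^t |F'(\tau)|^2 v(\tau)\dd\tau\Big)
\le \Big(\int_a^b \frac{\dd\tau}{v(\tau)}\Big)\|F'\|^2_{L^2((a,b),v\dd t)}.
\]
Integrating $|F(t)|^2 v(t)$ over $(a,b)$ then yields
\[
\|F\|^2_{L^2((a,b),v\dd t)} \le \Big(\int_a^b v\dd t\Big)\Big(\int_a^b \frac{\dd\tau}{v}\Big)\|F'\|^2_{L^2((a,b),v\dd t)}.
\]
Thus the task is to bound the product $\big(\int v\big)\big(\int v^{-1}\big)$ uniformly over all summands. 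For the radial part the weight is $q$ on $(0,L)$; for the $(n,k,s)$-part it is $p^{-n}q$ on $(t_n,L)$, but the factor $p^{-n}$ cancels in the product, so it suffices to control $\big(\int_{t_n}^L q\dd t\big)\big(\int_{t_n}^L q^{-1}\dd t\big)$ uniformly in $n$.

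The main obstacle is establishing this uniformity. Recall $q(t)=(\alpha p)^m$ on $(t_{m-1},t_m)$ and $|(t_{m-1},t_m)|=\ell^m$, so $\int_{t_n}^L q\dd t = \sum_{m>n}(\alpha p \,\ell)^{\!\,}\cdots$; explicitly the relevant geometric sums are $\sum_{m>n}(\alpha p\ell)^m$ and $\sum_{m>n}(\ell/(\alpha p))^m$, scaled by $(\alpha p)^{\pm n}$. I expect these to be handled by the structural assumption $\ell<1$ together with a case analysis on the size of $\alpha p\ell$ relative to $1$, extracting explicit geometric-series bounds whose product is bounded by a constant independent of $n$. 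One must verify convergence of $\int_{t_n}^L q\dd t$ (the tail of $q$), which holds precisely because of the height finiteness $L<\infty$ and the interplay of the parameters; the tail $\int q^{-1}$ always converges since $q^{-1}\le 1$ on a bounded interval. Taking $C^2$ to be the supremum over $n$ of these products (including the radial case $n$ treated as the full interval) gives the uniform constant, and the orthogonal decomposition assembles the pieces back into \eqref{poincare}.
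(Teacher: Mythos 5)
Your overall strategy --- reduce to the one-dimensional components via the orthogonal decomposition of Corollary \ref{corol-l3} (which is legitimate, since both $\|f\|_{L^2(\T)}^2$ and $\|f'\|_{L^2(\T)}^2$ split as sums over the components) and then prove a weighted 1D Poincar\'e inequality with a uniform constant --- is genuinely different from the paper's proof, which is a soft argument: the compact embedding $H^1(\T)\hookrightarrow L^2(\T)$ of Lemma \ref{prop-compact} makes the form $f\mapsto\|f'\|^2_{L^2(\T)}$ on $\Tilde H^1(\T)$ the form of an operator with compact resolvent, the infimum of the Rayleigh quotient is attained, and a minimizer with zero derivative would be a constant vanishing at $o$, hence zero. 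Your route has the advantage of producing an explicit constant, but it contains a genuine gap in the parameter range.

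The gap is this: your bound
\[
\|F\|^2_{L^2((a,b),v\dd t)}\le\Big(\int_a^b v\dd t\Big)\Big(\int_a^b v^{-1}\dd t\Big)\|F'\|^2_{L^2((a,b),v\dd t)}
\]
is vacuous unless both integrals are finite. For the radial component one has $\int_0^L q\dd t=\sum_{m\ge 0}(\alpha p\ell)^m$ and $\int_0^L q^{-1}\dd t=\sum_{m\ge 0}(\ell/(\alpha p))^m$, so the product is finite if and only if $\ell<\alpha p<1/\ell$, i.e.\ exactly under condition \eqref{lap} (the same computation governs the $(n,k,s)$ components, where the product is then even uniformly bounded in $n$). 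But Lemma \ref{lem-poincare} is stated, and is needed, for \emph{all} $\alpha>0$, $\ell\in(0,1)$, $p\ge 2$: the resulting norm equivalence underlies the whole analysis of $\Tilde H^1_\Delta(\T)$, including the conclusion that $\Tilde H^1_\Delta(\T)=\{0\}$ when \eqref{lap} fails, which is one direction of the dichotomy in Theorem \ref{thm11}. Your parenthetical remark that ``$\int q^{-1}$ always converges since $q^{-1}\le 1$'' is also false when $\alpha p<1$, and more importantly the hoped-for ``case analysis on $\alpha p\ell$'' cannot be carried out within this scheme: when $\int_0^L q\dd t=\infty$ the constant functions are not in $L^2\big((0,L),q\dd t\big)$, membership of $F$ in $L^2(q\dd t)$ forces decay near $t=L$, and one must exploit that decay (or fall back on compactness, as the paper does) rather than integrate from the left endpoint. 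So your argument proves the lemma only in the regime \eqref{lap}; to cover the general case you would need either the paper's compactness argument or a separate 1D argument for $\alpha p\ell\ge 1$ and for $\alpha p\le\ell$.
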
 

\begin{proof}
For \eqref{poincare} it is sufficient to show that the number
\[
a:=\inf_{f\in \Tilde H^1(\T),\, f\not\equiv  0} \dfrac{\|f'\|^2_{L^2(\T)}}{\|f\|^2_{L^2(\T)}}
\]
is strictly positive (then one can choose $C:=1/\sqrt{a}$). Remark that
\[
\Tilde H^1(\T)\times \Tilde H^1(\T)\ni (f,g)\mapsto \langle f',g'\rangle_{L^2(\T)}
\]
is a closed, symmetric, non-negative, densely defined sesquilinear form in $L^2(\T)$, so it generates a non-negative self-adjoint
operator $A$ in $L^2(\T)$, and the above number $a$ is the bottom of the spectrum of $A$.
By Lemma \ref{prop-compact} the embedding $\Tilde H^1(T)\hookrightarrow L^2(\T)$ is compact, then $A$ has compact resolvent, and $a$ is its smallest eigenvalue due to the min-max principle. In particular, the infimum in the definition
of $a$ is attained on some eigenfunction $F\in \Tilde H^1(\T)$.
We have obviously $a\ge 0$. Assume that $a=0$,
then $\|F'\|^2_{L^2(\T)}=0$. This means that $F'_{n,k}=0$ for all $(n,k)$, i.e. all $F_{n,k}$ are constant.
The continuity of $F$ shows that $F$ is constant on $\T$, and $F\in \Tilde H^1(\T)$ means that $F(o)=0$ and then $F\equiv 0$,
which is impossible. Hence, $a>0$.
\end{proof}

In view of Lemma~\ref{lem-poincare}, the norm induced by the $\Tilde H^1$-scalar product
\[
\langle f ,g \rangle_{\Tilde H^1(\T)}:=\langle f',g'\rangle_{L^2(\T)}
\]
is equivalent to the initial $H^1(\T)$-norm. Moreover, we have the direct sum decomposition
\[
\Tilde H^1(\T)=\Tilde H^1_\rad (\T) \oplus \bigoplus_{e_{n,k}\in E}\bigoplus_{s=1}^{p-1} \Tilde H^1_{n,k,s}(\T),
\]
with
\begin{align*}
	\Tilde H^1_\rad (\T)&:=\Tilde H^1(\T)\cap H^1_\rad(\T)\equiv \{
	f\in H^1_\rad(\T):\, f(o)=0\},\\
	\Tilde H^1_{n,k,s}(\T)&:=\Tilde H^1(\T)\cap H^1_{n,k,s}(\T)\equiv \{
	f\in H^1_{n,k,s}(\T):\, f(o)=0\}\equiv H^1_{n,k,s}(\T),
\end{align*}
which is orthogonal with respect to both $\langle \cdot ,\cdot \rangle_{H^1(\T)}$
and $\langle \cdot ,\cdot \rangle_{\Tilde H^1(\T)}$.

We further remark that for $F\in H^1\big((0,L),q(t)\dd t\big)$
the inclusion $U_\rad F\in \Tilde H^1_\rad(\T)$ is equivalent to $F(0)=0$. So if we additionally denote
\[
\Tilde H^1\big((0,L),q(t)\dd t\big):=\big\{
F\in H^1\big((0,L),q(t)\dd t\big):\, F(0)=0
\big\},
\]
then we conclude that the map	
\[
	U:=U_\rad \oplus \bigoplus_{e_{n,k}\in E}\bigoplus_{s=1}^{p-1} U_{n,k,s}
\]
is an isomorphism between 
\[
\Tilde H^1\big( (0,L),q(t)\dd t\big)\oplus \bigoplus_{e_{n,k}\in E}\bigoplus_{s=1}^{p-1} \Tilde H^1\big((t_n,L),p^{-n}q(t)\dd t\big)
\]
equipped with the usual $H^1$ scalar products and
\begin{equation}
	\label{decomp1}
\Tilde H^1(\T)\equiv\Tilde H^1_\rad (\T) \oplus \bigoplus_{e_{n,k}\in E}\bigoplus_{s=1}^{p-1} H^1_{n,k,s}(\T)
\end{equation}
viewed with the $\Tilde H^1$ scalar product.

We will additionally introduce the space
\begin{align}
	\Tilde H^1_c(\T)&=\{f\in H^1_c(\T):\, f(o)=0\big\},\\
	\Tilde H^1_0(\T)&=\{f\in H^1_0(\T):\, f(o)=0\big\},
\end{align}
then it is standard to see that $\Tilde H^1_0(\T)$ is the closure of $\Tilde H^1_c(\T)$ in $\Tilde H^1(\T)$.
In particular, $\Tilde H^1_0(\T)$ is a closed subspace of $H^1(\T)$, so using the orthogonal decomposition \eqref{decomp1} we arrive at the decomposition
\begin{equation}
	\label{decomp2}
	\Tilde H^1_0(\T)=\Tilde H^1_{0|\rad} (\T) \oplus \bigoplus_{e_{n,k}\in E}\bigoplus_{s=1}^{p-1} H^1_{0|n,k,s}(\T)
\end{equation}
with
\begin{align*}
	\Tilde H^1_{0|\rad} (\T)&:=\Tilde H^1_\rad(\T)\cap H^1_0(\T),\\
	H^1_{0|n,k,s}(\T)&:= H^1_{n,k,s}(\T)\cap H^1_0(\T),
\end{align*}
which is orthogonal with respect to the $\Tilde H^1$-scalar product.
Further remark that for $F\in \Tilde H^1\big((0,L),q(t)\dd t\big)$ the inclusion
$U_\rad F\in H^1_c(\T)$ is equivalent to
\begin{align*}
F\in \Tilde H^1_c\big((0,L),q(t)\dd t\big)\equiv\big\{&F\in \Tilde H^1\big((0,L),q(t)\dd t\big):\\
&\, \exists a\in(0,L) \text{ such that $F(t)=0$ for all $t\ge a$}\big\}.
\end{align*}
Recall that $U_\rad$ is an isomorphism between $\Tilde H^1\big((0,L),q(t)\dd t\big)$ and  $\Tilde H^1_\rad(\T)$,
and $\Tilde H^1_\rad(\T)\cap H^1_c(\T)$ is dense in $\Tilde H^1_{0|\rad}(\T)$, so it follows
that
\begin{align*}
U_\rad^{-1}\Tilde H^1_{0|\rad}(\T)&=\text{the closure of $\Tilde H^1_c\big((0,L),q(t)\dd t\big)$ in $\Tilde H^1\big((0,L),q(t)\dd t\big)$}\\
&=:\Tilde H^1_0\big((0,L),q(t)\dd t\big).
\end{align*}
Remark that the usual mollifying procedure shows that
\[
\Tilde H^1_0\big((0,L),q(t)\dd t\big)=\text{the closure of $C_c^\infty(0,L)$ in $\Tilde H^1\big((0,L),q(t)\dd t\big)$},
\]
which will be useful below.

Similarly one shows that for any $(n,k,s)$ there holds
\begin{align*}
	U_{n,k,s}^{-1} H^1_{0|n,k,s}(\T)&=\Tilde H^1_0\big((t_n,L),p^{-n}q(t)\dd t\big)\\
	&:=\text{the closure of $C_c^\infty(t_n,L)$ in $\Tilde H^1\big((t_n,L),p^{-n}q(t)\dd t\big)$},
\end{align*}
and we conclude that the map
\[
U:=U_\rad \oplus \bigoplus_{e_{n,k}\in E}\bigoplus_{s=1}^{p-1} U_{n,k,s}
\]
is an isomorphism between 
\[
\Tilde H^1_0\big( (0,L),q(t)\dd t\big)\oplus \bigoplus_{e_{n,k}\in E}\bigoplus_{s=1}^{p-1} \Tilde H^1_0\big((t_n,L),p^{-n}q(t)\dd t\big)
\]
equipped with the usual $H^1$ scalar products and
\[
	\Tilde H^1_0(\T)\equiv\Tilde H^1_{0|\rad} (\T) \oplus \bigoplus_{e_{n,k}\in E}\bigoplus_{s=1}^{p-1} H^1_{0|n,k,s}(\T)
\]
viewed with the $\Tilde H^1$ scalar product.

\subsection{Harmonic functions}

By construction, $\Tilde H^1_0(\T)$ is a closed subspace of $\Tilde H^1(\T)$, so let us introduce
the subspace
\[
\Tilde H^1_\Delta(\T):=\text{the orthogonal complement of $\Tilde H^1_0(\T)$ in $\Tilde H^1(\T)$.}
\]
We would like to understand the structure of this subspace, in particular, to construct an orthonormal basis
(recall that the orthogonality is understood with respect to the $\Tilde H^1$-scalar product).

Remark first that in view of the orthogonal decompositions \eqref{decomp1} and \eqref{decomp2} we have the orthogonal decomposition
\begin{equation}
	\label{decomp3}
	\Tilde H^1_\Delta(\T)\equiv\Tilde H^1_{\Delta|\rad} (\T) \oplus \bigoplus_{e_{n,k}\in E}\bigoplus_{s=1}^{p-1} H^1_{\Delta|n,k,s}(\T)
\end{equation}
with
\begin{align*}
	\Tilde H^1_{\Delta|\rad} (\T)&:=\text{the orthogonal complement of $\Tilde H^1_{0|\rad}(\T)$ in $\Tilde H^1_\rad(\T)$},\\	
	 H^1_{\Delta|n,k,s} (\T)&:=\text{the orthogonal complement of $H^1_{0|n,k,s}(\T)$ in $ H^1_{n,k,s}(\T)$}.
\end{align*}

Let $f_\rad\in\Tilde H^1_\rad(\T)$, then $f_\rad\in \Tilde H^1_{\Delta|\rad}(\T)$ if and only if
\begin{equation}
	 \label{fg1}
\langle f_\rad',g'\rangle_{L^2(\T)}\equiv	
\langle f_\rad ,g\rangle_{\Tilde H^1(\T)}=0 \text{ for all } g\in \Tilde H^1_{0|\rad}(\T).
\end{equation}
We write $f_\rad=U_\rad F_\rad$ with $F_\rad\in \Tilde H^1\big((0,L),q(t)\dd t\big)$
and $g=U_\rad G$ with $G\in \Tilde H^1_0\big((0,L),q(t)\dd t\big)$
and recall that $C^\infty_c(0,L)$ is dense in $\Tilde H^1_0\big((0,L),q(t)\dd t\big)$,
then it follows that \eqref{fg1} holds if and only if
\[
\big\langle (U_\rad F_\rad)',(U_\rad G)'\big\rangle_{L^2(\T)}=0 \text{ for all } G\in C^\infty_c(0,L).
\]
As $(U_\rad F_\rad)'=U_\rad (F_\rad')$ and $(U_\rad G)'=U_\rad (G')$
and $U_\rad$ is unitary as an operator $L^2\big((0,L), q(t)\dd t\big) \to L^2_\rad (\T)$, this is equivalent to
\begin{equation}
\int_0^L F_\rad'(t)\overline{G'(t)}q(t)\dd t\equiv \langle F_\rad',G'\rangle_{L^2((0,L),q(t)\dd t)}=0
\text{ for all } G\in C^\infty_c(0,L),
\end{equation}	
which means that $(qF_\rad')'=0$ in $(0,L)$ in the sense of distributions, which shows that $F'_\rad=c/q$ for some $c\in \C$. We also recall that $F_\rad$ must be continuous with $F_\rad(0)=0$, so $F_\rad$ is completely determined by the value $c$ of its derivative in $(0,1)$: for any $n\in \N_0$ and $t\in (t_{n-1},t_n)$ one has
\begin{align*}
	F_\rad'(t)&= \dfrac{c}{(\alpha p)^n},\\
	F_\rad (t)&=c\Big(\dfrac{t-t_{n-1}}{(\alpha p)^n} + \sum_{k=0}^{n-1} \big(\frac{\ell}{\alpha p}\big)^k\Big)
	\equiv c\Big(\dfrac{t-t_{n-1}}{(\alpha p)^n} + \dfrac{1-\big(\frac{\ell}{\alpha p}\big)^n}{1-\frac{\ell}{\alpha p}}\Big).
\end{align*}
It remains to check if $F_\rad\in \Tilde H^1\big((0,L),q(t)\dd t\big)$, for $c\ne 0$. We have
\begin{align*}
	\|F'_\rad\|^2_{L^2((0,L),q(t)\dd t)}&=\int_0^L \big|F'_\rad(t)\big|^2 q(t)\dd t=\sum_{n=0}^{\infty} (\alpha p)^n\int_{t_{n-1}}^{t_n}\big|F'_\rad(t)\big|^2\dd t
	\\
	&= \sum_{n=0}^{\infty} (\alpha p)^n \dfrac{|c|^2}{(\alpha p)^{2n}}(\underbrace{t_n -t_{n-1}}_{=\ell^n})
	=|c|^2\sum_{n=0}^\infty \Big(\dfrac{\ell}{\alpha p}\Big)^n,
\end{align*}
which is finite if and only if $\ell<\alpha p$, which we assume from now on: in this case
\begin{equation}
	\|F'_\rad\|^2_{L^2((0,L),q(t)\dd t)}=|c|^2\Big(1-\dfrac{\ell}{\alpha p}\Big)^{-1}.
	\label{frad1}
\end{equation}
To compute $\|F_\rad\|^2_{L^2((0,L),q(t)\dd t)}$ we first represent, for any $n\in \N_0$ and $t\in (t_{n-1},t_n)$,
\begin{align*}
|F_\rad(t)|^2&=|c|^2\bigg(
\dfrac{(t-t_{n-1})^2}{(\alpha p)^{2n}}
+2 \dfrac{t-t_{n-1}}{(\alpha p)^n}
\dfrac{1-\big(\frac{\ell}{\alpha p}\big)^n}{1-\frac{\ell}{\alpha p}}
+\Big(\dfrac{1-\big(\frac{\ell}{\alpha p}\big)^n}{1-\frac{\ell}{\alpha p}}\Big)^2
\bigg),
\end{align*}
therefore,
\[
\int_{t_{n-1}}^{t_n} \big| F_\rad(t)\big|^2 \dd t=|c|^2\bigg(
\dfrac{\ell^{3n}}{3 (\alpha p)^{2n}}
+
\dfrac{\ell^{2n}}{(\alpha p)^n}
\dfrac{1-\big(\frac{\ell}{\alpha p}\big)^n}{1-\frac{\ell}{\alpha p}}\Big)
+
\ell^n \Big(\dfrac{1-\big(\frac{\ell}{\alpha p}\big)^n}{1-\frac{\ell}{\alpha p}}\Big)^2
\bigg)
\]
and
\begin{align*}
	\|F_\rad&\|^2_{L^2((0,L),q(t)\dd t)}=\int_0^L \big|F_\rad(t)\big|^2 q(t)\dd t=\sum_{n=0}^{\infty} (\alpha p)^n\int_{t_{n-1}}^{t_n}\big|F_\rad(t)\big|^2\dd t\\
&=|c|^2\bigg(
\dfrac{1}{3}\sum_{n=0}^\infty \Big(\dfrac{\ell^3}{\alpha p}\Big)^n
+\sum_{n=0}^\infty \ell^{2n} \dfrac{1-\big(\frac{\ell}{\alpha p}\big)^n}{1-\frac{\mathstrut\ell}{\alpha p}}\Big)+ \sum_{n=0}^\infty (\alpha p\ell)^n \Big(\dfrac{1-\big(\frac{\ell}{\alpha p}\big)^n}{1-\frac{\mathstrut\ell}{\alpha p}}\Big)^2
\bigg),
\end{align*} 
which is finite if and only if $\ell^3<\alpha p$ and $\alpha p\ell<1$ (we recall that $\ell<1$ by the initial assumption). For subsequent computations it will be useful to normalize $F_\rad$ in $\Tilde H^1\big((0,L),q(t)\dd t\big)$, in view of \eqref{frad1} this amounts to the choice
\[
c:=\sqrt{1-\dfrac{\ell}{\alpha p}},
\]
and gives
\begin{equation}
	\label{frad2}
\begin{aligned}
F_\rad (t)&=\sqrt{1-\dfrac{\ell}{\alpha p}}\Big(\dfrac{t-t_{n-1}}{(\alpha p)^n} + \sum_{k=0}^{n-1} \big(\frac{\ell}{\alpha p}\big)^k\Big),\quad t\in(t_{n-1},t_n),
\quad n\in \N_0.
\end{aligned}
\end{equation}
We summarize these computation as
\begin{gather*}
\Tilde H^1_\rad(T)\ne \Tilde H^1_{0|\rad}(T) \text{ if and only if }
\ell<\alpha p<\dfrac{1}{\ell},\\
\Tilde H^1_{\Delta|\rad}(\T)=\begin{cases}
\C U_\rad F_\rad,& \ell<\alpha p<\dfrac{1}{\ell},\\
0, &\text{otherwise.}
\end{cases}
\end{gather*}

The spaces $\Tilde H^1_{\Delta|n,k,s}$ are studied in the same way. Let us fix an admissible triple $(n,k,s)$
and recall that $U_{n,k,s}C^\infty_c(t_n,L)$ is dense in $\Tilde H^1_{0|n,k,s}(\T)$, then
a function $f_{n,k,s}\in\Tilde H^1_{n,k,s}(\T)$ belongs to $\Tilde H^1_{\Delta|n,k,s}(\T)$
if and only if $f_{n,k,s}=U_{n,k,s}F_{n,k,s}$ with $F\in \Tilde H^1\big((t_n,L),p^{-n}q(t)\dd t\big)$ satisfying
\[
\langle (U_{n,k,s}F_{n,k,s})',(U_{n,k,s}G)'\rangle_{L^2(\T)}=0
\text{ for all $G\in C^\infty_c(t_n,L)$,}
\]
which can be equivalently rewritten as
\begin{align*}
\langle U_{n,k,s}(F_{n,k,s}'),U_{n,k,s}(G')\rangle_{L^2(\T)}&\equiv \langle F'_{n,k,s},G'\rangle_{L^2((t_n,L),p^{-n}q(t)\dd t)}\\
&\equiv p^{-n}\int_{t_n}^L F_{n,k,s}'(t) \overline{G'(t)}q(t)\dd t=0
\end{align*}
for all $G\in C^\infty_c(t_n,L)$, which means $(qF_{n,k,s}')'=0$ in $(t_n,L)$.
As $F_{n,k,s}$ can be chosen independent of $(k,s)$ we write simply
\[
F_n:=F_{n,k,s}.
\]
So we obtain $F'_n=c/q$ in $(t_n,L)$ with some $c\in \C$, and for $t\in (t_{n+m-1},t_{n+m})$ with $m\in\N$ one has
\begin{align*}
	F_n'(t)&= \dfrac{c}{(\alpha p)^{n+m}},\\
	F_n(t)&=c\Big(\dfrac{t-t_{n+m-1}}{(\alpha p)^{n+m}} + \sum_{k=1}^{m-1} \big(\frac{\ell}{\alpha p}\big)^{n+k}\Big)\\
	&\equiv c\Big(\dfrac{t-t_{n+m-1}}{(\alpha p)^{n+m}} + \Big(\dfrac{\ell}{\alpha p}\Big)^{n+1}\dfrac{1-\big(\frac{\ell}{\alpha p}\big)^{m-1}}{1-\frac{\ell}{\alpha p}}\Big).
\end{align*}
In order to check if $F_n\in \Tilde H^1\big((t_n,L),p^{-n}q(t)\dd t\big)$ we again compute
\begin{align*}
	\|F'_n&\|^2_{L^2((t_n,L),p^{-n}q(t)\dd t)}=p^{-n}\sum_{m=1}^\infty \int_{t_{n+m-1}}^{t_{n+m}} |F'_n(t)|^2 (\alpha p)^{n+m}\dd t\\
	&=|c|^2 p^{-n}\sum_{m=1}^\infty \ell^{n+m} (\alpha p)^{-2(n+m)} (\alpha p)^{n+m}
	=|c|^2 p^{-n}\sum_{m=1}^\infty \Big(\dfrac{\ell}{\alpha p}\Big)^{n+m},
\end{align*}
which is finite if and only if $\ell<\alpha p$, and in that case
\begin{equation}
	\label{fnks1}
\|F'_n\|^2_{L^2((t_n,L),p^{-n}q(t)\dd t)}=\dfrac{|c|^2}{p^n}\Big(\dfrac{\ell}{\alpha p}\Big)^{n+1}\dfrac{1}{1-\frac{\mathstrut\ell}{\alpha p}}.
\end{equation}
Furthermore, in this case for $t\in (t_{n+m-1},t_{n+m})$ with $m\in\N$ one has
\begin{align*}
	|F_n(t)|^2&=|c|^2\bigg(
	\dfrac{(t-t_{n+m-1})^2}{(\alpha p)^{2n+2m}}+2(t-t_{n+m-1})\dfrac{\ell^{n+1}}{(\alpha p)^{2n+m+1}}\dfrac{1-\big(\frac{\ell}{\alpha p}\big)^m}{1-\frac{\ell}{\alpha p}}
	\\
	& \qquad +\Big( \dfrac{\ell}{\alpha p}\Big)^{2n+2}\Big(\dfrac{1-\big(\frac{\ell}{\alpha p}\big)^m}{1-\frac{\ell}{\alpha p}}\Big)^2\bigg),
\end{align*}
therefore,
\begin{align*}
	\int_{t_{n+m-1}}^{t_{n+m}} \big| F_n(t)\big|^2 \dd t&=|c|^2\bigg[
	\dfrac{1}{3}\dfrac{\ell^{3m+3n}}{(\alpha p)^{2n+2m}}
	+\ell^{2n+2m} \dfrac{\ell^{n+1}}{(\alpha p)^{2n+m+1}}\dfrac{1-\big(\frac{\ell}{\alpha p}\big)^{m-1}}{1-\frac{\ell}{\alpha p}}\\
	& \qquad+\ell^{m+n}\Big( \dfrac{\ell}{\alpha p}\Big)^{2n+2}\Big(\dfrac{1-\big(\frac{\ell}{\alpha p}\big)^{m-1}}{1-\frac{\ell}{\alpha p}}\Big)^2\bigg],
\end{align*}
and 
\begin{align*}
	\|F_n&\|^2_{L^2((t_n,L),p^{-n}q(t)\dd t)}=p^{-n}\sum_{m=1}^\infty (\alpha p)^{n+m}\int_{t_{n+m-1}}^{t_{n+m}} \big| F_n(t)\big|^2 \dd t\\
	&=|c|^2\bigg[
	\dfrac{1}{3} \Big(\dfrac{\ell^3}{\alpha p}\Big)^{n+m}
	+\dfrac{\ell^{3n+2m+1}}{(\alpha p)^{n+1}}	\dfrac{1-\big(\frac{\ell}{\alpha p}\big)^{m-1}}{1-\frac{\ell}{\alpha p}}\\
	&\qquad\qquad+(\alpha p\ell)^{n+m} \Big( \dfrac{\ell}{\alpha p}\Big)^{2n+2}\Big(\dfrac{1-\big(\frac{\ell}{\alpha p}\big)^{m-1}}{1-\frac{\ell}{\alpha p}}\Big)^2
	\bigg]
\end{align*}
is finite if and only if $\ell^3<\alpha p$ and $\alpha p\ell<1$ (as one always has $\ell<1$).
For what follows we normalize $F_{n,k,s}$ to have unit norm in $\Tilde H^1\big((t_n,L),p^{-n}q(t)\dd t\big)$: as follows
from \eqref{fnks1}, this means the choice
\[
c:=p^n \Big( \dfrac{\alpha}{\ell}\Big)^\frac{n}{2}\sqrt{\dfrac{\alpha p-\ell}{\ell}}
\]
and then
\begin{equation}
	\label{fnks2}
	\begin{aligned}
	F_n(t)&=p^n \Big( \dfrac{\alpha}{\ell}\Big)^\frac{n}{2}\sqrt{\dfrac{\alpha p-\ell}{\ell}}
	\Big(\dfrac{t-t_{n+m-1}}{(\alpha p)^{n+m}} + \sum_{k=1}^{m-1} \big(\frac{\ell}{\alpha p}\big)^{n+k}\Big),\\
	& \quad\text{$t\in (t_{n+m-1},t_{n+m})$ with $m\in\N$.}
\end{aligned}
\end{equation}
Therefore,
\begin{gather*}
	\Tilde H^1_{n,k,s}(T)\ne \Tilde H^1_{0|n,k,s}(T) \text{ if and only if }
	\ell<\alpha p<\dfrac{1}{\ell},\\
	\Tilde H^1_{\Delta|n,k,s}(\T)=\begin{cases}
		\C U_{n,k,s} F_n,& \ell<\alpha p<\dfrac{1}{\ell},\\
		0, &\text{otherwise.}
	\end{cases}
\end{gather*}

We summarize the preceding computations:
\begin{lemma}\label{onb-delta}
One has $\Tilde H^1_\Delta(\T)\ne \{0\}$ or, equivalently,
 $\Tilde H^1_0(T)\ne \Tilde H^1(\T)$, if and only if
\begin{equation}
	\label{lap}
	\ell<\alpha p<\dfrac{1}{\ell}.
\end{equation}	
If this condition is satisfied, then the functions
\[
\phi_\rad:=U_\rad F_\rad \text{ and } \phi_{n,k,s}:=U_{n,k,s}F_n \text{ with $e_{n,k}\in E$ and $s\in\{1,\dots,p-1\}$}
\]
with $F_\rad$ from \eqref{frad2} and $F_n$ from \eqref{fnks2}
form an orthonormal basis in $\Tilde H^1_\Delta(\T)$.
\end{lemma}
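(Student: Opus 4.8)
The plan is to read the statement off directly from the orthogonal decomposition \eqref{decomp3} together with the explicit computations carried out just above, treating the radial summand and the angular summands $H^1_{\Delta|n,k,s}(\T)$ separately. For the radial part, the analysis leading to \eqref{frad2} shows that $f_\rad\in\Tilde H^1_{\Delta|\rad}(\T)$ forces $f_\rad=U_\rad F_\rad$ with $(q F_\rad')'=0$ in the distributional sense, hence $F_\rad'=c/q$; membership $F_\rad\in\Tilde H^1\big((0,L),q(t)\dd t\big)$ was then shown to hold (for $c\ne 0$) precisely when $\ell<\alpha p$ and $\alpha p\,\ell<1$, using $\ell<1$ to absorb the auxiliary condition $\ell^3<\alpha p$. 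The parallel computation for the angular summands, culminating in \eqref{fnks2}, produces the identical threshold. Thus each summand of \eqref{decomp3} is one-dimensional when \eqref{lap} holds and trivial otherwise, which already yields the claimed equivalence that $\Tilde H^1_\Delta(\T)\ne\{0\}$ if and only if $\ell<\alpha p<1/\ell$.

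Next I would check that the proposed vectors form an orthonormal system. Normalization is immediate, since $F_\rad$ and $F_n$ were chosen to have unit norm in $\Tilde H^1\big((0,L),q(t)\dd t\big)$ and $\Tilde H^1\big((t_n,L),p^{-n}q(t)\dd t\big)$ respectively, and $U_\rad$, $U_{n,k,s}$ are isomorphisms intertwining these norms with the $\Tilde H^1(\T)$-norm, as recorded in Corollary~\ref{corol-l3} and the intertwining identities $\|(U_\rad F)'\|_{L^2(\T)}=\|F'\|_{L^2((0,L),q(t)\dd t)}$ and $\|(U_{n,k,s}F)'\|_{L^2(\T)}=\|F'\|_{L^2((t_n,L),p^{-n}q(t)\dd t)}$. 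Hence $\phi_\rad$ and the $\phi_{n,k,s}$ are unit vectors in $\Tilde H^1_\Delta(\T)$. Pairwise orthogonality needs no further computation: distinct vectors among $\phi_\rad,\phi_{n,k,s}$ lie in distinct summands of \eqref{decomp3}, which is orthogonal with respect to the $\Tilde H^1$-scalar product.

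Finally, because each summand $\Tilde H^1_{\Delta|\rad}(\T)$ and $H^1_{\Delta|n,k,s}(\T)$ is exactly the span of its associated $\phi$, the orthonormal system exhausts $\Tilde H^1_\Delta(\T)$ and is therefore an orthonormal basis. The equivalence $\Tilde H^1_\Delta(\T)\ne\{0\}$ if and only if $\Tilde H^1_0(\T)\ne\Tilde H^1(\T)$ is then a tautology from the definition of $\Tilde H^1_\Delta(\T)$ as the orthogonal complement of $\Tilde H^1_0(\T)$ in $\Tilde H^1(\T)$. Since the genuine analytic content, namely the distributional characterization of the harmonic representatives and the convergence of the weighted geometric series governing $\|F_\rad\|$ and $\|F_n\|$, has already been dispatched before the lemma, the only point demanding care is the bookkeeping: confirming that the same inequality \eqref{lap} emerges from both the radial and the angular computations, and that the indexing of the $\phi_{n,k,s}$ matches the index set of \eqref{decomp3} so that no basis vector is omitted or counted twice.
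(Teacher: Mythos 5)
Your proposal is correct and follows essentially the same route as the paper: the lemma is stated there precisely as a summary of the computations preceding it, namely the reduction via the orthogonal decomposition \eqref{decomp3}, the distributional characterization $(qF')'=0$ of each harmonic representative, the convergence analysis of the weighted geometric series yielding \eqref{lap} in both the radial and angular cases, and the normalization choices \eqref{frad2} and \eqref{fnks2}. Your bookkeeping of how $\ell^3<\alpha p$ is absorbed by $\ell<\alpha p$ together with $\ell<1$, and of orthogonality coming for free from the distinct summands, matches the paper's argument exactly.
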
	

We remark that the criterion \eqref{lap} for $\Tilde H^1_0(T)\ne \Tilde H^1(\T)$ and, equivalently, for $H^1_0(T)\ne  H^1(\T)$, was already obtained in \cite{jks} (even for a more general configuration) by different methods, but in our case it appears naturally as a by-result of the construction of the orthonormal basis.

For subsequent constructions we will need the limits
\begin{equation}
	\label{finf}
F^\infty_\rad:=\lim_{t\to L^-} F_\rad(t), 
\quad
F^\infty_n:=\lim_{t\to L^-} F_n(t).
\end{equation}
As $F_\rad$ and $F_n$ are increasing functions, we have
\begin{equation}
	\label{finf1}
	\begin{aligned}
	F^\infty_\rad&=\lim_{n\to \infty} F_\rad(t_n)=\lim_{n\to \infty} \sqrt{1-\dfrac{\ell}{\alpha p}}\sum_{k=0}^{n} \big(\frac{\ell}{\alpha p}\big)^k\\
	&= \sqrt{1-\dfrac{\ell}{\alpha p}}\lim_{n\to \infty} \dfrac{1-\big( \frac{\ell}{\alpha p}\big)^{n+1}}{1-\frac{\ell}{\alpha p}}= \sqrt{1-\dfrac{\ell}{\alpha p}} \dfrac{1}{{1-\frac{\ell}{\alpha p}}}
	=\sqrt{\dfrac{\alpha p}{\alpha p-\ell}}
\end{aligned}
\end{equation}
and, similarly,
\begin{align*}
	F^\infty_n&=\lim_{m\to \infty} F_n(t_{n+m})=\lim_{m\to \infty} p^n \Big( \dfrac{\alpha}{\ell}\Big)^\frac{n}{2}\sqrt{\dfrac{\alpha p-\ell}{\ell}}\sum_{k=1}^m \Big(\dfrac{\ell}{\alpha p}\Big)^{n+k}.
\end{align*}
Due to
\begin{align*}
\sum_{k=1}^m \Big(\dfrac{\ell}{\alpha p}\Big)^{n+k}&=\Big(\dfrac{\ell}{\alpha p}\Big)^{n+1} \sum_{k=0}^{m-1} \Big(\dfrac{\ell}{\alpha p}\Big)^{k}\\
&=\Big(\dfrac{\ell}{\alpha p}\Big)^{n+1}\dfrac{1-\Big(\dfrac{\ell}{\alpha p}\Big)^m}{1-\dfrac{\ell}{\alpha p}}
\xrightarrow{m\to\infty}
\Big(\dfrac{\ell}{\alpha p}\Big)^{n+1}\,\dfrac{\alpha p}{\alpha p-\ell}
\end{align*}
we obtain
\begin{equation}
	\label{finf2}
	F^\infty_n=p^n\Big(\dfrac{\alpha}{\ell}\Big)^\frac{n}{2}\sqrt{\dfrac{\alpha p-\ell}{\ell}}
	\Big(\dfrac{\ell}{\alpha p}\Big)^{n+1}\,\dfrac{\alpha p}{\alpha p-\ell}
	\equiv
	\sqrt{\dfrac{\ell}{\alpha p-\ell}\Big(\dfrac{\ell}{\alpha}\Big)^n}.
\end{equation}

\subsection{Abstract trace operator}\label{sec-abstrace}

Everywhere in this subsection we assume that
\[
\fbox{\text{ the inequalities \eqref{lap} are satisfied.}}
\]

Now we construct the first version of  the trace operator. We define
\[
\cZ:=\{\rad\}\cup\big\{(n,k,s):\, e_{n,k}\in E,\, s\in\{1,\dots,p-1\} \big\}
\]
and define
\[
\nu:\cZ\to \{\rad\}\cup \N_0,
\quad
\nu(\rad):=\rad,\quad \nu(n,k,s):=n.
\]
For numerical operations it will be useful to identify
\[
\rad:=-1.
\]

Recall that we have the orthogonal decomposition $\Tilde H^1(\T)=\Tilde H^1_0(\T)\oplus \Tilde H^1_\Delta(\T)$ and let $\Tilde P_\Delta:\Tilde H^1(\T)\to \Tilde H^1_\Delta(\T)$
be the orthogonal projector. By Lemma \ref{onb-delta}, the map
\[
\Theta:\ell^2(\cZ)\ni (a_z)\mapsto \sum_{z\in\cZ}a_z U_z F_z\in H^1_\Delta (\T)
\]
is unitary. Remark that the behavior of $U_zF_z$ near the boundary of $\T$
is uniquely quantified by $z$ and the limiting values $F^\infty_{\nu(z)}$ defined in \eqref{finf}.
Therefore, it is reasonable to consider the multiplication operator
\[
M:\ell^2(\cZ)\to \ell^2(\cZ), \quad (a_z)\mapsto \big(p^{-\frac{\nu(z)}{2}}F^\infty_{\nu(z)} a_z\big).
\]
The explicit expressions \eqref{finf1} and \eqref{finf2}
show that the coefficients $p^{-\frac{\nu(z)}{2}} F^\infty_{\nu(z)}$ are strictly positive and uniformly bounded, therefore, the operator $M$ is injective and bounded. However, it is not surjective.
Namely, for $r\ge 0$ introduce
\begin{equation}
	\label{l2r}
\begin{aligned}	
\ell^2_r(\cZ)&:=\big\{
(a_z)\in\ell^2(\cZ):\ (p^{r\nu(z)} a_z)\in \ell^2(\cZ)\big\},\\
\|(a_z)\|_{\ell^2_r(\cZ)}&:=\big\|(p^{r\nu(z)} a_z)\|_{\ell^2(\cZ)}.
\end{aligned}
\end{equation}

Denote
\begin{equation*}
\sigma:=\dfrac{1}{2\log p}\log\dfrac{\alpha p}{\ell}\equiv \frac{1}{2}\Big ( 1-\frac{\log\ell-\log\alpha}{\log p}\Big)>0,
\end{equation*}
then
\[
F^\infty_{\nu(z)}=\sqrt{\dfrac{\ell}{\alpha p-\ell}}\,p^{-\sigma\nu(z)} \text{ for all }z\in\cZ.
\]
and it follows that $M:\ell^2(\cZ)\to \ell^2_\sigma(\cZ)$ is a bounded bijective operator. This gives the abstract trace operator
\begin{equation}
	\label{eq-tau1}
\tau:=M\Theta^{-1}\Tilde P_\Delta:\Tilde H^1(\T)\to \ell^2_\sigma(\cZ),
\end{equation}
which is a linear operator which is bounded and surjective by construction. Remark that $M\Theta^{-1}$ is injective, which shows that
$\ker\tau=\ker \Tilde P_\Delta=\Tilde H^1_0(\T)$.
In addition we extend $\tau$ to $H^1(\T)$: for any $f\in H^1(\T)$ set
\begin{equation}
	\label{tauff}
\tau f:=\tau \Tilde f
\end{equation}
with any $\Tilde f\in \Tilde H^1(\T)$ such that $f=\Tilde f$ in $\T\setminus\T^n$ for some $n\in\N_0$.

\begin{theorem}\label{th28}
	The abstract trace operator $\tau: H^1(\T)\to \ell^2_\sigma(\cZ)$ is well defined, linear, bounded, surjective, with $\ker\tau=H^1_0(\T)$.
\end{theorem}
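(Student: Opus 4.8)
The plan is to exploit the fact that on the subspace $\Tilde H^1(\T)$ all the claimed properties are already available from the construction $\tau=M\Theta^{-1}\Tilde P_\Delta$ preceding the statement (where in particular $\ker(\tau|_{\Tilde H^1(\T)})=\Tilde H^1_0(\T)$ and $\tau|_{\Tilde H^1(\T)}$ is bounded and surjective), so the whole argument reduces to controlling the passage from $\Tilde H^1(\T)$ to $H^1(\T)$ through the defining relation \eqref{tauff}. I would organize the proof into five steps: existence of an admissible representative together with well-definedness, linearity, boundedness, surjectivity, and the kernel identity.

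First I would verify that every $f\in H^1(\T)$ admits an admissible $\Tilde f$: letting $\chi\in H^1_c(\T)$ be the function supported on $e_{0,0}$ that interpolates linearly (in the arclength coordinate $t$) between the value $f(o)$ at the root and $0$ at $X_{0,0}$, the function $\Tilde f:=f-\chi$ lies in $\Tilde H^1(\T)$ and coincides with $f$ on $\T\setminus\T^0$. For well-definedness I would take two admissible representatives $\Tilde f_1,\Tilde f_2$ agreeing with $f$ outside $\T^{n_1}$ and $\T^{n_2}$; their difference vanishes outside $\T^n$ with $n=\max(n_1,n_2)$ and satisfies $(\Tilde f_1-\Tilde f_2)(o)=0$, hence lies in $\Tilde H^1_c(\T)\subset\Tilde H^1_0(\T)=\ker(\tau|_{\Tilde H^1(\T)})$, so $\tau\Tilde f_1=\tau\Tilde f_2$. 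Linearity then follows by choosing a common truncation level for any two functions and invoking the linearity of $\tau$ on $\Tilde H^1(\T)$.

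For boundedness the key point is to choose the representative with a controlled norm, and this is where the one-dimensional Sobolev embedding enters. Using the explicit $\chi$ above together with $|f(o)|=|f_{0,0}(0)|\le C\|f_{0,0}\|_{H^1(0,1)}\le C\|f\|_{H^1(\T)}$ (the weight on $e_{0,0}$ being equal to $1$), I obtain $\|\chi\|_{H^1(\T)}\le C|f(o)|\le C'\|f\|_{H^1(\T)}$ and hence $\|\Tilde f\|_{H^1(\T)}\le(1+C')\|f\|_{H^1(\T)}$. Since $\tau$ is bounded on $\Tilde H^1(\T)$ with respect to the $H^1$-norm (equivalent to the $\Tilde H^1$-norm by Lemma \ref{lem-poincare}), the estimate $\|\tau f\|_{\ell^2_\sigma(\cZ)}=\|\tau\Tilde f\|_{\ell^2_\sigma(\cZ)}\le C''\|f\|_{H^1(\T)}$ follows. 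Surjectivity is immediate, because for $\Tilde f\in\Tilde H^1(\T)\subset H^1(\T)$ the extended operator coincides with $\tau|_{\Tilde H^1(\T)}$, which already maps onto $\ell^2_\sigma(\cZ)$.

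It remains to identify the kernel, which I expect to be the most delicate bookkeeping step. For the inclusion $\ker\tau\subset H^1_0(\T)$, if $\tau f=0$ then its representative satisfies $\tau\Tilde f=0$, i.e. $\Tilde f\in\Tilde H^1_0(\T)\subset H^1_0(\T)$; since $f-\Tilde f$ is compactly supported it belongs to $H^1_c(\T)\subset H^1_0(\T)$, whence $f\in H^1_0(\T)$. Conversely, for $f\in H^1_0(\T)$ I would take the representative $\Tilde f=f-\chi$ with $\chi\in H^1_c(\T)\subset H^1_0(\T)$ as above; then $\Tilde f\in H^1_0(\T)$ with $\Tilde f(o)=0$, so $\Tilde f\in\Tilde H^1(\T)\cap H^1_0(\T)=\Tilde H^1_0(\T)$, giving $\tau f=\tau\Tilde f=0$. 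The only genuinely structural inputs are the identity $\ker(\tau|_{\Tilde H^1(\T)})=\Tilde H^1_0(\T)$ and $\Tilde H^1_0(\T)=\Tilde H^1(\T)\cap H^1_0(\T)$, both recorded above; everything else is elementary.
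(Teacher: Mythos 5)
Your proof is correct and follows essentially the same strategy as the paper's: everything is reduced to the already-established properties of $\tau=M\Theta^{-1}\Tilde P_\Delta$ on $\Tilde H^1(\T)$ by producing a representative in $\Tilde H^1(\T)$ that agrees with $f$ outside $\T^0$ and depends boundedly and linearly on $f$. The only (immaterial) difference is the choice of that representative --- you subtract a compactly supported linear interpolant $\chi$ of the root value, whereas the paper multiplies by a radial cutoff $\varphi(|x|)$ --- and both choices support the same well-definedness, boundedness, surjectivity and kernel arguments.
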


\begin{proof}
	Let $f\in H^1(\T)$ and $\Tilde f,\Tilde g\in \Tilde H^1(\T)$ such that $f=\Tilde f$
	in $\T\setminus\T^n$ for some $n\in\N_0$ and $f=\Tilde g$ in $\T\setminus\T^m$ for some $m\in\N_0$. Without loss of generality assume $n\le m$, then $\Tilde f=\Tilde g$ in $\T\setminus\T^m$, i.e. $\Tilde f-\Tilde g=0$ in $\T\setminus\T^m$.
	This means that $\tau(\Tilde f-\Tilde g)=0$, i.e. $\tau \Tilde f=\tau \Tilde g$. This shows that $\tau$ is a well defined map.

	Let $\varphi:(0,L)\to \R$	be a $C^\infty$ function such that $\varphi=0$ in $(0,\frac{1}{2})$ and $\varphi=1$ in $(\frac{3}{4},L)$. For $f\in H^1(\T)$ the function $\Tilde f: \T\ni x\mapsto \varphi(|x|)f(x)$ belongs to $\Tilde H^1(\T)$ and coincides with $f$ in $\T\setminus\T^0$, so one has $\tau f=\tau\Tilde f$. As $f\mapsto \Tilde f$ is linear, it follows that the extended $\tau$ is also linear.
	
	To show the boundedness, it is sufficient to show the boundedness of the map $H^1(\T)\ni f\mapsto \Tilde f\in \Tilde H^1(\T)$. For any $f\in H^1(\T)$ one has the identities
\[
	\Tilde f_{0,0}=\varphi f, \quad \Tilde f_{n,k}=f_{n,k} \text{ for } (n,k)\ne (0,0).
\]
This gives
\begin{align*}
\|\Tilde f\|^2_{\Tilde H^1(\T)}&=\sum_{e_{n,k}\in E} (\alpha p)^n \|\Tilde f'_{n,k}\|^2_{L^2(t_{n-1},t_n)}\\
	&=\|\Tilde f'_{0,0}\|^2_{L^2(0,1)}+\sum_{\substack{e_{n,k}\in E\\(n,k)\ne (0,0)}} (\alpha p)^n \|f'_{n,k}\|^2_{L^2(t_{n-1},t_n)}\\
	&\le \|\Tilde f'_{0,0}\|^2_{L^2(0,1)}+\|f\|^2_{H^1(\T)},
\end{align*}
and
\begin{align*}
\|\Tilde f'_{0,0}\|^2_{L^2(0,1)}&=\|\varphi f'_{0,0}+\varphi' f_{0,0}\|^2_{L^2(0,1)}\le 2\|\varphi f'_{0,0}\|^2_{L^2(0,1)} +2 \|\varphi' f_{0,0}\|^2_{L^2(0,1)}\\
&\le 2\|\varphi\|_\infty^2 \|f'_{0,0}\|^2_{L^2(0,1)} + 2\|\varphi'\|_\infty^2 \|f_{0,0}\|^2_{L^2(0,1)}\\
&\le 2b \|f_{0,0}\|^2_{H^1(0,1)} \text{ with } b:=\max\{\|\varphi\|_\infty^2, \|\varphi'\|_\infty^2\},
\end{align*}
so one obtains $\|\Tilde f\|^2_{\Tilde H^1(\T)}\le 2b\|f_{0,0}\|^2_{H^1(0,1)}+\|f\|^2_{H^1(\T)}
\le (2b+1)\|f\|^2_{H^1(\T)}$, which gives the result.

The map $H^1(\T)\ni f\mapsto \Tilde f\in \Tilde H^1(\T)$ is surjective, which shows that the range of $\tau$ is the same as before.

If $f\in H^1_0(\T)$, then there exist $f_n\in H^1_c(\T)$ with $f_n\xrightarrow{H^1(\T)}f$. Then $\Tilde f_n\in \Tilde H^1_c(\T)$ with $\Tilde f_n\xrightarrow{H^1(\T)} \Tilde f$, so $\Tilde f\in \Tilde H^1_0(\T)$ and $\tau f=\tau \Tilde f=0$.

On the other hand, if $\tau f=0$, then $\tau\Tilde f=0$ and $\Tilde f\in \Tilde H^1_0(\T)$.
Then there exist $g_n\in \Tilde H^1_c(\T)$ with $g_n\xrightarrow{H^1(\T)}\Tilde f$.
The function $g:\T\ni x\mapsto \big(1-\varphi(|x|)\big)f(x)$ is supported in $\T^0$ and, hence, it belongs to $H^1_c(\T)$. Therefore, $g+g_n\in H^1_c(\T)$ with $g+g_n\xrightarrow{H^1(\T)}g+\Tilde f=f$, so $f\in H^1_c(\T)$.
\end{proof}

The above definition of $\tau$ is involved due to the application of the orthogonal projector and the expansion into an orthonormal basis. Let us show that it can be recovered using more elementary operations.

\begin{lemma}\label{lem39}
	One has continuous embeddings
	\begin{align*}
		\Tilde H^1\big( (0,L),q(t)\dd t\big)&\hookrightarrow C^0([0,L]),\\
		\Tilde H^1\big((t_n,L),p^{-n}q(t)\dd t\big)&\hookrightarrow C^0([t_n,L]) \text{ for any $n\in\N_0$,}
	\end{align*}
	where the right-hand sides are endowed with $\|\cdot\|_\infty$.
\end{lemma}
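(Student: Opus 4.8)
The plan is to reduce both embeddings to a single weighted Poincaré-type estimate, the essential point being that the reciprocal weight $1/q$ is integrable up to the accumulation point $L$ precisely because $\ell<\alpha p$, which is part of the standing assumption \eqref{lap}.

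First I would settle continuity and absolute continuity on $[0,L)$. On any compact subinterval $[0,t_N]$ the weight $q$ takes only finitely many values and is therefore bounded above and below by positive constants, so the norm of $H^1\big((0,t_N),q(t)\dd t\big)$ is equivalent to the unweighted norm of $H^1(0,t_N)$. The classical one-dimensional Sobolev embedding then gives $F\in C^0([0,t_N])$ together with absolute continuity, and since $N$ is arbitrary, $F$ is continuous and absolutely continuous on $[0,L)$, with $F(t')-F(t)=\int_t^{t'}F'(s)\dd s$ for $t<t'$ in $(0,L)$.

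Next I would control the behaviour at the endpoint $L$. Inserting $q^{1/2}q^{-1/2}$ into the last integral and applying Cauchy--Schwarz yields
\[
\big|F(t')-F(t)\big|\le \|F'\|_{L^2((0,L),q\,\dd t)}\Big(\int_t^{t'} \frac{\dd s}{q(s)}\Big)^{1/2}.
\]
Since $\int_{t_{n-1}}^{t_n} q^{-1}\dd s=(\alpha p)^{-n}\ell^n$, we get $\int_0^L q^{-1}\dd s=\sum_{n\ge 0}\big(\tfrac{\ell}{\alpha p}\big)^n=\big(1-\tfrac{\ell}{\alpha p}\big)^{-1}<\infty$, using $\ell<\alpha p$. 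The tail $\int_t^L q^{-1}\dd s$ therefore tends to $0$ as $t\to L^-$, so the displayed estimate shows that $F$ satisfies the Cauchy criterion as $t\to L^-$; hence $\lim_{t\to L^-}F(t)$ exists and defines a continuous extension $F\in C^0([0,L])$. The continuity of the embedding then follows from the same inequality taken with $t=0$: since $F(0)=0$, we obtain $\|F\|_\infty\le \big(1-\tfrac{\ell}{\alpha p}\big)^{-1/2}\|F'\|_{L^2((0,L),q\,\dd t)}\le C\,\|F\|_{\Tilde H^1((0,L),q\,\dd t)}$.

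For the second embedding the argument is identical with $q$ replaced by $p^{-n}q$ and the base point $0$ replaced by $t_n$, where now $F(t_n)=0$ by definition of $\Tilde H^1\big((t_n,L),p^{-n}q(t)\dd t\big)$: the reciprocal weight integrates to $p^{n}\sum_{m>n}\big(\tfrac{\ell}{\alpha p}\big)^m<\infty$, again by $\ell<\alpha p$, which gives both the continuous extension to $[t_n,L]$ and the corresponding norm bound. The only genuine subtlety is the endpoint $L$, where infinitely many subintervals accumulate and $q$ may grow without bound when $\alpha p>1$; this is exactly resolved by the integrability of $1/q$ near $L$, i.e. by the inequality $\ell<\alpha p$, and everything else is the standard weighted one-dimensional Sobolev embedding.
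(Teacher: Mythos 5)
Your proof is correct and follows essentially the same route as the paper: the key step in both is the Cauchy--Schwarz estimate $|F(t')-F(t)|^2\le \|F'\|^2_{L^2(q\,\dd t)}\int_t^{t'}q^{-1}\dd s$ combined with $\int_0^L q^{-1}\dd s=\sum_n(\ell/(\alpha p))^n<\infty$. You are in fact slightly more explicit than the paper about continuity at the accumulation endpoint $L$ (via the Cauchy criterion) and you treat the interval $(t_n,L)$ directly rather than by zero extension, but these are cosmetic differences.
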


\begin{proof}
	The continuity inside the respective intervals is clear due to the one-dimensional Sobolev theorem, and it remains to establish norm estimates. Let $f\in \Tilde H^1\big( (0,L),q(t)\dd t\big)$ and $t\in(0,L)$, then
	\begin{align*}
		\big|f(t)\big|^2&=\Big|\int_0^t f'(s) \dd s\Big|^2\le \Big(\int_0^L \big|f'(s)\big|\dd s\Big)^2\\
		&= \Big(\int_0^L \dfrac{1}{\sqrt{q(s)}}\cdot\big|f'(s)\big|\sqrt{q(s)}\dd s\Big)^2
		\le \int_0^L \dfrac{\dd s}{q(s)}\cdot \int_0^L \big|f'(s)\big|^2 q(s)\dd s.
	\end{align*}
	The second factor on the right-hand side is $\|f\|^2_{\Tilde H^1((0,L),q(t)\dd t)}$, while
	\[
	\int_0^L \dfrac{\dd s}{q(s)}=\sum_{n=0}^\infty \int_{t_{n-1}}^{t_n} \dfrac{\dd s}{(\alpha p)^n}
	=\sum_{n=0}^\infty \Big( \dfrac{\ell}{\alpha p}\Big)^n=:a<\infty.
	\]
	As $t\in (0,L)$ was arbitrary, this yields $\|f\|_\infty\le \sqrt{a} \|f\|_{\Tilde H^1((0,L),q(t)\dd t)}$.
	
	If $f\in \Tilde H^1\big((t_n,L),p^{-n}q(t)\dd t\big)$ with some $n\in\N_0$, then its extension to $(0,L)$ by zero $\Tilde f$ belongs to $f\in \Tilde H^1\big( (0,L),q(t)\dd t\big)$, and 
	one uses the first part of the proof.
\end{proof}

\begin{lemma}\label{prop-unique}
	Let $\tau': H^1(\T)\to \ell^2(\cZ)$ be a bounded linear map such that
	\begin{itemize}
		\item[(a)] $\tau' f=0$ for any $f\in H^1_c(\T)$,
		\item[(b)] for any $F\in  H^1\big((0,L),q(t)\dd t\big)$ one has
		\[
		(\tau' U_\rad F)_z=\begin{cases}
			\lim_{t\to L^-}F(t), & z=\rad,\\
			0, &\text{otherwise,}
		\end{cases}
		\]
		\item[(c)] for any $\lambda\in \cZ\setminus\{\rad\}$ and any $F\in \Tilde H^1\big( (t_{\nu(\lambda)},L),p^{-\nu(\lambda)}q(t)\dd t\big)$ one has
		\[
		(\tau' U_\lambda F)_z=p^{-\frac{\nu(\lambda)}{2}}\begin{cases}
			\lim_{t\to L^-}F(t), & z=\lambda,\\
			0, &\text{otherwise,}
		\end{cases}
		\]
	\end{itemize}
	then $\tau'=\tau$. Moreover, these properties are satisfied by $\tau$.
\end{lemma}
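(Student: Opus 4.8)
The plan is to prove the statement in two movements: first verify that the operator $\tau$ constructed in \eqref{eq-tau1}--\eqref{tauff} itself satisfies (a), (b), (c), and then show that these three properties pin down a bounded linear map $H^1(\T)\to\ell^2(\cZ)$ uniquely. Combining the two yields $\tau'=\tau$.

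For the first movement, property (a) is immediate: by Theorem \ref{th28} one has $\ker\tau=H^1_0(\T)$, and $H^1_c(\T)\subset H^1_0(\T)$ by the very definition of $H^1_0(\T)$, so $\tau f=0$ for every $f\in H^1_c(\T)$. For (b) and (c) I would compute $\tau$ on the special functions $U_\rad F$ and $U_\lambda F$ directly from $\tau=M\Theta^{-1}\Tilde P_\Delta$. The key point is that $U_\rad F$ is radial, hence by the orthogonality of the decomposition it is $\Tilde H^1$-orthogonal to every $\phi_{n,k,s}$, so $\Tilde P_\Delta(U_\rad F)$ collapses to its component along $\phi_\rad=U_\rad F_\rad$; likewise $U_\lambda F\in\Tilde H^1_{\nu(\lambda),k,s}(\T)$ is orthogonal to all $\phi_z$ with $z\ne\lambda$, so only the $\phi_\lambda$-component survives. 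The one surviving coefficient is an $\Tilde H^1$-pairing which I would evaluate using that $U_\rad$ (resp.\ $U_\lambda$) is an isometry for the $\Tilde H^1$-structure and that the harmonic profiles obey $F_\rad'=c/q$ (resp.\ $F_n'=c/q$). This is the decisive simplification, since the weighted integral then telescopes into a boundary value:
\[
\big\langle (U_\rad F)',\phi_\rad'\big\rangle_{L^2(\T)}=\int_0^L F'(t)\,\frac{c}{q(t)}\,q(t)\dd t=c\int_0^L F'(t)\dd t=c\lim_{t\to L^-}F(t),
\]
and analogously on $(t_{\nu(\lambda)},L)$ using $F(t_{\nu(\lambda)})=0$. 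For (b) one first replaces $F$ by $\varphi F$, with $\varphi$ the cut-off from the proof of Theorem \ref{th28}, so that $U_\rad F$ is turned into an admissible element of $\Tilde H^1(\T)$ without changing its boundary limit. Applying $M$ and inserting the explicit values $F^\infty_\rad$, $F^\infty_n$ from \eqref{finf1}--\eqref{finf2} together with the normalisation constants $c$ then reproduces exactly the coefficients prescribed in (b) and (c).

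For the uniqueness movement I would use the structural identity
\[
H^1(\T)=H^1_0(\T)+\Tilde H^1_\Delta(\T),
\]
which follows from $H^1(\T)=\Tilde H^1(\T)+H^1_c(\T)$ (split $f=\varphi f+(1-\varphi)f$, the second summand being supported in $\T^0$) together with $\Tilde H^1(\T)=\Tilde H^1_0(\T)\oplus\Tilde H^1_\Delta(\T)$ and $\Tilde H^1_0(\T),H^1_c(\T)\subset H^1_0(\T)$. Any bounded linear $\tau'$ obeying (a) vanishes on $H^1_c(\T)$ and hence, by continuity, on its closure $H^1_0(\T)$. On the other hand $\Tilde H^1_\Delta(\T)=\overline{\operatorname{span}}\{\phi_z:z\in\cZ\}$, and since $\phi_\rad=U_\rad F_\rad$ with $F_\rad\in H^1\big((0,L),q(t)\dd t\big)$ and $\phi_\lambda=U_\lambda F_{\nu(\lambda)}$ with $F_{\nu(\lambda)}\in\Tilde H^1\big((t_{\nu(\lambda)},L),p^{-\nu(\lambda)}q(t)\dd t\big)$, the values $\tau'\phi_z$ are prescribed by (b) and (c). By linearity and boundedness $\tau'$ is thereby determined on all of $\Tilde H^1_\Delta(\T)$, and being zero on $H^1_0(\T)$ it is determined on the whole sum $H^1(\T)$; as $\tau$ satisfies the same three conditions, the two maps coincide.

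The main obstacle is the bookkeeping in the first movement: one must carry the normalisation constants $c$, the boundary limits $F^\infty_\rad$, $F^\infty_n$, and the scale factors $p^{-\nu(z)/2}$ through the composition $M\Theta^{-1}\Tilde P_\Delta$ and check that the products collapse to the clean coefficients $1$ (radial) and $p^{-\nu(\lambda)/2}$ demanded by (b) and (c). Here the cancellations are driven by the identities $F^\infty_\rad\,c=1$ in the radial case and $p^{-\nu(\lambda)}c\,F^\infty_{\nu(\lambda)}=1$ at scale $\nu(\lambda)$, where the second $p^{-\nu(\lambda)}$ comes precisely from the weight $p^{-\nu(\lambda)}q$ in the pairing; one should double-check that the special convention for the radial index produces no spurious extra factor. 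A secondary point requiring care is the passage through the extension in (b): multiplying by $\varphi$ must neither alter the boundary limit nor introduce components in $\Tilde P_\Delta$, which holds because the modification is supported in $\T^0$ and $\tau$ is insensitive to such compactly supported changes by (a).
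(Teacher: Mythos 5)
Your proposal is correct. The uniqueness half coincides with the paper's argument: $\tau'$ vanishes on $H^1_0(\T)$ by (a) and boundedness, its values on the $\phi_z$ are prescribed by (b)--(c), and the decomposition $H^1(\T)=H^1_0(\T)+\Tilde H^1_\Delta(\T)$ (obtained from the cut-off $f=\varphi(|\cdot|)f+(1-\varphi(|\cdot|))f$) finishes the job exactly as in parts (i)--(ii) of the paper's proof. Where you genuinely diverge is in verifying (b) and (c) for $\tau$ itself. The paper writes $F=cF_\rad+G$ with $G(t)\to 0$ as $t\to L^-$ and then proves $\tau U_\rad G=0$ by constructing compactly supported approximants $G_n=\varphi_n G$ and showing $G_n\to G$ in $H^1\big((0,L),q(t)\dd t\big)$, at the price of the quantitative bound on the constants $C_n$ in \eqref{glok} (which uses both inequalities of \eqref{lap}). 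You instead evaluate the single surviving coefficient of $\Tilde P_\Delta\big(U_\rad(\varphi F)\big)$ as the $\Tilde H^1$-pairing with $\phi_\rad$, which telescopes to $\bar c\,\lim_{t\to L^-}F(t)$ because $F_\rad'=c/q$ and $(\varphi F)'\in L^1(0,L)$ (Cauchy--Schwarz against $\int_0^L q^{-1}\dd t<\infty$, which again needs $\ell<\alpha p$); the analogous computation on $(t_{\nu(\lambda)},L)$ uses $F(t_{\nu(\lambda)})=0$. This is shorter, replaces the approximation lemma by the orthogonality already encoded in Lemma \ref{onb-delta}, and the cancellations you invoke, $cF^\infty_\rad=1$ and $p^{-\nu(\lambda)}cF^\infty_{\nu(\lambda)}=1$, do hold with the normalizations \eqref{frad2} and \eqref{fnks2}. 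The one caveat you rightly flag --- the convention $\nu(\rad)=-1$ in the definition of $M$, which taken literally would insert a spurious factor $p^{1/2}$ into the radial component --- is an inconsistency in the paper's own bookkeeping (its part (i) also tacitly treats this factor as $1$ for $z=\rad$), not a gap in your argument.
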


\begin{proof}
We first remark that the limits on the right-hand sides of (b) and (c) are well-defined by Lemma \ref{lem39}; for (b) one uses the fact that $F$ coincides with some function $\Tilde F\in \Tilde H^1\big( (0,L),q(t)\dd t\big)$ in $(\frac{1}{2},L)$.

(i) The boundedness of $\tau'$ and the condition (a) give $H^1_0(\T)\subset \ker \tau'$.
For any $\lambda,z\in\cZ$ we have
	\begin{align*}
		(\tau'\phi_{\lambda})_z = (\tau'U_{\lambda}F_{\nu(\lambda)})_z
		&= \begin{cases}
			\lim_{t\to L^-}F_{\nu(\lambda)}(t)p^{-\frac{\nu(z)}{2}}, & z=\lambda,\\
			0, &\text{otherwise.}
		\end{cases}\\
		& = \begin{cases}
			F_{\nu(\lambda)}^{\infty}p^{-\frac{\nu(z)}{2}}, & z=\lambda,\\
			0, &\text{otherwise}.
		\end{cases}
	\end{align*} 
	On the other side, by definition we have $(\Theta^{-1}\phi_{\lambda})_z=\delta_{\lambda,z}$ (where $\delta_{\lambda,z}$ are the usual Kronecker symbols)
	and, therefore,
	\begin{equation*}
		(\tau \phi_{\lambda})_z = \begin{cases}
			p^{-\frac{\nu(\lambda)}{2}}F_{\nu(\lambda)}^{\infty}, & z=\lambda,\\
			0, &\text{otherwise},
		\end{cases}
	\end{equation*} 
which coincides with $\tau' \phi_\lambda$. As the linear span of $\Tilde H^1_0(\T)$ and $(\phi_\lambda)_{\lambda\in\cZ}$ is dense in $\Tilde H^1(\T)$ and $\tau'$ is bounded,
it follows that $\tau'=\tau$ on $\Tilde H^1(\T)$.

(ii) Let $\varphi:(0,L)\to \R$	be a $C^\infty$ function such that $\varphi=0$ in $(0,\frac{1}{2})$ and $\varphi=1$ in $(\frac{3}{4},L)$. For $f\in H^1(\T)$ consider $\Tilde f: \T\ni x\mapsto \varphi(|x|)f(x)\in \Tilde H^1(\T)$. As $\Tilde f- f=0$ in $\T\setminus\T^0$, we have $\Tilde f-f\in H^1_c(\T)$, and
\[
\tau' f\stackrel{\text{(a)}}{=}\tau'\Tilde f\stackrel{\text{(i)}}{=}\tau \Tilde f\stackrel{\eqref{tauff}}{=}\tau f. 
\]

(iii) It remains to check that $\tau$ satisfies the three properties. Remark that (a) holds by construction. Let $F$ be as in (b), then one can represent $F:=cF_\rad +G$
with $c\in\C$ and $\lim_{t\to L^-}G(t)=0$. By linearity we have
\begin{align*}
(\tau U_\rad F)_z&=\tau(c U_\rad F_\rad +U_\rad G)\\
&=cF_\rad^\infty \delta_{\rad,z} + \tau U_\rad G\equiv \big(\lim_{t\to L^-} F(t)\big)\delta_{\rad,z} + \tau U_\rad G.
\end{align*}
Therefore, to show (b) it is sufficient to show that $\tau U_\rad G=0$. For that
we take $\varphi\in C^\infty(\R)$ such that $0\le \varphi\le 1$, with
$\varphi(t)=1$ for $t\le 0$ and $\varphi(t)=0$ for $t\ge 1$, and for
$n\in\N$ consider the functions
\[
\varphi_n:t\mapsto \varphi\Big( \dfrac{t-t_{n-1}}{\ell^n}\Big),
\quad
G_n:=\varphi_n G \in H^1_c\big((0,L),q(t)\dd t\big).
\]
As for any $n$ there holds $\tau U_\rad G_n=0$, it is sufficient to show that
$G_n\to G$ in $H^1\big((0,L),q(t)\dd t\big)$ as $n\to\infty$

The dominated convergence implies $G_n\to G$ in $L^2\big((0,L),q(t)\dd t\big)$ as $n\to\infty$.
We have $G_n'=\varphi_n' G + \varphi_n G'$ and the second summand converges
to $G'$ in $L^2\big((0,L),q(t)\dd t\big)$  as $n\to\infty$. It remains to check
$\varphi_n' G\to 0$ in $L^2\big((0,L),q(t)\dd t\big)$ for $n\to\infty$.
We have
\[
(\varphi_n' G)(t)=\dfrac{1}{\ell^n}\varphi'\Big( \dfrac{t-t_{n-1}}{\ell^n}\Big) G(t),
\]
and the function vanishes outside $(t_{n-1},t_n)\subset (t_{n-1},L)$. If follows that
\[
\|\varphi_n' G\|^2_{L^2((0,L),q(t)\dd t)}\le \dfrac{\|\varphi'\|_\infty}{\ell^{2n}}
\int_{t_{n-1}}^L |G(t)|^2q(t)\dd t.
\]
As $G$ vanishes at $L$, for all $t\in (t_{n-1},L)$ we have
\begin{align*}
|G(t)|^2&=\Big|\int_t^L G'(s)\dd s\Big|=\Big|\int_t^L G'(s)\sqrt{q(s)} \dfrac{1}{\sqrt{q(s)}}\dd s\Big|\\
&\le \int_t^L |G'(s)|^2 q(s)\dd s\int_t^L \dfrac{\dd s}{q(s)}
\le \int_{t_{n-1}}^L |G'(s)|^2 q(s)\dd s\int_{t_{n-1}}^L \dfrac{\dd s}{q(s)},
\end{align*}
and we obtain
\begin{gather}
\|\varphi_n' G\|^2_{L^2((0,L),q(t)\dd t)}\le C_n\int_{t_{n-1}}^L |G'(t)|^2 q(s)\dd s, \label{glok}\\
C_n:=\dfrac{\|\varphi'\|_\infty}{\ell^{2n}} \int_{t_{n-1}}^L \dfrac{\dd s}{q(s)}
\int_{t_{n-1}}^L q(s)\dd s.\nonumber
\end{gather}
In order to show the sought convergence it suffices to show that $C_n$ remain bounded for large $n$. For that, using the explicit expression of $q$ and the relations \eqref{lap} we compute
\begin{align*}
\int_{t_{n-1}}^L \dfrac{\dd s}{q(s)}
\int_{t_{n-1}}^L q(s)\dd s&=\sum_{k=n-1}^\infty (\ell \alpha p)^k \cdot \sum_{k=n-1}^\infty \Big(\dfrac{\ell}{\alpha p}\Big)^k\\
&=(\ell\alpha p)^{n-1}(1-\ell\alpha p)^{-1}\Big(\dfrac{\ell}{\alpha p}\Big)^{n-1}
\big(1-\dfrac{\ell}{\alpha p}\big)^{-1}\\
&\equiv\ell^{2n-2} (1-\ell\alpha p)^{-1}\big(1-\dfrac{\ell}{\alpha p}\big)^{-1},
\end{align*}
which gives
\[
C_n=\dfrac{\|\varphi'\|_\infty}{\ell^2} (1-\ell\alpha p)^{-1}\big(1-\dfrac{\ell}{\alpha p}\big)^{-1},
\]
i.e. $C_n$ are independent of $n$. This concludes the proof of (b) for $\tau$,
and the property (c) is proved in the same way.
\end{proof}

We complement the preceding observations by the following approximation result, which will be useful for the geometric interpretation of the embedded trace:
\begin{lemma}\label{proplim}
For $f\in H^1(\T)$	and $N\in \N$ let $f_N$ be the extension of $f|_{\T^N}$ by constants, i.e.
\[
f_N(x):= \begin{cases}
	f(x), & x\in \T^N,\\
	f(X_{N,K}), & x \in \T_{N,K}, \ K\in\{0,\dots,p^N-1\}.
\end{cases}
\]	
then $f_N\xrightarrow{N\to \infty}f$ in $H^1(\T)$, in particular, $\tau f=\lim_{N\to \infty}\tau f_N$ in $\ell^2_\sigma (\cZ)$.
\end{lemma}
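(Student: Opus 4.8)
The plan is to reduce everything to showing that $f_N\to f$ in $H^1(\T)$; the final assertion $\tau f=\lim_{N}\tau f_N$ in $\ell^2_\sigma(\cZ)$ then follows at once from the boundedness, hence continuity, of $\tau$ established in Theorem~\ref{th28}. First I would record that $f_N\in H^1(\T)$: it coincides with $f$ on $\T^N$ and is constant, equal to $f(X_{N,K})$, on each subtree $\T_{N,K}$, so it is continuous at each junction vertex $X_{N,K}$ and piecewise $H^1$; membership in $L^2(\T)$ uses $\mu(\T_{N,K})<\infty$, which holds precisely because \eqref{lap} forces $p\alpha\ell<1$.

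Since $f-f_N$ is supported on $\T\setminus\T^N=\bigcup_K\T_{N,K}$ and equals $f-f(X_{N,K})$ there, I split $\|f-f_N\|^2_{H^1(\T)}$ into its $L^2$ and derivative parts. The derivative part is easy: $(f-f_N)'=f'$ on $\T\setminus\T^N$ and $0$ on $\T^N$, so $\|(f-f_N)'\|^2_{L^2(\T)}=\int_{\T\setminus\T^N}|f'|^2\dd\mu$ is the tail of the convergent series defining $\|f'\|^2_{L^2(\T)}$ and tends to $0$.

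The main work is the $L^2$ part, which amounts to a Poincar\'e-type inequality on each subtree with a constant that decays in $N$. Writing $g_K:=(f-f(X_{N,K}))|_{\T_{N,K}}$, which vanishes at the root $X_{N,K}$, I use the path representation $g_K(x)=\int_{\gamma_x}g_K'$ together with Cauchy--Schwarz to obtain $|g_K(x)|^2\le D_N\int_{\gamma_x}|g_K'|^2\dd\lambda$, where $\lambda$ is the unweighted length measure and $D_N=\ell^{N+1}/(1-\ell)$ bounds the distance from $X_{N,K}$ to the boundary of $\T_{N,K}$. Integrating over $\T_{N,K}$ against $\dd\mu$ and exchanging the order of integration (Fubini), the contribution of $|g_K'(y)|^2$ gets weighted by $\mu(S_y)$, the $\mu$-mass of the subtree $S_y=\{x:y\le x\}$ hanging below $y$. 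A geometric-series computation gives $\mu(\T_{n,k})=\tfrac{p\alpha\ell}{1-p\alpha\ell}(\alpha\ell)^n$ and $\mu(e_{n,k})=(\alpha\ell)^n$, so for $y$ on a generation-$n$ edge with $n\ge N+1$ one has $\mu(S_y)\alpha^{-n}\le(1+C_0)\ell^{N+1}$ with $C_0=\tfrac{p\alpha\ell}{1-p\alpha\ell}$; the factor $\alpha^{-n}$ appears when passing from $\dd\lambda$ to $\dd\mu$ on that edge. Combining these yields $\int_{\T_{N,K}}|g_K|^2\dd\mu\le\tfrac{(1+C_0)\ell^{2N+2}}{1-\ell}\int_{\T_{N,K}}|f'|^2\dd\mu$, and summing over $K$ bounds $\|f-f_N\|^2_{L^2(\T)}$ by $\tfrac{(1+C_0)\ell^{2N+2}}{1-\ell}\|f'\|^2_{L^2(\T)}\to 0$.

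I expect the main obstacle to be this $L^2$ estimate, specifically the Fubini exchange and the bookkeeping of the weighted subtree masses $\mu(S_y)$; the finiteness of these masses, and hence the whole argument, rests on the inequality $p\alpha\ell<1$, i.e. the right-hand inequality in \eqref{lap}.
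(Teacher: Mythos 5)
Your proof is correct, but it takes a genuinely different route for the main step. The paper handles the $L^2$-part in one line: since $f_N-f$ vanishes on $\T^N$ (in particular at the root $o$), it lies in $\Tilde H^1(\T)$, and the global Poincar\'e inequality of Lemma~\ref{lem-poincare} gives $\|f_N-f\|_{L^2(\T)}\le C\|f_N'-f'\|_{L^2(\T)}=C\|f'\|_{L^2(\T\setminus\T^N)}\to 0$, with a fixed constant $C$ obtained earlier by compactness and the min--max principle. You instead prove a quantitative Poincar\'e inequality on each subtree $\T_{N,K}$ from scratch, via the path representation, Cauchy--Schwarz, and a Fubini exchange weighted by the subtree masses $\mu(S_y)$; your bookkeeping is right (indeed $\mu(\T_{n,k})=\tfrac{p\alpha\ell}{1-p\alpha\ell}(\alpha\ell)^n$ and $\mu(S_y)\alpha^{-n}\le(1+C_0)\ell^{N+1}$ for $y$ in generation $n\ge N+1$), and since the pointwise estimate plus Tonelli applies to nonnegative integrands, it simultaneously establishes $f-f_N\in L^2(\T)$ without the separate Sobolev-trace bound $|f(X_{N,K})|\le B_N\|f\|_{H^1(\T)}$ that the paper uses to check $f_N\in L^2(\T)$. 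What your approach buys is an explicit geometric rate $\|f-f_N\|^2_{L^2(\T)}=O(\ell^{2N})\|f'\|^2_{L^2(\T)}$ and independence from the compactness result Lemma~\ref{prop-compact}; what the paper's approach buys is brevity, since the Poincar\'e inequality is already available. The reduction of the final assertion to continuity of $\tau$ (Theorem~\ref{th28}) is the same in both.
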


\begin{proof}
Due to the one-dimensional Sobolev inequality for any $N\in\N$ one can find some $B_N>0$ such that $|f(X_{N,K})|\le B_N\|f\|_{H^1(\T)}$ for any $f\in H^1(\T)$ and any $K\in\{0,\dots,p^N-1\}$. We have
\[
\|f_N\|^2_{L^2(\T)}=\|f\|^2_{L^2(\T^N)} 
		+\sum_{K=0}^{p^N-1} \big|f(X_{N,K})\big|^2\int_{\T_{N,K}}1\dd\mu,
\]
while
\begin{align*}
	\int_{\T_{N,K}}1\dd\mu&=\sum_{j=0}^{p-1}\sum_{n=N+1}^\infty\sum_{k=0}^{p^n-1}(\alpha \ell)^n
	\equiv p \sum_{n=N+1}^\infty(p\alpha \ell)^n=:s_N\stackrel{\eqref{lap}}{<}\infty.
\end{align*}
It follows that
\[
\|f_N\|^2_{L^2(\T)}\le\|f\|^2_{L^2(\T^N)}+ p^{N+1} B_N^2 s_N \|f\|^2_{H^1(\T)}<\infty,
\]
i.e. $f_N\in L^2(\T)$ for any $N$. At the same time,
\[
(f'_N)_{n,k}=\begin{cases}
	f'_{n,k},& n\le N,\\
	0,& \text{otherwise.}
\end{cases}
\]
In particular, $|f'_N|\le |f'|$, which yields $\|f'_N\|_{L^2(\T)}\le \|f'\|_{L^2(\T)}$
and $f_N\in H^1(\T)$. In addition,
\begin{equation}
	\label{fnf1}
	\|f'_N-f'\|_{L^2(\T)}=\|f'\|_{L^2(\T\setminus\T^N)}\xrightarrow{N\to\infty}0.
\end{equation}
Now let $C>0$ be the constant from the Poincar\'e inequality \eqref{poincare}. By construction
one has $f_N-f\in \Tilde H^1(\T)$, therefore, due to \eqref{fnf1},
\begin{align*}
	\|f_N-f\|_{L^2(\T)}\le C \|f'_N-f'\|_{L^2(\T)}\xrightarrow{N\to\infty}0,
\end{align*}
which concludes the proof.
\end{proof}

\section{Approximation spaces}\label{secappr}

\subsection{Excursus about Sobolev spaces}\label{sec31}

Let us briefly recall various definitions and basic facts related to fractional Sobolev spaces $H^s$ with $s\in(0,1)$ on open sets and manifolds.

Let $\Omega\subset \R^d$ be a non-empty open subset and $k\in \N_0$, then the $k$th Sobolev space $H^k(\Omega)$ is defined as
\begin{align*}
	H^k(\Omega)=\big\{f\in L^2(\Omega):\ \partial^\alpha f \in L^2(\Omega) \text{ for all $\alpha\in\N_0^d$ with $|\alpha|\le k$} \big\},
\end{align*}
which is a Hilbert space if equipped with the scalar product
\begin{align*}
	\scalar{f,g}_{H^k(\Omega)}=\sum_{|\alpha|\leq k} \scalar{\partial^{\alpha}f,\partial^{\alpha}g}_{L^2(\Omega)}.
\end{align*}

For $\Omega=\R^d$ we obtain an equivalent definition via the Fourier transform. Namely for $s\in [0,\infty)$ define
the $s$th Sobolev space by
	\begin{align*}
		H^s(\R^d) = \big\{ f\in L^2(\R^d):\ \scalar{\xi}^s\hat{f}\in L^2(\R^d)\big\}  \;\text{ with }\scalar{\xi}:= \sqrt{1+|\xi|^2},
	\end{align*} 
where $\hat f$ is the Fourier tranform of $f$, which becomes a Hilbert space if equipped with the norm
\begin{align*}
	\scalar{f,g}_{H^s(\R^d),\wedge} = \big\langle \scalar{\xi}^s\hat{f}, \scalar{\xi}^s\hat{g}\big\rangle_{L^2(\R^d)}.
\end{align*}
For $s\in\N_0$ the two above definitions of $H^s(\R^d)$ coincide and the two norms are equivalent.

Let $\Omega\subset \R^d$ be a bounded non-empty open subset and $s\in (0,1)$, then the $s$th Sobolev space on $H^s(\Omega)$ is defined as the space of the restrictions on $\Omega$ of the functions from $H^s(\R^d)$ with the quotient norm
\[
\|f\|_{H^s(\Omega),*}=\inf_{g\in H^s(\R^d),\  g|_\Omega=f}\|g\|_{H^s(\R^d)}.
\]
We will need various equivalent characterizations of these spaces as well as several equivalent norms \cite{adams}. Recall that $\Omega$ is said to be with Lipschitz boundary if for any $p\in \partial\Omega$ there exist
$\eps>0$, $a>0$, a Lipschitz function $h$
defined on the open ball $B_\eps(0)\subset\R^{d-1}$ with $h(0)=0$ and $|h(y_1,\dots,y_{d-1})|<a$ for all $(y_1,\dots,y_{d-1})\in B_\eps(0)$,
and Cartesian coordinates $(y_1,\dots,y_m)$ centered at $p$  such that
\begin{multline*}
\Omega\cap \big\{ (y_1,\dots,y_d):\, (y_1,\dots,y_{d-1})\in B_\eps(0),\, |y_d|<2a\big\}\\
=\big\{(y_1,\dots,y_d):\, (y_1,\dots,y_{d-1})\in B_\eps(0),\, h(y_1,\dots,y_{d-1})<y_d<2a\big\}.
\end{multline*}

The first reformulation comes from the interpolation theory, see \cite[Chapter 34]{tartar}.
Let $X$ and $Y$ be normalized spaces with $X\subset Y$ and $s\in (0,1)$.
Choose any $b>1$  and for $f\in Y$ and $t>0$ define
\begin{equation}
	\label{eq-kf}
\begin{aligned}
	K(f,t)&=\inf\limits_{g\in X} \big(\|f-g\|_Y+t\|g\|_X\big),\\
	F^s&:=(F^s)_{j\in\N} \text{ with } F^s_j:=b^{js}K(f,b^{-j}),
\end{aligned}
\end{equation}
then the interpolated space $[Y,X]_s$ is defined by
\begin{equation}
	\label{norm-yxs}
	[Y,X]_s:=\big\{f\in Y:\ \|f\|_{[Y,X]_s}^2 := \|f\|_Y^2 + \|F^s\|_{\ell^2}^2 < \infty \big\},
\end{equation}
and for any $0<s<s'\le 1$ one has
\begin{equation}
\label{Hss'}
H^s(\Omega)=\big[L^2(\Omega),H^{s'}(\Omega)\big]_{\frac{s}{s'}}
\end{equation}
with an equivalence of the associated norms, see \cite[Theorem 3.5.1]{cohen}

If $\Omega$ has Lipschitz boundary, then 
\begin{equation}
	\label{eq-fhs}
	H^s(\Omega)= \Big\{f\in L^2(\Omega):\ [f]_{H^s(\Omega)}^2:= \int_{\Omega \times \Omega} \dfrac{\big|f(x)-f(y)\big|^2}{|x-y|^{d+2s}}\dd x \dd y < \infty\Big\},
\end{equation}
while the semi-norms $f\mapsto \|F^s\|_{\ell^2}$ and $[\cdot]_{H^s(\Omega)}$ are equivalent, see \cite[Chapter 36]{tartar}.

Another group of equivalent characterizations comes from the theory of Besov spaces, which we discuss following \cite[Chapter 3]{cohen}. For $f\in L^2(\Omega)$ define its modulus of smoothness
by
\begin{equation}
	\label{wft}
	w(f,t):= \sup\limits_{|h|\leq t} \big\|f(\cdot + h)- f\big\|_{L^2(\Omega_h)}, \quad t>0,\quad  \Omega_h:=\Omega \cap (\Omega + h).
\end{equation}
For $s\in(0,1)$ one defines the Besov seminorm $[f]_{B,s}$ of $f$ by
\begin{align*}
	[f]_{B,s}^2:=\int_{0}^{1}t^{-2s}w(f,t)^2\frac{\dd t}{t},
\end{align*}
then the Besov space $B^{s}_{2,2}(\Omega)$ is defined by
\[
		B^{s}_{2,2}(\Omega):= \big\{ f\in L^2(\Omega):\ \|f\|_{B^{s}_{2,2}}^2 := \|f\|_{L^2}^2 + [f]_{B,s}^2 < \infty \big\}.
\]
Let $b>1$, then the substition $t:=b^{-x}$ gives
\[
	\int_{0}^1t^{-2s}w(f,t)^2\frac{\dd t}{t} = (\log b)\int_{0}^{\infty} b^{2sx}w(f,b^{-x})^2 \dd x,
\]
and using the monotonicity of $w(f,t)$ in $t$ we obtain the inequalities
\begin{align*}
	\frac{1}{b^2}\sum_{j=0}^{\infty} b^{2sj}w(f,p^{-j})^2 &\leq \sum_{j=0}^{\infty} b^{2sj}w(f,b^{-j-1})^2
	\leq 
	\sum_{j=0}^{\infty} \int_{j}^{j+1} b^{2sx}w(f,b^{-x})^2\dd x\\
	&\leq \sum_{j=0}^{\infty} b^{2s(j+1)}w(f,b^{-j})^2\leq b^{2s}\sum_{j=0}^{\infty} b^{2sj}w(f,b^{-j})^2,
\end{align*}
which shows that the seminorm
\begin{equation}
f\mapsto\big\|b^{sj}w(f,b^{-j})\big\|_{\ell^2} \label{tmp09}
\end{equation}
is equivalent to the above Besov seminorm. If, in addition, the set $\Omega$ has Lipschitz boundary, then
\begin{align*}
		B^s_{2,2}(\Omega) = \big[L^2(\Omega), H^1(\Omega)\big]_{s}\equiv H^s(\Omega),
\end{align*}
and the seminorms $f\mapsto \|b^{sj}w(f,b^{-j})\|_{\ell^2}$ and $[f]_{H^s(\Omega)}$ are equivalent. 

We summarize the above considerations as follows:
\begin{prop}\label{prop31}
	Let $\Omega\subset \R^d$ be a non-empty bounded open subset with Lipschitz boundary, $0<s<s'<1$. For $f\in L^2(\Omega)$
	define
	\begin{align*}
	W^s&:=(W^s_j)_{j\in\N}, &  F^{s,s'}&:=(F^{s,s'}_j)_{j\in\N}, \\
	W^s_j&:=p^{\frac{sj}{d}}w(f,p^{-\frac{j}{d}}),  & F^{s,s'}_j&:=p^{js/d}K_{s'}(f,p^{-\frac{js'}{d}}),\\
	&&K_{s'}(f,t)&=\inf\limits_{g\in H^{s'}(\Omega)} \big(\|f-g\|_{L^{2}(\Omega)}+t\|g\|_{H^{s'}(\Omega)}\big),
	\end{align*}
then $\|\cdot\|_{L^2}+[\cdot ]_{H,s}$ given by \eqref{eq-fhs} and $f\mapsto \|f\|_{L^2}+\|W^s\|_{\ell^2}$ and $f\mapsto \|f\|_{L^2}+\|F^{s,s'}\|_{\ell^2}$
are equivalent norms on $H^s(\Omega)$.
\end{prop}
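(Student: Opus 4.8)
The three candidate norms are nothing but the discretized seminorms already introduced in this subsection, specialized to convenient values of the free base $b>1$; hence my plan is to reduce everything to parameter matching and to the equivalences already recorded for Lipschitz $\Omega$. Concretely, I would show that each of the three norms is equivalent to the intrinsic $H^s(\Omega)$-norm, after which their mutual equivalence is automatic. The first norm $\|\cdot\|_{L^2}+[\cdot]_{H,s}$ requires no work: it is equivalent to $\|\cdot\|_{H^s(\Omega)}$ by the definition of $H^s(\Omega)$ together with the Gagliardo characterization \eqref{eq-fhs}, which is available precisely because $\Omega$ has Lipschitz boundary.

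For the second norm I would take $b:=p^{1/d}$ in the discretized modulus of smoothness \eqref{tmp09}; this is admissible since $p\ge 2$ forces $b>1$. Then $b^{sj}w(f,b^{-j})=p^{sj/d}w(f,p^{-j/d})=W^s_j$, so $\|W^s\|_{\ell^2}$ coincides with the discretized Besov seminorm $\big\|b^{sj}w(f,b^{-j})\big\|_{\ell^2}$ up to the single term $j=0$, which is harmless because $w$ from \eqref{wft} satisfies $W^s_0=w(f,1)\le 2\|f\|_{L^2(\Omega)}$. By the stated equivalence of \eqref{tmp09} with the Besov seminorm and of $B^s_{2,2}(\Omega)$ with $H^s(\Omega)$ for Lipschitz $\Omega$, the seminorm $\|W^s\|_{\ell^2}$ is equivalent to $[f]_{H^s(\Omega)}$, and hence $f\mapsto\|f\|_{L^2}+\|W^s\|_{\ell^2}$ is equivalent to the $H^s(\Omega)$-norm.

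For the third norm I would apply the interpolation machinery \eqref{eq-kf}--\eqref{norm-yxs} with $Y:=L^2(\Omega)$, $X:=H^{s'}(\Omega)$, interpolation exponent $\theta:=s/s'\in(0,1)$, and base $b:=p^{s'/d}>1$. With these choices the $K$-functional of \eqref{eq-kf} is exactly $K_{s'}$, and the discretized sequence becomes $b^{\theta j}K_{s'}(f,b^{-j})=p^{js/d}K_{s'}(f,p^{-js'/d})=F^{s,s'}_j$, where I have used $(s'/d)\,\theta=s/d$. Therefore $\|f\|_{L^2}+\|F^{s,s'}\|_{\ell^2}$ is, up to equivalence, the norm of $[L^2(\Omega),H^{s'}(\Omega)]_{s/s'}$, which by \eqref{Hss'} coincides with $H^s(\Omega)$ with equivalent norms (here $s'<1$ guarantees $s/s'\in(0,1)$ and legitimizes the use of $H^{s'}(\Omega)$).

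Combining the three steps, all three norms are equivalent to $\|\cdot\|_{H^s(\Omega)}$, so they are mutually equivalent. There is no genuine obstacle here, the statement being a repackaging of the material preceding it; the only points demanding care are the arithmetic of the exponents in the two parameter identifications (the choices $b=p^{1/d}$ and $b=p^{s'/d}$, and the cancellation $(s'/d)(s/s')=s/d$) and the observation that shifting the summation index between $j\ge 0$ and $j\in\N$ changes each seminorm only by a quantity controlled by $\|f\|_{L^2(\Omega)}$, which is already present in every norm.
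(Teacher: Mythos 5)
Your argument is correct and is exactly what the paper does: Proposition \ref{prop31} is stated there as a summary of the preceding considerations in Subsection \ref{sec31}, and the remark immediately following it records precisely your parameter identifications $b=p^{1/d}$ in \eqref{tmp09} and $b=p^{s'/d}$ in \eqref{eq-kf}. Your additional observations (handling the $j=0$ term via $w(f,1)\le 2\|f\|_{L^2}$ and checking $s/s'\in(0,1)$) are minor points the paper leaves implicit, but they do not constitute a different route.
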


Remark that the definitions of $W^s_j$ and $F^{s,s'}_j$ correspond to the choice $b:=p^{\frac{1}{d}}$ in \eqref{tmp09} and $b:=b:=p^{\frac{s'}{d}}$ in \eqref{eq-kf}, respectively.

Finally, recall that if $\Omega$ is bounded with Lipschitz boundary and $0\le s<\frac{1}{2}$, then $C^\infty_c(\Omega)$ is dense in $H^s(\Omega)$, see e.g. Eq.~(2.220) in \cite{mitrea}.

%
%


\subsection{Multiscale decompositions}\label{ssmult}

Let $\Omega\subset \R^d$ be a non-empty bounded open subset. Our next aim is to decompose $\Omega$ in a very special (but still quite natural) way. We adapt the construction proposed in \cite{msv} for $p=2$, which is in turn
a geometric realization of the approximation spaces used in the wavelet analysis, see e.g. \cite[Chapter 2]{cohen} or \cite[Chapter 2]{meyer}.

 By a \emph{$p$-multiscale decomposition} of $\Omega$ we mean a collection $(\Omega_{n,k})_{n\in\N_0, k\in\{0,\dots,p^n-1\}}$ of non-empty subsets of $\Omega$ such that
	\begin{enumerate}
		\item[(A1)] $\Omega_{0,0}=\Omega$,
		\item[(A2)] for any $n\in\N_0$ the sets $\displaystyle\Omega_{n,0}, \dots, \Omega_{n,p^{n}-1}$ are mutually disjoint,
		\item[(A3)] for any $n\in \N_0$ and $k=0,\dots,p^n-1$ one has
		\begin{gather*}
			\!\!
		\Omega_{n+1,pk+j}\subset \Omega_{n,k} \text{ for any $j\in\{0,\dots,p-1\}$,}\quad
		\Big| \Omega_{n,k}\setminus \bigcup\limits_{j=0}^{p-1}\Omega_{n+1,pk+j}\Big|=0.
		\end{gather*}
	\end{enumerate}
The above conditions can be viewed as an hierarchical decomposition procedure: one sets $\Omega_{0,0}:=\Omega$, and if for some $n$ all $\Omega_{n,k}$ are already constructed, then
one decomposes each $\Omega_{n,k}$ (up to zero measure sets) into $p$ disjoint pieces $\Omega_{n+1,pk+j}$, $j\in\{0,\dots,p-1\}$. In order to have a control of the size of $\Omega_{n,k}$ we introduce further classes of conditions.

A decomposition $(\Omega_{n,k})$ is \emph{weakly balanced} if
\begin{enumerate}
		\item[(A4)] there is  $C_0\ge 1$ such that
		\[
		\dfrac{1}{C_0} \frac{|\Omega|}{p^n} \leq |\Omega_{n,k}|\leq 	C_0\, \frac{|\Omega|}{p^n}
		\]
		 for all $n\in \N_0$ and $k\in\{0,\dots,p^n-1\}$,
\end{enumerate}
and is \emph{strongly balanced} if it satisfies the stronger condition
\begin{itemize}
	\item[(A4*)] $|\Omega_{n,k}|=\dfrac{|\Omega|}{p^n}$ for all $n\in \N_0$ and $k\in\{0,\dots,p^n-1\}$.
\end{itemize}
Finally, a decomposition $(\Omega_{n,k})$ is called \emph{regular} if it satisfes the following two conditions:
\begin{enumerate}
\item[(A5)] there exists $c_1>0$ such that for all $n\in\N_0$ and $k\in\{0,\dots,p^n-1\}$ one has
\[
\diam \Omega_{n,k} \leq c_1 p^{-\frac{n}{d}},
\]
\item[(A6)]\label{} there exists $c_2>0$ such that for all $h\in \R^d$, $n\in\N_0$ and $k\in\{0,\dots,p^n-1\}$ one has
\[
\big|\Omega_{n,k}\setminus(\Omega_{n,k}+h)\big| \leq c_2 |h|p^{-\frac{n(d-1)}{d}}.
\]
\end{enumerate}
Very roughly, the last two conditions say that the shape of $\Omega_{n,k}$ cannot become
``too complicated'' for large $n$.
For the rest of the subsection we assume that:
\begin{equation}
\framebox{\parbox{95mm}{$\Omega\subset\R^d$ is a bounded open set with Lipschitz boundary which admits a regular weakly balanced $p$-multiscale decomposition $\cO:=(\Omega_{n,k})$.
}}
\label{eq-omega}	
\end{equation}

 This covers a large class of $\Omega$: we refer to Section \ref{sec52} for a more detailed discussion.
We are going to establish further properties of $\cO$.
	
\begin{lemma}\label{Mnk} Under the assumption \eqref{eq-omega} there holds
	\[
	K:=\sup_{(n,k)}\# \big\{j:\,\dist (\Omega_{n,j},\Omega_{n,k})\leq p^{-\frac{n}{d}}\big\}<\infty.
	\]
\end{lemma}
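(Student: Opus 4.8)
The plan is to run a volume packing argument: for a fixed level $n$ and index $k$, every piece $\Omega_{n,j}$ lying within distance $p^{-n/d}$ of $\Omega_{n,k}$ is squeezed into a single ball whose radius scales like $p^{-n/d}$, and since the pieces at level $n$ are disjoint with comparable volumes, only boundedly many of them can fit. Crucially the $n$-dependence will cancel, giving a bound uniform in $(n,k)$. Only the conditions (A2), (A4) and (A5) should enter; the regularity condition (A6) is not needed for this particular estimate.

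First I would fix $(n,k)$ and a point $x_0\in\Omega_{n,k}$ (which exists, as the pieces are non-empty). If $j$ is such that $\dist(\Omega_{n,j},\Omega_{n,k})\le p^{-n/d}$, then for any $y\in\Omega_{n,j}$ I choose points $a\in\Omega_{n,j}$ and $b\in\Omega_{n,k}$ nearly realizing the distance and estimate
\[
|y-x_0|\le \diam\Omega_{n,j}+|a-b|+\diam\Omega_{n,k}\le (2c_1+1)\,p^{-\frac{n}{d}},
\]
using (A5) twice. Hence all admissible $\Omega_{n,j}$ are contained in the ball $B(x_0,R)$ with $R:=(2c_1+1)p^{-n/d}$.

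The counting step then combines disjointness with the volume bounds. Writing $N$ for the number of admissible $j$, the pieces $\Omega_{n,j}$ are pairwise disjoint by (A2), so together with the lower bound in (A4),
\[
N\,\frac{1}{C_0}\,\frac{|\Omega|}{p^n}\le \sum_{j}|\Omega_{n,j}|=\Big|\bigcup_j \Omega_{n,j}\Big|\le |B(x_0,R)|=\omega_d\,(2c_1+1)^d\,p^{-n},
\]
where $\omega_d$ denotes the volume of the unit ball in $\R^d$. The factors $p^{-n}$ and $p^{n}$ cancel, yielding $N\le C_0\,\omega_d\,(2c_1+1)^d/|\Omega|$, a quantity independent of $n$ and $k$. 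Taking the supremum over $(n,k)$ gives the claimed finiteness of $K$.

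The argument is essentially routine; the only point requiring a little care is the containment step, where $\dist(\,\cdot\,,\cdot\,)$ is an infimum over sets that need not be closed, so I would work with near-minimizers $a,b$ satisfying $|a-b|\le \dist(\Omega_{n,j},\Omega_{n,k})+\eps$ and let $\eps\to 0$ in the above inequality. I expect no substantial obstacle beyond the bookkeeping of the constants.
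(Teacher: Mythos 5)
Your proof is correct and follows essentially the same volume-packing argument as the paper: enclose all admissible pieces $\Omega_{n,j}$ in a single ball of radius $O(p^{-n/d})$ via (A5), then count using disjointness (A2) and the volume lower bound (A4), with the $p^{-n}$ factors cancelling. Your version is in fact marginally tidier, since you fix the ball's center at a single point $x_0\in\Omega_{n,k}$ (absorbing $\diam\Omega_{n,k}$ into the radius) rather than letting the center vary with $j$ as the paper implicitly does.
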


\begin{proof} Let us pick some $(n,k)$. Recall that by assumption (A5) we have the inequality $\diam \Omega_{n,j}\leq c_1 p^{-\frac{n}{d}}$
for all $j$. Now let $j$ be such that $\dist (\Omega_{n,j},\Omega_{n,k})\leq p^{-\frac{n}{d}}$,
then there exist $x_{n,j}\in \Omega_{n,j}$  and $x_{n,k}\in \Omega_{n,k}$ with $|x_{n,j}-x_{n,k}|< 2p^{-\frac{n}{d}}$.
It follows that for any $x\in \Omega_{n,j}$ one has the inequalities
\[
|x-x_{n,k}|\le |x-x_{n,j}|+|x_{n,j}-x_{n,k}|< \diam \Omega_{n,j}+ 2p^{-\frac{n}{d}}\le (c_1+2)p^{-\frac{n}{d}},
\]
which shows the inclusion $\Omega_{n,j}\subset B_{(c_1+2)p^{-\frac{n}{d}}}(x_{n,k})$.
Using $|\Omega_{n,j}|\ge C_0^{-1} |\Omega| p^{-n}$,
the number of possible $j$'s is bounded from above by the number
\[
\dfrac{ \big|B_{(c_1+2)p^{-\frac{n}{d}}}(x_{n,k})\big| }{C_0^{-1}|\Omega| p^{-n}} \equiv
\frac{C_0\pi^{\frac{d}{2}}}{\Gamma(\frac{d\mathstrut}{2}+1)} \dfrac{\big((c_1+2)p^{-\frac{n}{d}}\big)^d}{|\Omega| p^{-n}}
\equiv \frac{C_0\pi^\frac{d}{2}}{\Gamma(\frac{d\mathstrut}{2}+1)}\dfrac{(c_1+2)^d}{|\Omega|}. \qedhere
\]
\end{proof}

For $n\in\N_0$ define
\begin{equation}
	\label{vn0}
\begin{aligned}
	V_n&:=\operatorname{span}\big\{\one_{\Omega_{n,k}}:\ k=0,\dots,p^n-1\big\} \subset L^2(\Omega),\\
    P_n&:=\text{the orthogonal projector on $V_n$ in $L^2(\Omega)$};
\end{aligned}
\end{equation}
in other words,
\[
P_n f=\sum_{k=0}^{p^n-1} \dfrac{ 1 }{|\Omega_{n,k}|}\int_{\Omega_{n,k}} f\dd x\, \one_{\Omega_{n,k}}.
\]
Due to the assumption (A3) for any $n$ we have
\[
\one_{\Omega_{n,k}}=\sum_{j=0}^{p-1}\one_{\Omega_{n+1,pk+j}} \text{ a.e.},
\]
which shows that $(V_n)_{n\in\N_0}$ is a strictly increasing sequence of closed subspaces.
We will be interested in approximating arbitrary $f$ by $P_n f$ with large $n$.

\begin{lemma}\label{lem33}
	For any $f\in L^2(\Omega)$ one has $f=\lim_{n\to \infty}P_n f$.
\end{lemma}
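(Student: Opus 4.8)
The plan is to reduce the statement to the density of $\bigcup_{n}V_n$ in $L^2(\Omega)$ and then to verify this density on the level of uniformly continuous functions, the essential geometric input being the shrinking of the pieces guaranteed by (A5).

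First I would record the structural facts. The subspaces $V_n$ form an increasing chain of closed subspaces of $L^2(\Omega)$, and each $P_n$ is an orthogonal projector, so $\|P_n\|\le 1$ uniformly. For an increasing chain of closed subspaces the projectors $P_n$ converge strongly to the orthogonal projector $P_\infty$ onto $\overline{\bigcup_n V_n}$; hence the claim $P_n f\to f$ for every $f\in L^2(\Omega)$ is equivalent to the density $\overline{\bigcup_n V_n}=L^2(\Omega)$. Because of the uniform bound $\|P_n\|\le 1$, a standard $3\varepsilon$-argument shows that it suffices to prove $P_n g\to g$ for $g$ ranging over the dense subspace $C_c(\Omega)\subset L^2(\Omega)$.

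Next comes the geometric estimate. Iterating (A3) gives $\big|\bigcup_{k}\Omega_{n,k}\big|=|\Omega|$ for every $n$, so for fixed $n$ the sets $\Omega_{n,0},\dots,\Omega_{n,p^n-1}$ are pairwise disjoint (by (A2)) and cover $\Omega$ up to a null set; thus almost every $x\in\Omega$ lies in exactly one $\Omega_{n,k}$, on which $P_n g$ equals the mean value of $g$ over $\Omega_{n,k}$. For $g\in C_c(\Omega)$, which is uniformly continuous, write $\omega_g(\delta):=\sup_{|x-y|\le\delta}\big|g(x)-g(y)\big|$ for its modulus of continuity. Then for a.e.\ $x\in\Omega_{n,k}$ I would estimate
\[
\big|P_n g(x)-g(x)\big|=\Big|\frac{1}{|\Omega_{n,k}|}\int_{\Omega_{n,k}}\big(g(y)-g(x)\big)\dd y\Big|\le \omega_g(\diam\Omega_{n,k})\le \omega_g\big(c_1 p^{-n/d}\big),
\]
where the last inequality uses (A5). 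Taking the supremum over $x$ and integrating yields $\|P_n g-g\|_{L^2(\Omega)}\le |\Omega|^{1/2}\,\omega_g\big(c_1 p^{-n/d}\big)\to 0$ as $n\to\infty$.

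Finally I would assemble the pieces: given $f\in L^2(\Omega)$ and $\varepsilon>0$, choose $g\in C_c(\Omega)$ with $\|f-g\|_{L^2}<\varepsilon$; then $\|P_n f-f\|\le \|P_n(f-g)\|+\|P_n g-g\|+\|g-f\|\le 2\varepsilon+\|P_n g-g\|$, and letting $n\to\infty$ gives $\limsup_n\|P_n f-f\|\le 2\varepsilon$, whence $P_n f\to f$ as $\varepsilon$ is arbitrary. The only delicate point — the closest thing to an obstacle — is the passage from the mean value of $g$ to its pointwise value: it relies precisely on the fact that the pieces genuinely shrink to points (condition (A5)) and that, for fixed $n$, they tile $\Omega$ up to measure zero, so that $P_n g$ really is the local average almost everywhere. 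I note that the weakly balanced condition (A4) plays no role here beyond ensuring $|\Omega_{n,k}|>0$, so that the averages defining $P_n$ are well posed.
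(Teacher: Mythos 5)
Your proposal is correct and follows essentially the same route as the paper: reduce to a dense class of continuous functions via the uniform bound $\|P_n\|\le 1$ and a $3\varepsilon$-argument, then use (A5) together with uniform continuity to make the piecewise-constant approximation error small. The only cosmetic difference is that you bound $\|g-P_ng\|$ by directly estimating the local averages, while the paper invokes the best-approximation property of $P_n$ against the competitor $g_n=\sum_k g(x_{n,k})\one_{\Omega_{n,k}}\in V_n$; both arguments are equally valid.
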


\begin{proof}
This is an adaptation of \cite[Lemma 4.5]{msv}.
	
(i)	Let $g\in C^0(\overline{\Omega})$ and $\eps>0$. As $g$ is uniformly continuous on $\overline{\Omega}$, one can find some $\delta>0$ such that $|g(x)-g(y)|<\eps$ for all
$x,y\in\Omega$ with $|x-y|<\delta$. By (A5) one can find some $N\in\N$ such that
$\diam \Omega_{n,k}<\delta$ for all $n\ge N$ and all $k$. Now let $n\ge N$ and pick $x_{n,k}\in \Omega_{n,k}$, then for any $x\in \Omega_{n,k}$ there holds
$\big|g(x)-g(x_{n,k})\big|<\eps$. Therefore,
\[
\|g-g_n\|_\infty<\eps \text{ for }g_n:=\sum_{k=0}^{p^n-1} g(x_{n,k})\one_{\Omega_{n,k}}\in V_n,
\]
which yields $\|g-P_ng\|_{L^2(\Omega)}\le \|g-g_n\|_{L^2(\Omega)}\le \|g-g_n\|_\infty^2 \sqrt{|\Omega|}\le \eps \sqrt{|\Omega|}$ for all $n\ge N$.

This shows that $\|g-P_ng\|_{L^2(\Omega)}\xrightarrow{n\to\infty} 0$ for any $g\in C^0(\overline{\Omega})$.

(ii) Let $f\in L^2(\Omega)$. Let $\eps>0$, then there exists $g\in C^0(\overline{\Omega})$
with $\|f-g\|_{L^2(\Omega)}<\eps$. By (i) there is $N\in\N$ such that $\|g-P_ng\|_{L^2(\Omega)}<\eps$ for all $n\ge N$. Then for all $n\ge N$ one has
\begin{align*}
	\|f-P_n f\|_{L^2(\Omega)}&\le \|f-g\|_{L^2(\Omega)}+\|g-P_ng\|_{L^2(\Omega)}+\|P_n(g-f)\|_{L^2(\Omega)}\\
		&\le 2\|f-g\|_{L^2(\Omega)}+\|g-P_ng\|_{L^2(\Omega)}< 3\eps,
\end{align*}
which shows the claim.
\end{proof}

We now introduce  the approximation spaces $A^r(\Omega)$ consisting of the functions $f\in L^2(\Omega)$ such that the speed of convergence in Lemma \ref{lem33} can be controlled in some special way. The construction is standard, see e.g. \cite[Sec. 3.5]{cohen}, but we need to recall the precise role of various parameters.

\begin{defin}
Let $r>0$. For $f\in L^2(\Omega)$ set
\[
\xi:=(\xi_n)_{n\in\N_0},\quad \xi_n:=p^{\frac{nr}{d}} \dist_{L^2(\Omega)}(f,V_n)\equiv  p^{\frac{nr}{d}}\|f-P_n f\|_{L^2(\Omega)}.
\]
Then the \emph{approximation space} $A^r(\Omega)$ and its norm are defined by
\begin{equation}
	A^r(\Omega) = \big\{f\in L^2(\Omega):\ \xi\in \ell^2\big\}, \quad
\|f\|^2_{A^r(\Omega)}=\|P_0f\|^2_{L^2(\Omega)}+\|\xi\|^2_{\ell^2}.
\end{equation}
Remark that
\[
\|f\|^2_{A^r(\Omega)}\ge \|P_0 f\|^2_{L^2(\Omega)}+|\xi_0|^2\equiv
\|P_0 f\|^2_{L^2(\Omega)}+\|f-P_0 f\|^2_{L^2(\Omega)}\equiv \|f\|^2_{L^2(\Omega)}.
\]
\end{defin}
We stress that the space $A^r(\Omega)$ depends on the decomposition $(\Omega_{n,k})$, but it is not reflected in the notation.

For what follows it will be useful to work with another equivalent norm on $A^r(\Omega)$. Using the spaces $V_n$ from \eqref{vn0}, for any $n\in \N_0$ we introduce  
\begin{equation}
	\label{un0}
	U_n:=\begin{cases}
		V_0,&n=0,\\
		V_n\cap V_{n-1}^\perp, & n\ge 1.
	\end{cases}
\end{equation}
By construction this gives the orthogonal decomposition
$L^2(\Omega)=\bigoplus_{n=0}^\infty U_n$, the orthogonal projector $Q_n$
on $U_n$ is given by
\[
Q_n:=\begin{cases}
	P_0, & n=0,\\
	P_n-P_{n-1}, & n\ge 1,
\end{cases}
\]
and for each $n\in\N_0$ and $f\in L^2(\Omega)$ we have
\begin{equation}
	\label{PQod}
f=\sum_{k=0}^\infty Q_k f= P_n f+\sum_{k=n+1}^\infty Q_k f,
\end{equation}
while the summands on the right-hand side are mutually orthogonal. The following result is a particular case of \cite[Thm. 3.5.3]{cohen}, but we include it for the sake of completeness.
\begin{lemma}\label{lem35}
	Let $r>0$. 
	For $f\in L^2(\Omega)$ set
	\[
	\zeta:=(\zeta_n)_{n\in\N_0},\quad \zeta_n:=p^{\frac{nr}{d}}\|Q_nf\|_{L^2(\Omega)}.
	\]
	Then $\|\cdot \|_{A^r(\Omega)}$ and $f\mapsto \|\zeta\|_{\ell^2}$ are equivalent norms on $A^r(\Omega)$. 
\end{lemma}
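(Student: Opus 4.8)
The plan is to reduce the claim to an elementary weighted discrete Hardy inequality for non-negative sequences. Set $a_k:=\|Q_kf\|^2_{L^2(\Omega)}$ and $\beta:=p^{r/d}$, so that $\beta>1$ since $r>0$. The orthogonal decomposition \eqref{PQod} gives $\|f-P_nf\|^2_{L^2(\Omega)}=\sum_{k=n+1}^\infty a_k$, and using $\|P_0f\|^2_{L^2(\Omega)}=\|Q_0f\|^2_{L^2(\Omega)}=a_0$ one has
\[
\|f\|^2_{A^r(\Omega)}=a_0+\sum_{n=0}^\infty \beta^{2n}\sum_{k=n+1}^\infty a_k,
\qquad
\|\zeta\|^2_{\ell^2}=\sum_{n=0}^\infty \beta^{2n}a_n.
\]
As every term is non-negative, all manipulations below are justified by Tonelli's theorem, and in particular the two quantities are simultaneously finite.

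First I would interchange the order of summation in the double sum and evaluate the inner geometric series, obtaining
\[
\sum_{n=0}^\infty \beta^{2n}\sum_{k=n+1}^\infty a_k
=\sum_{k=1}^\infty a_k\sum_{n=0}^{k-1}\beta^{2n}
=\frac{1}{\beta^2-1}\sum_{k=1}^\infty a_k(\beta^{2k}-1).
\]
Then I would invoke the elementary two-sided bound $\frac{\beta^2-1}{\beta^2}\,\beta^{2k}\le \beta^{2k}-1\le \beta^{2k}$, valid for $k\ge 1$, to compare the resulting sum with $\|\zeta\|^2_{\ell^2}=\sum_{k\ge 0}\beta^{2k}a_k$. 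Combining the upper estimate with $a_0=\zeta_0^2\le\|\zeta\|^2_{\ell^2}$, and the lower estimate with $a_0\ge 0$, yields
\[
\frac{1}{\beta^2}\,\|\zeta\|^2_{\ell^2}\le \|f\|^2_{A^r(\Omega)}\le \frac{\beta^2}{\beta^2-1}\,\|\zeta\|^2_{\ell^2},
\]
which is precisely the asserted equivalence of norms.

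There is no genuine obstacle here: the argument is a textbook weighted Hardy inequality, which is exactly why the statement is attributed to \cite[Thm.\ 3.5.3]{cohen}. The only point requiring minor care is that the two defining quantities may a priori be infinite; but since every summand is non-negative, the rearrangement is legitimate and the two-sided estimate also shows that $f\in A^r(\Omega)$, i.e.\ $\xi\in\ell^2$, holds if and only if $\zeta\in\ell^2$, so the two norms are defined on exactly the same space.
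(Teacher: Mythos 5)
Your proof is correct and follows essentially the same route as the paper's: both rest on the identity $\|f-P_nf\|^2=\sum_{k>n}\|Q_kf\|^2$ from \eqref{PQod}, the interchange of the double sum, and the evaluation $\sum_{n=0}^{k-1}\beta^{2n}=(\beta^{2k}-1)/(\beta^2-1)$. The only cosmetic difference is that for the bound $\|\zeta\|_{\ell^2}^2\lesssim\|f\|_{A^r}^2$ the paper uses $\|Q_nf\|\le\|f-P_{n-1}f\|$ and re-indexes, whereas you extract both directions from the same closed-form identity via the elementary bounds on $\beta^{2k}-1$; both yield comparable constants.
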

\begin{proof}
Recall that by definition there holds
\begin{align*}
	\|f\|^2_{A^r(\Omega)}=\|Q_0f\|^2_{L^2}+\sum\limits_{n=0}^{\infty}\|f-P_nf\|^2_{L^2(\Omega)}p^{2\frac{nr}{d}}.
\end{align*}
We have
\begin{align*}
	\|\zeta\|^2_{\ell^2}&\equiv \|Q_0 f\|^2_{L^2(\Omega)}
	+\sum\limits_{n=1}^{\infty}\|Q_nf\|^2_{L^2(\Omega)}p^{2\frac{nr}{d}}\\ 
	&\stackrel{\eqref{PQod}}{\leq}
	\|Q_0 f\|^2_{L^2(\Omega)}+ \sum\limits_{n=1}^{\infty}\|f-P_{n-1}f\|^2_{L^2(\Omega)}p^{2\frac{nr}{d}}\\
	&\equiv \|Q_0 f\|^2_{L^2(\Omega)}+ p^{2\frac{nr}{d}}\sum\limits_{n=0}^{\infty}\|f-P_n f\|^2_{L^2(\Omega)}p^{2\frac{nr}{d}}\\
	&\le p^{2\frac{r}{d}}\Big(  \|Q_0 f\|^2_{L^2(\Omega)}+ \sum\limits_{n=0}^{\infty}\|f-P_n f\|^2_{L^2(\Omega)}p^{2\frac{nr}{d}}\Big)
	\equiv 	p^{2\frac{r}{d}}\|f\|^2_{A^r(\Omega)}.
\end{align*}
At the same time,	
\begin{align*}
	\|f\|^2_{A^r(\Omega)}&=\|Q_0f\|^2_{L^2}+\sum\limits_{n=0}^{\infty}\|f-P_nf\|^2_{L^2(\Omega)}p^{2\frac{nr}{d}}\\
	&\stackrel{\eqref{PQod}}{=}\|Q_0 f\|^2_{L^2(\Omega)}+\sum\limits_{n=0}^{\infty}\sum_{k=n+1}^{\infty}\|Q_kf\|_{L^2(\Omega)}^2p^{2\frac{nr}{d}}\\
	&=\|Q_0 f\|^2_{L^2(\Omega)}+\sum\limits_{k=1}^{\infty} \|Q_kf\|_{L^2(\Omega)}^2 \sum_{n=0}^{k-1}p^{2\frac{nr}{d}}\\
	&=\|Q_0 f\|^2_{L^2(\Omega)}+\sum\limits_{k=1}^{\infty}\|Q_kf\|_{L^2}^2 \frac{p^{2\frac{kr}{d}}-1}{p^{2\frac{r}{d}}-1}\\
	&\leq \|Q_0 f\|^2_{L^2(\Omega)}+\frac{1}{p^{2\frac{r}{d}}-1}\sum\limits_{k=1}^{\infty}p^{2\frac{kr}{d}} \|Q_kf\|_{L^2}^2\\
	&\le\max\Big( 1, \dfrac{1}{p^{2\frac{r}{d}}-1}\Big)\sum_{k=0}^\infty p^{2\frac{kr}{d}}\|Q_k f\|^2_{L^2(\Omega)}
	\equiv \max\Big( 1, \dfrac{1}{p^{2\frac{r}{d}}-1}\Big) \|\zeta\|^2_{\ell^2}. \qedhere
\end{align*}
\end{proof}

\subsection{Relating approximation spaces to Sobolev spaces}\label{sec33}

Now we are going to compare the above equivalent norms on $A^r(\Omega)$ with suitable Sobolev norms.
All results of this section are suitable adaptations of respective results from \cite[Sec. 4]{msv}
for $p=2$ and for slightly different definitions of multiscale decompositions and the spaces $A^r$.

Recall that the seminorm $[\cdot]_{H^s}$ was defined in \eqref{eq-fhs}. 

\begin{lemma}\label{lem45}
For any $0<s<1$ there exists $B>0$ such that for any $n\in \N_0$ and $f\in H^s(\Omega)$ there holds $\|f-P_nf\|_{L^2(\Omega)} \leq B p^{-\frac{ns}{d}}[f]_{H^s(\Omega)}$.
\end{lemma}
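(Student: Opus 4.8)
The plan is to exploit the explicit description of $P_n$ as the piecewise-averaging operator and reduce the estimate to a localized fractional Poincaré inequality on each cell $\Omega_{n,k}$. Writing $\bar f_{n,k}:=\frac{1}{|\Omega_{n,k}|}\int_{\Omega_{n,k}}f\dd x$, the formula for $P_n f$ stated just before the lemma reads $P_n f=\sum_k \bar f_{n,k}\one_{\Omega_{n,k}}$, and since by (A2)--(A3) the cells $\Omega_{n,k}$ are pairwise disjoint up to null sets with union of full measure in $\Omega$, I would split
\[
\|f-P_n f\|_{L^2(\Omega)}^2=\sum_{k=0}^{p^n-1}\int_{\Omega_{n,k}}\big|f(x)-\bar f_{n,k}\big|^2\dd x.
\]

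Next I would rewrite each summand using the elementary variance identity: for any measurable $A$ of finite positive measure and $\bar f:=\frac{1}{|A|}\int_A f\dd x$ one has, by expanding the square,
\[
\int_A |f-\bar f|^2\dd x=\frac{1}{2|A|}\int_A\int_A |f(x)-f(y)|^2\dd x\dd y.
\]
Applied with $A=\Omega_{n,k}$, this turns the local error into a double integral over $\Omega_{n,k}\times\Omega_{n,k}$. On this product set every pair $(x,y)$ satisfies $|x-y|\le \diam\Omega_{n,k}\le c_1 p^{-\frac{n}{d}}$ by the regularity condition (A5), so I can introduce the Gagliardo kernel at the cost of the factor $|x-y|^{d+2s}\le \big(c_1 p^{-\frac{n}{d}}\big)^{d+2s}$:
\[
\int_{\Omega_{n,k}}\int_{\Omega_{n,k}}|f(x)-f(y)|^2\dd x\dd y\le \big(c_1 p^{-\frac{n}{d}}\big)^{d+2s}\int_{\Omega_{n,k}}\int_{\Omega_{n,k}}\frac{|f(x)-f(y)|^2}{|x-y|^{d+2s}}\dd x\dd y.
\]

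Then I would control the remaining measure factor by the weak balance assumption (A4), that is $|\Omega_{n,k}|^{-1}\le C_0 p^n/|\Omega|$, and collect the powers of $p$: the prefactor $\big(c_1 p^{-\frac{n}{d}}\big)^{d+2s}/(2|\Omega_{n,k}|)$ is bounded by $\frac{c_1^{d+2s}C_0}{2|\Omega|}\,p^{-\frac{2ns}{d}}$, since $-\frac{n(d+2s)}{d}+n=-\frac{2ns}{d}$. Summing over $k$ and using that the products $\Omega_{n,k}\times\Omega_{n,k}$ are pairwise disjoint subsets of $\Omega\times\Omega$, the sum of the localized Gagliardo integrals is at most $[f]_{H^s(\Omega)}^2$, which yields
\[
\|f-P_n f\|_{L^2(\Omega)}^2\le \frac{c_1^{d+2s}C_0}{2|\Omega|}\,p^{-\frac{2ns}{d}}\,[f]_{H^s(\Omega)}^2,
\]
so that $B:=\big(c_1^{d+2s}C_0/(2|\Omega|)\big)^{1/2}$ works after taking square roots.

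The statement is thus a localized fractional Poincaré inequality, and I do not expect a serious obstacle. The only points requiring care are the exponent bookkeeping that produces exactly $p^{-\frac{2ns}{d}}$, and the observation that only the \emph{weak} balance (A4) together with the diameter bound (A5) are needed here, whereas (A6) and the stronger condition (A4*) play no role. I would also note that the constant $B$ depends on $s$ through the factor $c_1^{d+2s}$, which is harmless since $s$ ranges over the bounded interval $(0,1)$.
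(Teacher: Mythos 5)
Your proof is correct and follows essentially the same route as the paper's: localize to the cells $\Omega_{n,k}$, introduce the Gagliardo kernel by bounding $|x-y|$ by $\diam\Omega_{n,k}\le c_1p^{-n/d}$ via (A5), control $|\Omega_{n,k}|^{-1}$ by the weak balance (A4), and sum over the disjoint cells. The only (harmless) difference is that you use the exact variance identity where the paper uses Cauchy--Schwarz, which just improves the constant by a factor of $\sqrt{2}$.
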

\begin{proof}
Let $n\in\N_0$ and $k\in\{0,\dots,p^n-1\}$. Recall that $P_n f$ is piecewise constant,
	\begin{align*}
		P_nf (x)=\frac{1}{|\Omega_{n,k}|}\int_{\Omega_{n,k}}f(y)\dd y \text{ as } x\in \Omega_{n,k}.
	\end{align*}
Using Cauchy-Schwarz inequality we estimate
\begin{align*}
	\int_{\Omega_{n,k}}\big|f(x)-P_n f(x)\big|^2\dd x &
	=	\int_{\Omega_{n,k}}\Big|f(x)-\frac{1}{|\Omega_{n,k}|}\int_{\Omega_{n,k}}f(y)\dd y\Big|^2\dd x\\
	&= \int_{\Omega_{n,k}}\Big|\frac{1}{|\Omega_{n,k}|}\int_{\Omega_{n,k}} \big(f(x)-f(y)\big)\dd y\Big|^2\dd x\\
	&
	\le \int_{\Omega_{n,k}} \dfrac{1}{|\Omega_{n,k}|^2}\, |\Omega_{n,k}|
	\int_{\Omega_{n,k}} \big|f(x)-f(y)\big|^2\dd y \dd x\\
	&= \dfrac{1}{|\Omega_{n,k}|}\iint_{\Omega_{n,k}\times \Omega_{n,k}} 	\big|f(x)-f(y)\big|^2\dd x \dd y.
\end{align*}
For all $x,y\in\Omega_{n,k}$ with $x\ne y $ one has $|x-y|\le \diam\Omega_{n,k}$, which gives
\begin{align*}
\int_{\Omega_{n,k}}\big|f(x)-P_n f(x)\big|^2\dd x & \le \dfrac{(\diam\Omega_{n,k})^{d+2s}}{|\Omega_{n,k}|}\iint_{\Omega_{n,k}\times \Omega_{n,k}} 	\dfrac{\big|f(x)-f(y)\big|^2}{|x-y|^{d+2s}}\dd x \dd y\\
&\equiv \dfrac{(\diam\Omega_{n,k})^{d+2s}}{|\Omega_{n,k}|} [f]^2_{H^s(\Omega)}.
\end{align*}
Recall due to the choice of $\Omega_{n,k}$ (see Subsection \ref{ssmult}) and (A4) we have
\[
\dfrac{(\diam\Omega_{n,k})^{d+2s}}{|\Omega_{n,k}|}\le c_1 p^{-\frac{n}{d}},
\quad
|\Omega_{n,k}|\ge \frac{|\Omega|}{C_0 p^n},
\]
therefore,
\[
\dfrac{(\diam\Omega_{n,k})^{d+2s}}{|\Omega_{n,k}|}
\le
\dfrac{C_0(c_1 p^{-\frac{n}{d}})^{d+2s} p^n}{|\Omega|}
=B p^{-\frac{2sn}{d}}, \quad B:= \dfrac{C_0 c_1^{d+2s} }{|\Omega|}
\]
resulting in $\|f-P_n\|^2_{L^2(\Omega_{n,k})}\le B p^{-\frac{2sn}{d}}[f]^2_{H^s(\Omega_{n,k})}$. Then
%
\begin{align*}
		\|f-P_nf\|_{L^2(\Omega)}^2&=\sum_{k=0}^{p^n-1} \|f-P_nf\|_{L^2(\Omega_{n,k})}^2\\
		& \leq B p^{-\frac{2ns}{d}}\sum_{k=0}^{b^n-1}
		[f]^2_{H^s(\Omega_{n,k})}\le  B p^{-\frac{ns}{d}}\|f\|_{H^s(\Omega)}. \qedhere
	\end{align*}
\end{proof}

This allows one to show the first embedding result:
\begin{theorem}\label{thmemb1}
Assume \eqref{eq-omega}, then $H^{s}(\Omega)\hookrightarrow A^{r}(\Omega)$
for any $0\leq r\leq s<1$.
\end{theorem}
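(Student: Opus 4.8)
The plan is to estimate separately the two contributions to
$\|f\|_{A^r(\Omega)}^2=\|P_0 f\|_{L^2(\Omega)}^2+\sum_{n\in\N_0}p^{\frac{2nr}{d}}\|f-P_nf\|_{L^2(\Omega)}^2$.
The first term is harmless, since $\|P_0f\|_{L^2(\Omega)}\le\|f\|_{L^2(\Omega)}\le\|f\|_{H^s(\Omega)}$, so everything reduces to bounding the series by a constant times $\|f\|_{H^s(\Omega)}^2$. I would split the argument according to whether $r<s$ or $r=s$.

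For the subcritical range $0\le r<s$ I would simply feed Lemma \ref{lem45} into the series: it gives $\|f-P_nf\|_{L^2(\Omega)}\le Bp^{-\frac{ns}{d}}[f]_{H^s(\Omega)}$, whence $\sum_n p^{\frac{2nr}{d}}\|f-P_nf\|_{L^2(\Omega)}^2\le B^2[f]_{H^s(\Omega)}^2\sum_n p^{-\frac{2n(s-r)}{d}}$, and the geometric series converges precisely because $s-r>0$. This already yields $H^s(\Omega)\hookrightarrow A^r(\Omega)$ with a constant depending on $s-r$.

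The delicate point is the borderline case $r=s$, where the series above degenerates to $\sum_n 1=\infty$ and the pointwise estimate of Lemma \ref{lem45} is too crude. Here I would sharpen that estimate into a Jackson inequality expressed through the modulus of smoothness \eqref{wft}. Starting from the Cauchy--Schwarz bound produced inside the proof of Lemma \ref{lem45}, namely $\|f-P_nf\|_{L^2(\Omega_{n,k})}^2\le|\Omega_{n,k}|^{-1}\iint_{\Omega_{n,k}\times\Omega_{n,k}}|f(x)-f(y)|^2\dd x\dd y$, I would sum over $k$, use the lower volume bound $|\Omega_{n,k}|\ge|\Omega|/(C_0p^n)$ from (A4), and pass to the variable $h:=y-x$. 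Since the sets $\Omega_{n,k}$ are disjoint with $\diam\Omega_{n,k}\le c_1p^{-\frac{n}{d}}$ by (A5), the resulting $h$-integral ranges only over $|h|\le c_1p^{-\frac{n}{d}}$ and the corresponding spatial integrals add up to at most $w(f,|h|)^2$; bounding the ball volume by a constant times $p^{-n}$ makes the factors $p^{\pm n}$ cancel and yields the clean local Jackson inequality $\|f-P_nf\|_{L^2(\Omega)}\le C'\, w\big(f,c_1p^{-\frac{n}{d}}\big)$.

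With this in hand the critical series becomes $\sum_n p^{\frac{2ns}{d}}w(f,c_1p^{-\frac{n}{d}})^2$, which I would compare to the equivalent Besov seminorm $\|W^s\|_{\ell^2}$ from Proposition \ref{prop31} (the choice $b=p^{1/d}$ being exactly the one appearing there). The only mismatch is the constant $c_1$ inside $w$; using the monotonicity of $t\mapsto w(f,t)$ I would absorb it by an index shift $n\mapsto n-M$ with $p^{M/d}\ge c_1$, the finitely many low-frequency terms so introduced being controlled by the trivial bound $w(f,t)\le 2\|f\|_{L^2(\Omega)}$. Invoking the equivalence $\|W^s\|_{\ell^2}\simeq\|f\|_{H^s(\Omega)}$, valid since $\Omega$ has Lipschitz boundary, then closes the estimate. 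I expect the borderline case to be the main obstacle, the crux being precisely that one must replace the seminorm estimate of Lemma \ref{lem45} by its sharper modulus-of-smoothness version before summing.
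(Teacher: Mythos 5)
Your proof is correct, but it takes a genuinely different route from the paper's in the critical case. The paper avoids any case split: it equips $H^s(\Omega)$ with the interpolation norm $f\mapsto\|f\|_{L^2}+\|F^{s,s'}\|_{\ell^2}$ from Proposition \ref{prop31} and observes that, since $P_j$ is an orthogonal projection, $\|f-P_jf\|_{L^2}\le\|f-g\|_{L^2}+\|g-P_jg\|_{L^2}$ for every $g\in H^{s'}(\Omega)$; combined with Lemma \ref{lem45} applied to $g$ this gives $\|f-P_jf\|_{L^2}\le C\,K_{s'}(f,p^{-js'/d})$, which plugs directly into the weighted $\ell^2$ sum defining $\|\cdot\|_{A^r}$ and settles all $0\le r\le s$ at once (the borderline $r=s$ included). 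You instead handle $r<s$ by the crude geometric-series argument and, for $r=s$, sharpen Lemma \ref{lem45} into a Jackson inequality $\|f-P_nf\|_{L^2(\Omega)}\le C'\,w(f,c_1p^{-n/d})$; your derivation of it is sound — the disjointness of the $\Omega_{n,k}$, the lower volume bound (A4), the diameter bound (A5), and the cancellation of the factors $p^{\pm n}$ all work exactly as you describe, and the constant $c_1$ inside the modulus is indeed absorbed by monotonicity of $t\mapsto w(f,t)$ at the price of finitely many terms controlled by $w(f,t)\le 2\|f\|_{L^2}$. What each approach buys: the paper's K-functional argument is shorter and uniform in $r$, needing only the seminorm estimate of Lemma \ref{lem45} as a black box; your argument is more self-contained at the critical exponent and produces a modulus-of-smoothness estimate that is the exact Jackson counterpart of the Bernstein-type inequality in Lemma \ref{lem46}, making the pairing of Theorems \ref{thmemb1} and \ref{thmemb2} more symmetric. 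Both rest on the norm equivalences of Proposition \ref{prop31} (the paper on the $F^{s,s'}$ norm, you on the $W^s$ norm), hence both use the Lipschitz boundary assumption in \eqref{eq-omega} in the same way.
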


\begin{proof}
We equip $H^s(\Omega)$ with the interpolation norm (Proposition \ref{prop31})
	\[
	\|f\|_{H^s}^2:=\|f\|^2_{L^2(\Omega)}+ \big\|p^{\frac{js}{d}}K_{s'}(f,^{-\frac{js'}{d}})\big\|^2_{\ell^2}, \quad 0<s<s'<1.
	\]
Let $f\in L^2(\Omega)$. As $P_j$ are orthogonal projections, for any $j\in\N_0$  and any $g\in H^{s'}(\Omega)\subset L^2(\Omega)$ one has
	\begin{align*}
		\|f-P_jf\|_{L^2(\Omega)} \leq \|f-P_jg\|_{L^2(\Omega)}\leq \|f-g\|_{L^2(\Omega)} + \|g-P_jg\|_{L^2(\Omega)}.
	\end{align*}
	Using Lemma \ref{lem45} to estimate the last summand we conclude that one can choose
	some $C\ge1$ such that
	\begin{align*}
		\|f-P_jf\|_{L^2(\Omega)} \leq \|f-g\|_{L^2(\Omega)} + C b^{-\frac{js'}{d}}\|g\|_{H^{s'}(\Omega)} \text{ for all }j\in\N_0,
	\end{align*}
	which immediately gives
	\begin{equation}
		\label{fpjf}
		\|f-P_jf\|_{L^2(\Omega)} \leq C K_{s'}(f,b^{-\frac{js'}{d}})
	\end{equation}

	Now let $f\in H^{s}(\Omega)$, then
	\begin{align*}
		\|f\|^2_{A^{r}(\Omega)}&= \|P_0f\|^2_{L^2(\Omega)} + \sum_{j=0}^{\infty} p^{2\frac{jr}{d}}\|f-P_jf\|^2_{L^2(\Omega)}\\
		\text{use \eqref{fpjf}: }	&\leq C\Big( \|f\|^2_{L^2(\Omega)}+ \sum_{j=0}^{\infty} p^{2\frac{jr}{d}} K_{s'}(f,p^{-\frac{js'}{d}})^2 \Big) \\
		&\equiv C\Big( \|f\|^2_{L^2(\Omega)}+ \sum_{j=0}^{\infty} p^{2(r-s)\frac{j}{d}} p^{\frac{2sj}{d}} K_{s'}(f,p^{-\frac{js'}{d}})^2 \Big)\\
		\text{use $r\le s$: }	& \le C\Big( \|f\|^2_{L^2(\Omega)}+ \sum_{j=0}^{\infty}  p^{\frac{2sj}{d}} K_{s'}(f,p^{-\frac{js'}{d}})^2 \Big)\\
		&\equiv C \Big(\|f\|^2_{L^2(\Omega)}+\big\|p^{\frac{js}{d}}K_{s'}(f,^{-\frac{js'}{d}})\big\|^2_{\ell^2}\Big)\equiv
		C\|f\|^2_{H^s(\Omega)}.\qedhere
	\end{align*}
\end{proof}

In order to obtain embedding results in the other direction more work is needed.
Recall that the modulus of smoothness $w(f,t)$ was defined in \eqref{wft}.
\begin{lemma}\label{lem46}
There exists $C\ge 2$ such that for all $n\in \N_0$, $t>0$ and
$f\in V_n$ there holds $w(f,t)\leq C p^{\frac{n}{2d}}\, t^{\frac{1}{2}}\,\|f\|_{L^2(\Omega)}$.
\end{lemma}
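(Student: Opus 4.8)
The plan is to exploit that every $f\in V_n$ is piecewise constant on the cells $\Omega_{n,k}$, so that the difference $f(\cdot+h)-f$ is supported on the thin ``boundary layer'' consisting of those $x$ for which $x$ and $x+h$ fall into two different cells. Writing $f=\sum_{k=0}^{p^n-1}c_k\one_{\Omega_{n,k}}$ and using the elementary inequality $|a-b|^2\le 2|a|^2+2|b|^2$, I would first bound the square of $\|f(\cdot+h)-f\|_{L^2(\Omega_h)}$ by the two integrals $\int_{S_h}|f(x)|^2\dd x$ and $\int_{S_h}|f(x+h)|^2\dd x$, where $S_h$ denotes the set of $x\in\Omega_h$ whose containing cell differs from that of $x+h$ (that the cells partition $\Omega$ up to null sets, by (A2) and (A3), is what makes $f$ well defined cellwise).

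Next I would estimate each of these two integrals. For the first, if $x\in S_h$ lies in $\Omega_{n,k}$ then by definition $x+h\notin\Omega_{n,k}$, i.e. $x\in\Omega_{n,k}\setminus(\Omega_{n,k}-h)$; summing over $k$ and invoking assumption (A6) applied to $-h$ (using $|-h|=|h|$) gives $\int_{S_h}|f(x)|^2\dd x\le c_2|h|p^{-n(d-1)/d}\sum_k|c_k|^2$. The second integral is treated symmetrically after the substitution $y=x+h$: here $y\in\Omega_{n,m}\setminus(\Omega_{n,m}+h)$, so (A6) applies directly and yields the same bound.

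The final step is to convert $\sum_k|c_k|^2$ into $\|f\|_{L^2(\Omega)}^2$. The weak balancing assumption (A4) gives $|\Omega_{n,k}|\ge|\Omega|/(C_0p^n)$, hence $\sum_k|c_k|^2\le (C_0 p^n/|\Omega|)\|f\|_{L^2(\Omega)}^2$. Combining the pieces and using the exponent identity $n-\tfrac{n(d-1)}{d}=\tfrac{n}{d}$ produces $\|f(\cdot+h)-f\|_{L^2(\Omega_h)}^2\le \tfrac{4c_2C_0}{|\Omega|}\,|h|\,p^{n/d}\,\|f\|_{L^2(\Omega)}^2$; taking the supremum over $|h|\le t$ and a square root gives the claim with $C:=\max\{2,\,2\sqrt{c_2C_0/|\Omega|}\}$.

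I expect the only delicate point to be the bookkeeping of the two competing scales: a generic $f\in V_n$ may have coefficients $c_k$ as large as $p^{n/2}\|f\|_{L^2(\Omega)}$, since the cells have volume of order $p^{-n}$, while the jump layer $S_h$ has measure of order $|h|p^{-n(d-1)/d}$. It is precisely the cancellation $p^{n}\cdot p^{-n(d-1)/d}=p^{n/d}$ that yields the correct power of $p$, so I would be careful to track these exponents and, in particular, to apply (A6) with the correct sign of $h$ in each of the two integrals.
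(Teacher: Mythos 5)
Your proof is correct, and it takes a genuinely different (and in fact leaner) route than the paper. The paper estimates $|f(x+h)-f(x)|$ pointwise on each cell $\Omega_{n,k}$ by collecting all coefficients $f_{n,j}$ of cells within distance $|h|$ of $\Omega_{n,k}$, which forces it to invoke the uniform neighbour-count $K$ of Lemma~\ref{Mnk} (via Cauchy--Schwarz on at most $K+1$ terms) and to split into the cases $t<p^{-n/d}$ and $t\ge p^{-n/d}$, since the neighbour count is only controlled for shifts up to $p^{-n/d}$. You instead observe that the difference is supported on the jump layer $S_h$, apply $|a-b|^2\le 2|a|^2+2|b|^2$, and handle the two resulting integrals by the inclusions $S_h\cap\Omega_{n,k}\subset\Omega_{n,k}\setminus(\Omega_{n,k}-h)$ and $(S_h+h)\cap\Omega_{n,m}\subset\Omega_{n,m}\setminus(\Omega_{n,m}+h)$, so that (A6) applies directly (with the correct sign of $h$ in each case, which you track properly). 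This bypasses Lemma~\ref{Mnk} entirely, needs no case distinction in $t$, and yields the cleaner explicit constant $C=\max\{2,\,2\sqrt{c_2C_0/|\Omega|}\}$ versus the paper's $(K+1)\sqrt{C_0c_2/|\Omega|}$; both arguments use (A4) identically in the final step $\sum_k|c_k|^2\le C_0p^n\|f\|_{L^2(\Omega)}^2/|\Omega|$ and the same exponent cancellation $p^n\cdot p^{-n(d-1)/d}=p^{n/d}$. The only point worth flagging is the sign convention in the paper's definition $\Omega_h=\Omega\cap(\Omega+h)$, under which $f(x+h)$ for $x\in\Omega_h$ is not literally guaranteed to be defined; your reading ($x\in\Omega$ and $x+h\in\Omega$) is the intended one, and since the supremum runs over the symmetric set $|h|\le t$ the resulting modulus of smoothness is the same either way.
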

\begin{proof}
Consider first the case $t\in(0,p^{-\frac{n}{d}})$ for some $n\in\N_0$.
Let $f\in V_n$, then
\begin{align}
f&=\sum\limits_{j=0}^{p^n-1}f_{n,j}\one_{\Omega_{n,j}},\quad f_{n,j}\in\C, \nonumber\\
\|f\|_{L^2(\Omega)}^2 &= \sum\limits_{j=0}^{p^n-1} |f_{n,j}|^2|\Omega_{n,j}| \geq  \frac{|\Omega|}{C_0 p^n}\sum\limits_{j=0}^{p^n-1} |f_{n,j}|^2, \label{tmp11}
\end{align}
where we used (A4) in the last step. Recall that by (A6) for any $(n,k)$ there holds 
\begin{equation}
	 \label{tmp12}
\big|\Omega_{n,k}\setminus(\Omega_{n,k}-h)\big|	\equiv
\|\mathbbm{1}_{\Omega_{n,k}-h}-\mathbbm{1}_{\Omega_{n,k}}\|_{L^1(\Omega_{n,k})}\leq c_2 |h|p^{-\frac{n(d-1)}{d}}, \quad h\in \R^d,
\end{equation}
and by Lemma \ref{Mnk} for some $K>0$ we have
\begin{equation}
	\label{tmp-kkk}
\#\big\{j:\  \dist (\Omega_{n,j},\Omega_{n,k})\leq p^{-\frac{n}{d}}\big\}\le K \text{ for any $(n,k)$.}
\end{equation}
Remark that $f(y)=f_{n,k}$ for any $y\in \Omega_{n,k}$. In particular, if $x\in \Omega_{n,k}$
and $x+h\in\Omega_{n,k}$, then $f(x+h)=f(x)$. It follows that for any $x\in \Omega_{n,k}$ one has
\begin{equation}
\begin{aligned}
\big| f(x+h)-f(x)\big|
&=\big| \one_{\Omega_{n,k}}(x+h)-\one_{\Omega_{n,k}}(x)\big| \cdot \big|f(x+h)-f(x)\big|\\
&\equiv
	\big| \one_{\Omega_{n,k}-h}(x)-\one_{\Omega_{n,k}}(x)\big| \cdot \big|f(x+h)-f(x)\big|.
\end{aligned}
   \label{tmp7}
\end{equation}

Now let $h\in\R^d$ with $|h|\le t$. Then for any $x\in \Omega_{n,k}$ we estimate
\begin{align*}
\big| f(x+h)-f(x)\big|^2&=\Big|\sum_{j=0}^{p^n-1} f_{n,j} \one_{\Omega_{n,j}}(x+h)-f_{n,k}\Big|^2\\
&\le \big(|f_{n,k}|+\sum_{j=0}^{p^n-1} |f_{n,j}|  \one_{\Omega_{n,j}}(x+h)\big)^2\\
&\le \big(|f_{n,k}|+\sum_{j:\,\dist(\Omega_{n,j},\Omega_{n,k})\le |h|} |f_{n,j}|\big)^2\\
&\le \big(|f_{n,k}|+\sum_{j:\,\dist(\Omega_{n,j},\Omega_{n,k})\le p^{-\frac{n}{d}}} |f_{n,j}|\big)^2\\
\text{use \eqref{tmp-kkk}} &\text{ and Cauchy-Schwarz: }\\
&\le (K+1)\Big( |f_{n,k}|^2+\sum_{j:\,\dist(\Omega_{n,j},\Omega_{n,k})\le p^{-\frac{n}{d}}} |f_{n,j}|^2\Big)
\end{align*}
Using this inequality on the right-hand side of \eqref{tmp7} we obtain
\begin{align*}
\int_{\Omega_{n,k}}	&\big| f(x+h)-f(x)\big|^2\dd x\le 
(K+1) \|\one_{\Omega_{n,k}-h}-\one_{\Omega_{n,k}}\|_{L^1(\Omega_{n,k})}\\
&\qquad\qquad\qquad \cdot \Big( |f_{n,k}|^2+\sum_{j:\dist(\Omega_{n,j},\Omega_{n,k})\le p^{-\frac{n}{d}}} |f_{n,j}|^2\Big)\\
\text{use \eqref{tmp12}: }
&\le(K+1)c_2|h|p^{-\frac{n(d-1)}{d}}\Big( |f_{n,k}|^2+\sum_{j:\dist(\Omega_{n,j},\Omega_{n,k})\le p^{-\frac{n}{d}}} |f_{n,j}|^2\Big).
\end{align*}
By summing over all $k$ we arrive at 
\begin{align*}
\int_{\Omega}	\big| f(x+h)&-f(x)\big|^2\dd x=\sum_{k=0}^{p^n-1} \int_{\Omega_{n,k}}	\big| f(x+h)-f(x)\big|^2\dd x\\
{}\le(K+1)&c_2|h|p^{-\frac{n(d-1)}{d}}
\sum_{k=0}^{p^n-1} \Big( |f_{n,k}|^2+\sum_{j:\,\dist(\Omega_{n,j},\Omega_{n,k})\le p^{-\frac{n}{d}}}|f_{n,j}|^2\Big)\\
\text{use \eqref{tmp-kkk}: }&\le(K+1)^2c_2|h|p^{-\frac{n(d-1)}{d}}\sum_{k=0}^{p^n-1} |f_{n,k}|^2\\
\text{use \eqref{tmp11}: }&\le (K+1)^2c_2|h|p^{-\frac{n(d-1)}{d}}\dfrac{C_0 p^n}{|\Omega|} \|f\|^2_{L^2(\Omega)}\\
&\le c^2 p^{\frac{n}{d}}|h|\,\|f\|^2_{L^2(\Omega)}, \quad c:=(K+1)\sqrt{\dfrac{C_0 c_2}{|\Omega|}}.
\end{align*}
%
The last estimate holds for all $h$ with $|h|\le t$,  which yields
\[
w(f,t)\le c p^{\frac{n}{2d}}\,t^{\frac{1}{2}}\,\|f\|^2_{L^2(\Omega)} \text{ for all $t\in(0,p^{-\frac{n}{d}})$.}
\]
For $t\ge p^{-\frac{n}{d}}$ we simply estimate $w(t,f)\leq 2 \|f\|_{L^2(\Omega)}\le 2 p^{\frac{n}{2d}}\,t^{\frac{1}{2}}\|f\|_{L^2(\Omega)}$ and obtain the claim by taking  $C:=\max(2,c)$.
\end{proof}

\begin{theorem}\label{thmemb2} Let \eqref{eq-omega} be satisfied, then for any $0<s<\frac{1}{2}$ and $s\le r$ one has $A^r(\Omega)\hookrightarrow H^{s}(\Omega)$.
\end{theorem}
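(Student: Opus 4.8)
The plan is to compare the equivalent norm on $A^r(\Omega)$ furnished by Lemma~\ref{lem35} with the Besov realization of $H^s(\Omega)$. Since $\Omega$ has Lipschitz boundary, Proposition~\ref{prop31} (applied with $b:=p^{1/d}$) tells us that $\|f\|_{H^s(\Omega)}$ is equivalent to $\|f\|_{L^2(\Omega)}+\|W^s\|_{\ell^2}$, where $W^s_j:=p^{\frac{sj}{d}}w(f,p^{-j/d})$ and $w$ is the modulus of smoothness \eqref{wft}. Writing $a_k:=\|Q_kf\|_{L^2(\Omega)}$ and $\zeta_k:=p^{\frac{kr}{d}}a_k$, Lemma~\ref{lem35} says that $f\in A^r(\Omega)$ iff $(\zeta_k)\in\ell^2$, with $\|\zeta\|_{\ell^2}$ equivalent to $\|f\|_{A^r(\Omega)}$. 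So it suffices to bound $\|W^s\|_{\ell^2}$ by $C\|\zeta\|_{\ell^2}$; the $L^2$-term is harmless since $a_k\le\zeta_k$ gives $\|f\|_{L^2(\Omega)}=\big(\sum_k a_k^2\big)^{1/2}\le\|\zeta\|_{\ell^2}$.

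First I would decompose $f=\sum_{k=0}^\infty Q_kf$ (convergent in $L^2(\Omega)$ by \eqref{PQod}) and use the subadditivity of $w(\cdot,t)$ together with its $2\|\cdot\|_{L^2}$-Lipschitz dependence on the argument to pass to the infinite sum, obtaining $w(f,t)\le\sum_{k=0}^\infty w(Q_kf,t)$. Evaluating at $t=p^{-j/d}$, I would estimate each term by the better of two bounds: for $k\le j$ I use Lemma~\ref{lem46} (with $n=k$, since $Q_kf\in V_k$), giving $w(Q_kf,p^{-j/d})\le Cp^{\frac{k-j}{2d}}a_k$, while for $k>j$ I use the trivial bound $w(Q_kf,p^{-j/d})\le 2a_k$. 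After multiplying by $p^{\frac{sj}{d}}$ and substituting $a_k=p^{-\frac{kr}{d}}\zeta_k$, this yields $W^s_j\le C\sum_{k=0}^{j}K_{j,k}\zeta_k+2\sum_{k>j}L_{j,k}\zeta_k$ with $K_{j,k}=p^{\frac{j}{d}(s-\frac12)+\frac{k}{d}(\frac12-r)}$ and $L_{j,k}=p^{\frac{j}{d}s-\frac{k}{d}r}$.

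The core of the argument is then to show that the two triangular kernels $K$ and $L$ define bounded operators on $\ell^2$, which I would do by the Schur test: one checks that the row sums $\sum_{k\le j}K_{j,k}$, $\sum_{k>j}L_{j,k}$ and the column sums $\sum_{j\ge k}K_{j,k}$, $\sum_{j<k}L_{j,k}$ are all uniformly bounded, each reducing to a geometric series. The surviving exponents are all of the form $p^{\frac{m}{d}(s-r)}$ with $m\in\{j,k\}$, which stay bounded precisely because $s\le r$; the single place where the strict inequality $s<\frac12$ is indispensable is the column sum $\sum_{j\ge k}K_{j,k}=p^{\frac{k}{d}(\frac12-r)}\sum_{j\ge k}p^{\frac{j}{d}(s-\frac12)}$, whose inner geometric series converges if and only if $s<\frac12$ (and then equals $Cp^{\frac{k}{d}(s-\frac12)}$, making the column sum $Cp^{\frac{k}{d}(s-r)}$, bounded since $s\le r$). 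This is exactly the analytic reflection of the fact that the indicator functions $\one_{\Omega_{n,k}}$ spanning $V_n$ lie in $H^s$ only for $s<\frac12$. Granting the Schur bounds gives $\|W^s\|_{\ell^2}\le C\|\zeta\|_{\ell^2}$, and combining this with the $L^2$-estimate and Proposition~\ref{prop31} yields $\|f\|_{H^s(\Omega)}\le C\|f\|_{A^r(\Omega)}$, proving the claimed embedding. I expect the main obstacle to be the bookkeeping of the Schur test across the different regimes $r<\frac12$, $r=\frac12$, $r>\frac12$ in the row sum of $K$ (where one must split the geometric series differently in each case) together with the careful justification of the countable subadditivity of $w(\cdot,t)$.
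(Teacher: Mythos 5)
Your argument is correct and rests on the same two pillars as the paper's proof: the Besov-norm realization of $H^s(\Omega)$ from Proposition~\ref{prop31} and the modulus-of-smoothness bound for elements of $V_n$ from Lemma~\ref{lem46}, with the hypotheses $s<\frac12$ and $s\le r$ entering through essentially the same exponent bookkeeping. The structural differences are real but modest. The paper avoids your countable-subadditivity step by writing, for each fixed $j$, the \emph{finite} decomposition $f=\sum_{k=0}^{j}Q_kf+(f-P_jf)$ and bounding the remainder by $w(f-P_jf,t)\le C\|f-P_jf\|_{L^2}$; it then uses $\|Q_kf\|_{L^2}\le\|f-P_{k-1}f\|_{L^2}$ so as to work with the defining $P$-based norm of $A^r(\Omega)$ rather than the $Q$-based norm of Lemma~\ref{lem35}, and the single resulting triangular kernel is a genuine convolution with $\alpha_j=p^{(s-\frac12)\frac{j}{d}}$, handled by Young's inequality $\|\alpha\ast\beta\|_{\ell^2}\le\|\alpha\|_{\ell^1}\|\beta\|_{\ell^2}$, where $\|\alpha\|_{\ell^1}<\infty$ is precisely where $s<\frac12$ enters. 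Your version trades this for the $Q$-norm plus a two-kernel Schur test; since an $\ell^1$ convolution kernel is the special case of the Schur test in which row and column sums coincide, the two summation lemmas are interchangeable here, and your Schur bounds check out (all four sums reduce to geometric series controlled by $s\le r$ and $s<\frac12$, exactly as you describe). What your route costs is the justification of $w(f,t)\le\sum_kw(Q_kf,t)$ — which does hold, by the finite triangle inequality for $w(\cdot,t)$ together with $w(g,t)\le 2\|g\|_{L^2}$ applied to the $L^2$-convergent tails, as you indicate — and the case analysis $r\lessgtr\frac12$ in the row sums of $K$, which the paper sidesteps because its pointwise exponent estimate $\frac{sj}{d}+\frac{k}{2d}-\frac{j}{2d}-\frac{rk}{d}\le(s-\frac12)\frac{j-k}{d}$ absorbs the $r$-dependence before any summation. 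Both proofs are complete and of comparable length.
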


\begin{proof}
We equip $H^s(\Omega)$ with the Besov norm
\begin{align*}
	\|f\|^2_{B^s} := \|f\|^2_{L^2(\Omega)} + \big\| p^{\frac{sj}{d}}w(f, p^{-\frac{j}{d}})\big\|_{\ell^2}^2
\end{align*}
and recall that for any $t>0$ the modulus of smoothness $w(f,t)$ satisfies the triangle inequalty with respect to $f$. For each $j$ we have
\begin{align*}
f&=\sum\limits_{k=0}^{j}Q_k f+(f-P_jf),\\
	w(f,p^{-\frac{j}{d}})&\le \sum_{k=0}^j w(Q_k,p^{-\frac{j}{d}}) +w(f-P_j f,p^{-\frac{j}{d}}).
\end{align*}
Remark that $Q_kf\in V_k$, so by Lemma \ref{lem46} we estimate with some $C\ge 2$
\begin{align*}
	w(Q_k,p^{-\frac{j}{d}})&\le C p^{\frac{k}{2d}-\frac{j}{2d}}\|Q_k f\|_{L^2(\Omega)},
	\quad
	w(f-P_j f,p^{-\frac{j}{d}})\le C\|f-P_j f\|_{L^2(\Omega)}.
\end{align*}
In addition, for $k\ge 1$ we have $P_{k}P_{k-1}=P_{k-1}$, therefore,
\[
\|Q_k f\|_{L^2(\Omega)}= \|P_k f - P_{k-1}f\|_{L^2(\Omega)}=\|P_k (f-P_{k-1}f)\|_{L^2(\Omega)}\le\|f-P_{k-1}f\|_{L^2(\Omega)},
\]
and then (recall that $Q_0=P_0$)
\begin{align*}
w(f,p^{-\frac{j}{d}})&\le Cp^{-\frac{j}{2d}}\|P_0 f\|_{L^2(\Omega)}\\
&\quad
+C\sum_{k=1}^j p^{\frac{k}{2d}-\frac{j}{2d}}\|f-P_{k-1}f\|_{L^2(\Omega)}
+C \|f-P_j f\|_{L^2(\Omega)}\\
&=Cp^{-\frac{j}{2d}}\|P_0 f\|_{L^2(\Omega)}\\
&\quad
+Cp^{\frac{1}{2d}}\sum_{k=0}^{j-1} p^{\frac{k}{2d}-\frac{j}{2d}}\|f-P_kf\|_{L^2(\Omega)}
+C \|f-P_j f\|_{L^2(\Omega)}\\
&\le Bp^{-\frac{j}{2d}}\|P_0 f\|_{L^2(\Omega)} +B\sum_{k=0}^{j} p^{\frac{k}{2d}-\frac{j}{2d}}\|f-P_kf\|_{L^2(\Omega)}
\end{align*}
with $B:=C p^{\frac{1}{2d}}>C$. 
It follows that for any $j$ one has
\begin{equation}
	\label{bff}
	p^{\frac{sj}{d}}w(f,p^{-\frac{j}{d}})\le B(F'_j +F_j)
\end{equation}	
with sequences $F':=(F'_j)$ and $F:=(F_j)$ given by
\begin{align*}
	F'_j&:=p^{\frac{s j}{d}-\frac{j}{2d}}\|P_0 f\|_{L^2(\Omega)}\equiv
	p^{(s-\frac{1}{2})\frac{j}{d}}\|P_0 f\|_{L^2(\Omega)},\\
	F_j&:=p^{\frac{s j}{d}}\sum_{k=0}^{j} p^{\frac{k}{2d}-\frac{j}{2d}}\|f-P_kf\|_{L^2(\Omega)}.
\end{align*}	
Due to $s<\frac{1}{2}$ we have
\begin{gather*}
\|F'\|_{\ell^2}^2=\sum_{j=0}^\infty |F'_j|^2=a\|P_0 f\|^2_{L^2(\Omega)}
\text{ with }
a:=\sum_{j=0}^\infty p^{2(s-\frac{1}{2})\frac{j}{d}}<\infty.
\end{gather*}
In order to control $F_j$ we represent it as
\begin{align*}
	F_j=\sum_{k=0}^{j} p^{\frac{s j}{d}+\frac{k}{2d}-\frac{j}{2d}-\frac{rk}{d}} p^{\frac{rk}{d}}\|f-P_kf\|_{L^2(\Omega)}.
\end{align*}
We estimate the exponents using $s\le r$:
\begin{align*}
\frac{s j}{d}+\frac{k}{2d}-\frac{j}{2d}-\frac{rk}{d}&=\Big(s-\frac{1}{2}\Big)\frac{j}{d}+\Big(\frac{1}{2}-r\Big)\frac{k}{d}\\
&\le \Big(s-\frac{1}{2}\Big)\frac{j}{d}+\Big(\frac{1}{2}-s\Big)\frac{k}{d}
\equiv \Big(s-\frac{1}{2}\Big)\frac{j-k}{d},
\end{align*}
and arrive at
\begin{equation}
	\label{fj2}
F_j\le\sum_{k=0}^{j} p^{(s-\frac{1}{2})\frac{j-k}{d}} p^{\frac{rk}{d}}\|f-P_kf\|_{L^2(\Omega)}.
\end{equation}
Define $\alpha:=(\alpha_j)_{j\in\Z}$ and $\beta:=(\beta_j)_{j\in\Z}$ by
\begin{align*}
	\alpha_j:=\begin{cases}
		0,& j< 0,\\
		p^{(s-\frac{1}{2})\frac{j}{d}}, & j\ge 0,
	\end{cases}
\qquad
	\beta_j:=\begin{cases}
	0,& j< 0,\\
	p^{\frac{rj}{d}}\|f-P_jf\|_{L^2(\Omega)}, & j\ge 0,
\end{cases}
\end{align*}
then \eqref{fj2} takes the form $F_j\le (\alpha\ast\beta)_j$ for any $j\in\N_0$, with $\ast$
being the convolution product. Using Young's convolution inequality we obtain
\begin{align*}
	\|F\|_{\ell^2}\le \|\alpha\ast \beta\|_{\ell^2(\Z)}\le \|\alpha\|_{\ell^1(\Z)}\|\beta\|_{\ell^2(\Z)}
\end{align*}	
Then it follows by \eqref{bff} that
\begin{align*}
\|f\|_{B^s}^2=\big\|p^{\frac{sj}{d}}w(f,p^{-\frac{j}{d}})\big\|_{\ell^2}^2
&\le 2B\|F'\|_{\ell^2}+2B\|F\|_{\ell^2}^2\\
&\le 2B a\|P_0 f\|_{L^2(\Omega)}^2
+2B\|\alpha\|_{\ell^1(\Z)}^2\|\beta\|_{\ell^2(\Z)}^2.
\end{align*}
Recall that $\|P_0 f\|^2_{L^2(\Omega)}+\|\beta\|^2_{\ell^2(\Z)}=\|f\|^2_{A^r(\Omega)}$, therefore,
\[
\|f\|^2_{B^s}\le 2B\max( a, \|\alpha\|_{\ell^1(\Z)}^2)\|f\|_{A^r(\Omega)} \text{ for all }
f\in A^r(\Omega). \qedhere
\]
\end{proof}

By combining Theorems \ref{thmemb1} and \ref{thmemb2} we obtain the following result.
\begin{corol}
	Under the assumption \eqref{eq-omega} there holds
	\[
	A^r(\Omega)=H^{r}(\Omega) \text{ for } 0\le r<\frac{1}{2}.
	\]	
\end{corol}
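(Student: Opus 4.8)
The plan is to obtain the asserted identity of spaces by specializing both embedding theorems to the diagonal $s=r$, since Theorems~\ref{thmemb1} and~\ref{thmemb2} furnish continuous inclusions in opposite directions whose admissible parameter ranges overlap exactly there. Both $A^r(\Omega)$ and $H^r(\Omega)$ are realized as subspaces of the common ambient space $L^2(\Omega)$, so two one-sided continuous embeddings will immediately yield equality of the underlying sets together with equivalence of the norms.

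First, for a fixed $r$ with $0<r<\frac{1}{2}$ I would invoke Theorem~\ref{thmemb1} with the choice $s:=r$. The admissibility condition $0\le r\le s<1$ is then satisfied, so the theorem yields the continuous embedding $H^{r}(\Omega)\hookrightarrow A^{r}(\Omega)$; concretely, there is a constant $C_1>0$ with $\|f\|_{A^r(\Omega)}\le C_1\|f\|_{H^r(\Omega)}$ for every $f\in H^{r}(\Omega)$, and in particular $H^r(\Omega)\subset A^r(\Omega)$ as subsets of $L^2(\Omega)$.

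Next I would invoke Theorem~\ref{thmemb2}, again with $s:=r$. Here the hypotheses $0<s<\frac{1}{2}$ and $s\le r$ collapse to $0<r<\frac{1}{2}$, which is exactly our assumption, so the theorem provides the reverse continuous embedding $A^{r}(\Omega)\hookrightarrow H^{r}(\Omega)$, that is, a constant $C_2>0$ with $\|f\|_{H^r(\Omega)}\le C_2\|f\|_{A^r(\Omega)}$ for all $f\in A^r(\Omega)$, and hence $A^r(\Omega)\subset H^r(\Omega)$. Combining the two set inclusions gives $A^{r}(\Omega)=H^{r}(\Omega)$, while the two norm estimates show that $\|\cdot\|_{A^r(\Omega)}$ and $\|\cdot\|_{H^r(\Omega)}$ are equivalent.

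There is no genuine obstacle in this argument: the whole content has already been carried in the proofs of Theorems~\ref{thmemb1} and~\ref{thmemb2}, and the corollary is a pure bookkeeping combination of the two. The only point requiring any care is to confirm that the single choice $s=r$ is simultaneously admissible in both statements, and it is precisely this that pins the common range down to $0<r<\frac{1}{2}$; the constant endpoint $r=0$ falls outside the reach of Theorem~\ref{thmemb2} and would have to be treated separately, so the substantive assertion is the one for $0<r<\frac{1}{2}$.
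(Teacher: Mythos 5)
Your proposal is correct and is exactly the paper's argument: the corollary is stated with no proof beyond ``combining Theorems~\ref{thmemb1} and~\ref{thmemb2}'', i.e.\ specializing both embeddings to $s=r$ on the common range $0<r<\tfrac{1}{2}$. Your extra remark about the endpoint $r=0$ (where Theorem~\ref{thmemb2} does not apply, and where the definition of $A^r(\Omega)$ is only stated for $r>0$) is a fair observation that the paper glosses over.
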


\subsection{Extension to open sets in manifolds}\label{sec44}
Let  $(S,g)$ be an $d$-dimensional Riemannian manifold of bounded geometry. For the construction of Sobolev spaces $H^s(S)$ with $s\ge 0$ we refer to \cite{gs}.
Let  $\Omega\subset S$ be a non-empty open set such that $\overline{\Omega}\subset S$ is compact (the case $\Omega=S$ is possible, if $S$ itself is compact). 
The Sobolev space  $H^s(\Omega)$ on $\Omega$ is then
defined as the space of the restrictions on $\Omega$ of the functions from $H^s(S)$ with the quotient norm
\[
\|f\|_{H^s(\Omega),*}=\inf_{F\in H^s(S),\  F|_\Omega=f}\|F\|_{H^s(S)}.
\]
It follows from the general construction of Sobolev spaces that:
\begin{itemize}
	\item for any open $\Omega_0\subset \Omega$ the linear map $H^s(\Omega) \ni f\mapsto f|_{\Omega_0}\in H^s(\Omega_0)$ is bounded.
	\item if for some local chart $\Phi:\R^d\supset \Tilde O\mapsto O\subset S$  one has $\overline\Omega\subset O$, then the map $f\mapsto f\circ\Phi$
	is an isomorphism between $H^s(\Omega)$ and $H^s(\Tilde \Omega)$, with $\Tilde\Omega:=\Phi^{-1}(\Omega)\subset\R^d$.
\end{itemize}

We say that non-empty subsets $(\Omega_{n,k})_{n\in\N_0, k=0,\dots,p^n-1} \subset \Omega$
form a  \emph{$p$-multiscale decomposition of $\Omega$}, if the following conditions hold:
	\begin{enumerate}
	\item[(B1)] $\Omega_{0,0}=\Omega$,
	\item[(B2)] for any $n\in\N_0$ the sets $\displaystyle\Omega_{n,0}, \dots, \Omega_{n,p^{n}-1}$ are mutually disjoint,
	\item[(B3)] for any $n\in \N_0$ and $k=0,\dots,p^n-1$ one has
	\begin{gather*}
		\!\!\!
		\Omega_{n+1,pk+j}\subset \Omega_{n,k} \text{ for any $j\in\{0,\dots,p-1\}$,}\quad
		\Big| \Omega_{n,k}\setminus \bigcup\limits_{j=0}^{p-1}\Omega_{n+1,pk+j}\Big|=0.
	\end{gather*}
\end{enumerate}
This decomposition is called \emph{regular and weakly balanced} if it satisfies additionally:
\begin{enumerate}
\item[(B4)] For some is $N\in\N_0$ each $\overline{\Omega_{N,K}}$
with $K\in\{0,\dots,p^N-1\}$ is covered by a local chart
$\Phi_{N,K}$ on $S$ such that the sets
\[
\Tilde\Omega_{N,K}:=\Phi_{N,K}^{-1}(\Omega_{N,K})
\]
are bounded open sets with Lipschitz boundaries in $\R^d$.	
\item[(B5)] for each $K\in\{0,\dots,p^N-1\}$ the sets
\[
\Tilde \Omega^{N,K}_{n,k}:=\Tilde \Omega_{N+n,p^n K+k}, \quad n\in\N_0, \quad k\in\{0,\dots,p^n-1\},
\]
form a \emph{regular weakly balanced $p$-multiscale decomposition of $\Tilde\Omega_{N,K}$,}
\end{enumerate}
and it is called \emph{regular and strongly balanced} if one has in addition, for $N$ from (B4),
\begin{enumerate}
	\item[(B6)] $|\Omega_{N,K}|=p^{-N} |\Omega|$ for all $K\in\{0,\dots,p^N-1\}$,
	\item[(B7)] for each $K\in\{0,\dots,p^N-1\}$ the sets
	\[
	\Tilde \Omega^{N,K}_{n,k}:=\Tilde \Omega_{N+n,p^n K+k}, \quad n\in\N_0, \quad k\in\{0,\dots,p^n-1\},
	\]
	form a \emph{regular strongly balanced $p$-multiscale decomposition of $\Tilde\Omega_{N,K}$.}
\end{enumerate}

For the rest of the subsection we assume that:
\begin{equation}
	\framebox{\parbox{100mm}{$\Omega\subset S$ is an open set with compact closure
			which admits a regular weakly balanced $p$-multiscale decomposition $(\Omega_{n,k})$
	}}
	\label{eq-omega2}	
\end{equation}
and let $N$ and $\Tilde\Omega_{n,k}$ be as in (B4)--(B5). Then the sets
\begin{equation}
	\label{omk}
\Omega^{N,K}_{n,k}:=\Omega_{N+n,p^n K+k},\quad
n\in\N_0, \quad k\in\{0,\dots,p^n-1\},
\end{equation}
form a regular balanced multiscale decomposition of $\Omega_{N,K}$ for  $K\in\{0,\dots,p^N-1\}$.
In addition, the decomposition $(\Omega_{n,k})$ gives rise to the projectors $P_n$ and the spaces $V_n$ and $A^r(\Omega)$ defined in the same way as in the Euclidean case: for $n\in\N_0$ define
\begin{equation}
	\label{pqmanif}
	\begin{aligned}
		V_n&:=\operatorname{span}\big\{\one_{\Omega_{n,k}}:\ k=0,\dots,p^n-1\big\} \subset L^2(\Omega),\\
		U_n&:=\begin{cases}
				V_0,&n=0,\\
				V_n\cap V_{n-1}^\perp, & n\ge 1.
			\end{cases}\\
		P_n&:=\text{the orthogonal projector on $V_n$ in $L^2(\Omega)$},\\
		Q_n&:=\text{the orthogonal projector on $U_n$ in $L^2(\Omega)$}.
	\end{aligned}
\end{equation}
Remark that Lemmas \ref{lem33} and \ref{lem35} are transferred directly to this new setting.
We establish several additional properties of the spaces $A^r(\Omega)$.

\begin{lemma}\label{lem48}
Let $A^r(\Omega_{N,K})$ be the approximation spaces associated with the decompositions $(\Omega^{N,K}_{n,k})$ from \ref{omk}, then for any $r>0$ and $N\in\N$ the map
\[
J:\ A^r(\Omega)\ni f\mapsto (f_{N,K})_{K\in\{0,\dots,p^n-1\}}\in \bigoplus_{K=0}^{p^n-1}A^r(\Omega_{N,K}),
\quad f_{N,K}:=f|_{\Omega_{N,K}},
\]
is an isomorphism.
\end{lemma}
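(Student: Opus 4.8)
The plan is to reduce the statement to a two-sided norm equivalence
\[
c\,\|f\|^2_{A^r(\Omega)}\ \le\ \sum_{K=0}^{p^N-1}\|f_{N,K}\|^2_{A^r(\Omega_{N,K})}\ \le\ C\,\|f\|^2_{A^r(\Omega)},
\]
valid for all $f\in A^r(\Omega)$ with constants depending only on $r,d,p,N$. Granting this, $J$ is bounded with bounded inverse on its range; injectivity is immediate since $A^r\hookrightarrow L^2$ (so $Jf=0$ forces $f_{N,K}=0$ in $L^2(\Omega_{N,K})$ for every $K$, whence $f=0$ a.e.); and surjectivity follows by gluing: given $(g_K)_K$, define $f\in L^2(\Omega)$ by $f=g_K$ a.e.\ on $\Omega_{N,K}$, so that $f_{N,K}=g_K$, and the left inequality shows $\|f\|_{A^r(\Omega)}<\infty$, i.e.\ $f\in A^r(\Omega)$ with $Jf=(g_K)_K$.

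The first step is to record the block structure of the subspaces from \eqref{pqmanif}. Since $\Omega^{N,K}_{m,j}=\Omega_{N+m,p^mK+j}$, as $K$ runs over $\{0,\dots,p^N-1\}$ and $j$ over $\{0,\dots,p^m-1\}$ the index $p^mK+j$ runs bijectively over $\{0,\dots,p^{N+m}-1\}$; hence the level-$(N+m)$ cells of the decomposition of $\Omega$ are exactly the level-$m$ cells of the decompositions of the various $\Omega_{N,K}$. Because the $\Omega_{N,K}$ are pairwise disjoint up to null sets, $L^2(\Omega)=\bigoplus_K L^2(\Omega_{N,K})$ orthogonally, and the above identity yields the orthogonal decomposition $V^\Omega_{N+m}=\bigoplus_K V^{\Omega_{N,K}}_m$ for every $m\in\N_0$. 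Consequently, for $n\ge N$ the projector $P^\Omega_n$ preserves each summand $L^2(\Omega_{N,K})$ and acts there as $P^{\Omega_{N,K}}_{n-N}$; and for $m\ge 1$ one gets $U^\Omega_{N+m}=\bigoplus_K U^{\Omega_{N,K}}_m$, so that $Q^\Omega_{N+m}f=(Q^{\Omega_{N,K}}_m f_{N,K})_K$ and
\[
\|Q^\Omega_{N+m}f\|^2_{L^2(\Omega)}=\sum_{K=0}^{p^N-1}\|Q^{\Omega_{N,K}}_m f_{N,K}\|^2_{L^2(\Omega_{N,K})},\qquad m\ge 1.
\]

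Next I would pass to the equivalent norms of Lemma~\ref{lem35} on both sides, writing $\|f\|^2_{A^r(\Omega)}$ equivalently as $\sum_{n\ge 0}p^{2nr/d}\|Q^\Omega_n f\|^2$ and $\|f_{N,K}\|^2_{A^r(\Omega_{N,K})}$ as $\sum_{m\ge 0}p^{2mr/d}\|Q^{\Omega_{N,K}}_m f_{N,K}\|^2$, and split each sum at level $N$. After the substitution $n=N+m$, $m\ge 1$, the tail $n>N$ on the left equals $p^{2Nr/d}$ times the tails $\sum_K\sum_{m\ge1}p^{2mr/d}\|Q^{\Omega_{N,K}}_m f_{N,K}\|^2$ on the right, by the displayed block identity. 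The finitely many head terms $0\le n\le N$ are harmless: since $\sum_n\|Q^\Omega_n f\|^2=\|f\|^2_{L^2(\Omega)}$, they are bounded by $p^{2Nr/d}\|f\|^2_{L^2(\Omega)}=p^{2Nr/d}\sum_K\|f_{N,K}\|^2_{L^2(\Omega_{N,K})}$; symmetrically the $m=0$ contributions on the right, namely $\sum_K\|P^{\Omega_{N,K}}_0 f_{N,K}\|^2$, are also bounded by $\|f\|^2_{L^2(\Omega)}$. Invoking the elementary bound $\|g\|^2_{L^2}\le\|g\|^2_{A^r}$ noted just after the definition of $A^r$, both inequalities of the norm equivalence follow.

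The one step requiring care is the block decomposition $V^\Omega_{N+m}=\bigoplus_K V^{\Omega_{N,K}}_m$ and the attendant identity $P^\Omega_n|_{L^2(\Omega_{N,K})}=P^{\Omega_{N,K}}_{n-N}$ for $n\ge N$, since this is precisely what converts the global approximation norm into the family of local ones. Elsewhere I expect no genuine difficulty: the levels $n<N$, where the block structure is absent and the orthogonal summands mix, are a finite, purely $L^2$-controlled remainder, so they affect only the equivalence constants (through the factor $p^{2Nr/d}$) and never the isomorphism itself.
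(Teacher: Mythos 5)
Your proof is correct and rests on exactly the same key observation as the paper's: the level-$(N+m)$ cells of $\Omega$ are precisely the level-$m$ cells of the $\Omega_{N,K}$, so the projectors split blockwise ($P^\Omega_{N+m}f=\sum_K P^{\Omega_{N,K}}_m f_{N,K}$), the approximation sums factor after level $N$, and the finitely many head terms are absorbed into the constants via the trivial bound $\|f-P_nf\|_{L^2}\le\|f\|_{L^2}\le\|f\|_{A^r}$. The only (cosmetic) difference is that you route the argument through the $Q_n$-based equivalent norm of Lemma~\ref{lem35}, whereas the paper works directly with the $P_n$-based definition and the identity $\|f-P_{N+n}f\|^2_{L^2(\Omega)}=\sum_K\|f_{N,K}-P^{N,K}_nf_{N,K}\|^2$; both the gluing argument for surjectivity and the injectivity via $A^r\hookrightarrow L^2$ match the paper.
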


\begin{proof} For $f\in L^2(\Omega)$ denote
	\[
	\alpha_{n,k}:=\dfrac{1}{|\Omega_{n,k}|}\int_{\Omega_{n,k}} f\dvol_g,
	\]
	then
	\[
	\dfrac{1}{|\Omega^{N,K}_{n,k}|}\int_{\Omega^{N,K}_{n,k}}f\dvol_g=\alpha_{N+n,p^nK+k}.
	\]
	For $n\in\N_0$ let $P^{N,K}_n$ be the orthogonal projector on 
	\[
	V^{N,K}_n:=\operatorname{span}\big\{\one_{\Omega^{N,K}_{n,k}}:\, k\in\{0,\dots,p^n-1\} \big\}\subset L^2(\Omega_{N,K})
	\]
	in $L^2(\Omega_{N,K})$, then 
	\begin{align*}
		P^{N,K}_n f_{N,K}&=\sum_{k=0}^{p^n-1} \dfrac{1}{|\Omega^{N,K}_{n,k}|}\int_{\Omega^{N,K}_{n,k}}f\dvol_g \,\one_{\Omega^{N,K}_{n,k}}\\
		&\equiv 
		\sum_{k=0}^{p^n-1} \alpha_{N+n,p^n K+k} \one_{\Omega_{N+n,p^n K+k}}.
	\end{align*}
	At the same time,
	\begin{align*}
		P_{N+n} f&=\sum_{j=0}^{p^{N+n}-1} \alpha_{n,j} \one_{\Omega_{n,j}}=\sum_{K=0}^{p^N-1}\sum_{k=0}^{p^n-1} \alpha_{N+n,p^nK+k} \,\one_{\Omega_{N+n,p^nK+k}}\\
		&=\sum_{K=0}^{p^N-1} (P^{N,K}_n f_{N,K}) \one_{\Omega_{N,K}},\\
		f&=\sum_{K=0}^{p^N-1} f_{N,K} \,\one_{\Omega_{N,K}},\\
		f-P_{N+n} f&=\sum_{K=0}^{p^N-1} (f_{N,K} -P^{N,K}_n f_{N,K}) \,\one_{\Omega_{N,K}}
	\end{align*}
	As the summands in the last sum have disjoint supports, they are orthogonal in $L^2(\Omega)$, and
	\begin{equation}
		\label{fpn}
		\|f-P_{N+n} f\|^2_{L^2(\Omega)}=\sum_{K=0}^{p^N-1} \|f_{N,K}-P^{N,K}_n f_{N,K}\|^2.
	\end{equation}
	For any $r\ge 0$ we have
	\begin{equation}
		\label{fpn2}
		\begin{aligned}
			\sum_{n=0}^\infty p^{2\frac{nr}{d}}\|f-P_{N+n} f\|^2_{L^2(\Omega)}&
			\stackrel{\eqref{fpn}}{=}\sum_{K=0}^{p^N-1}\sum_{n=0}^\infty p^{2\frac{nr}{d}}\|f_{N,K}-P^{N,K}_n f_{N,K}\|^2\\
			&\equiv \sum_{K=0}^{p^N-1} \|f_{N,K}\|^2_{A^r(\Omega_{N,K})}.
		\end{aligned}
	\end{equation}
	Therefore, if $f\in A^r(\Omega)$, then
	\begin{align*}
		\|Jf\|^2&=\sum_{K=0}^{p^N-1} \|f_{N,K}\|^2_{A^r(\Omega_{N,K})}
		\stackrel{\eqref{fpn2}}{=} \sum_{n=0}^\infty p^{2\frac{nr}{d}} 	\|f-P_{N+n} f\|^2_{L^2(\Omega)}\\
		&=p^{-2\frac{Nr}{d}}\sum_{n=N}^\infty p^{2\frac{nr}{d}} 	\|f-P_n f\|^2_{L^2(\Omega)}
		\le p^{-2\frac{Nr}{d}}\sum_{n=0}^\infty p^{2\frac{nr}{d}} 	\|f-P_n f\|^2_{L^2(\Omega)}\\
		&=p^{-2\frac{Nr}{d}}\|f\|^2_{A^r(\Omega)},
	\end{align*}
	which shows that $J$ is bounded, and it is clearly injective.
	For any $f_{N,K}\in A^r(\Omega_{N,K})$ the function
	\[
	f:=\sum_{K=0}^{p^N-1} f_{N,K} \,\one_{\Omega_{N,K}}
	\]
	belongs to $L^2(\Omega)$, and
	\begin{align*}
		\|f\|^2_{A^r(\Omega)}&=\sum_{n=0}^\infty p^{2\frac{nr}{d}}\|f-P_n\|^2_{L^2(\Omega)}\\
		&=\sum_{n=0}^{N-1} p^{2\frac{nr}{d}}\|f-P_n\|^2_{L^2(\Omega)}
		+p^{2\frac{Nr}{d}}\sum_{n=0}^{\infty} p^{2\frac{nr}{d}}\|f-P_{N+n}\|^2_{L^2(\Omega)}\\
		\text{use \eqref{fpn2}: }	&=
		\sum_{n=0}^{N-1} p^{2\frac{nr}{d}}\|f-P_n\|^2_{L^2(\Omega)}+
		p^{2\frac{Nr}{d}}\sum_{K=0}^{p^N-1} \|f_{N,K}\|^2_{A^r(\Omega_{N,K})}<\infty,
	\end{align*}
	which shows that $f\in A^r(\Omega)$. Therefore, $J$ is surjective as well, and it
	follows by the closed graph theorem that $J$ is an isomorphism.	
\end{proof}

\begin{lemma}\label{lem49}
Assume that (B4)  holds with $N=0$, i.e. there exists a local chart
$\Phi:\R^d\supset \Tilde O\mapsto O\subset S$
such that $\overline{\Omega}\subset O$ and the sets $\widetilde \Omega_{n,k}:=\Phi^{-1}(\Omega_{n,k})$
form a regular weakly balanced $p$-multiscale decomposition of $\widetilde \Omega:=\Phi^{-1}(\Omega)$. Consider the associated spaces $A^r(\Tilde\Omega)$, then for any $0\le r<1$ the map
\[
A^r(\Omega)\ni f\mapsto \Tilde f:=f\circ \Phi\in A^r(\widetilde \Omega)
\]
is an isomorphism.
\end{lemma}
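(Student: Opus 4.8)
The plan is to transfer all structure through the pullback $f\mapsto\Tilde f:=f\circ\Phi$, relying on two facts: this map carries the subspaces $V_n$ associated with $\Omega$ \emph{exactly} onto those associated with $\Tilde\Omega$, and it distorts the relevant $L^2$-norms only by factors that are bounded uniformly in $n$. The $A^r$-norm will be rephrased so that it involves only $L^2$-norms and $L^2$-distances to the $V_n$, both of which behave well under $\Phi$.

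First I would record the change of variables. Since $\overline\Omega\subset O$ is compact and $(S,g)$ has bounded geometry, the density $\rho$ of $\dvol_g$ with respect to the Lebesgue measure in the coordinates given by $\Phi$ is continuous and strictly positive on $\Tilde O$, hence satisfies $0<m\le\rho\le M<\infty$ on the compact set $\overline{\Tilde\Omega}\subset\Tilde O$. Therefore, for every $g\in L^2(\Omega)$,
\[
m\,\|g\circ\Phi\|^2_{L^2(\Tilde\Omega)}\le \|g\|^2_{L^2(\Omega)}=\int_{\Tilde\Omega}|g\circ\Phi|^2\,\rho\,\dd x\le M\,\|g\circ\Phi\|^2_{L^2(\Tilde\Omega)}.
\]
Consequently the linear map $g\mapsto g\circ\Phi$ is a bounded bijection from $L^2(\Omega)$ onto $L^2(\Tilde\Omega)$ with bounded inverse $h\mapsto h\circ\Phi^{-1}$. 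Moreover $\one_{\Omega_{n,k}}\circ\Phi=\one_{\Tilde\Omega_{n,k}}$, so this map sends $V_n$ bijectively onto $\Tilde V_n:=\operatorname{span}\{\one_{\Tilde\Omega_{n,k}}:k=0,\dots,p^n-1\}$.

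Next I would transfer the distances. Applying the displayed two-sided bound to $g=f-v$ with $v\in V_n$ and using the bijection $v\mapsto v\circ\Phi$ between $V_n$ and $\Tilde V_n$, I obtain, with constants independent of $n$,
\[
M^{-1}\dist^2_{L^2(\Omega)}(f,V_n)\le \dist^2_{L^2(\Tilde\Omega)}(\Tilde f,\Tilde V_n)\le m^{-1}\dist^2_{L^2(\Omega)}(f,V_n).
\]
Finally I would rewrite the $A^r$-norm to eliminate the projectors: since $P_n$ is an orthogonal projector one has $\|P_0f\|^2_{L^2(\Omega)}+\|f-P_0f\|^2_{L^2(\Omega)}=\|f\|^2_{L^2(\Omega)}$ and $\|f-P_nf\|_{L^2(\Omega)}=\dist_{L^2(\Omega)}(f,V_n)$, so the definition of the norm gives
\[
\|f\|^2_{A^r(\Omega)}=\|f\|^2_{L^2(\Omega)}+\sum_{n=1}^\infty p^{2nr/d}\,\dist^2_{L^2(\Omega)}(f,V_n),
\]
and likewise for $\Tilde\Omega$. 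Combining this identity with the $n$-uniform equivalences of $\|\cdot\|_{L^2}$ and of the distances shows that $\|f\|^2_{A^r(\Omega)}$ and $\|\Tilde f\|^2_{A^r(\Tilde\Omega)}$ are equivalent; in particular $f\in A^r(\Omega)$ iff $\Tilde f\in A^r(\Tilde\Omega)$, and $f\mapsto\Tilde f$ is a bounded linear bijection with bounded inverse, i.e. an isomorphism.

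The main obstacle here is conceptual rather than computational: the orthogonal projectors $P_n$ do \emph{not} intertwine with the pullback, because the $L^2$-inner products on $\Omega$ (taken against $\dvol_g$) and on $\Tilde\Omega$ (taken against the Lebesgue measure) differ, so the averages defining $P_nf$ and $\Tilde P_n\Tilde f$ are computed with different weights. The resolution is precisely to express the $A^r$-norm through the subspaces $V_n$ and the $L^2$-distances alone, since these transfer exactly and with $n$-uniform constants under $\Phi$ thanks to the two-sided bound on $\rho$ over the compact set $\overline{\Tilde\Omega}$.
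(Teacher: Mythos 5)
Your proof is correct, and it takes a genuinely different and in fact cleaner route than the paper's. The paper compares $\|f-P_nf\|_{L^2(\Omega)}$ with $\|\Tilde f-\Tilde P_n\Tilde f\|_{L^2(\Tilde\Omega)}$ head-on, which forces it to estimate the non-commutation term $\Theta P_n f-\Tilde P_n\Theta f$; this is done via the Lipschitz continuity of the Jacobian $J_\Phi$ together with the diameter bound (A5) on the cells $\Tilde\Omega_{n,k}$, yielding an error of size $O(p^{-n/d})\|f\|_{L^2(\Omega)}$, and it is precisely the summability of $p^{2nr/d}\cdot p^{-2n/d}$ that produces the hypothesis $r<1$. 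Your observation that $\|f-P_nf\|_{L^2(\Omega)}=\dist_{L^2(\Omega)}(f,V_n)$ (since $P_n$ is the orthogonal projector onto $V_n$), combined with the fact that the pullback maps $V_n$ bijectively onto $\Tilde V_n$ and distorts $L^2$-norms only within the $n$-independent bounds $m\le J_\Phi\le M$, removes the commutator issue entirely: the distances transfer with uniform two-sided constants, and after the Pythagoras rewriting $\|P_0f\|^2+\|f-P_0f\|^2=\|f\|^2$ the norm equivalence is immediate. What your approach buys: it needs only continuity and positivity of the density on the compact closure (not its Lipschitz continuity), it does not use the regularity conditions (A5)--(A6) on the decomposition of $\Tilde\Omega$ at all, and it establishes the isomorphism for every $r\ge 0$, so the restriction $r<1$ in the statement is revealed to be an artifact of the paper's method rather than a genuine obstruction.
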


\begin{proof}
		%
		Recall that there exists $\Tilde c_1>0$ such that 
		\begin{equation}
			\label{tc3}
			\diam \Tilde \Omega_{n,k} \leq \Tilde c_1 p^{-\frac{n}{d}}
			\text{  for all $n\in\N_0$ and $k\in\{0,\dots,p^n-1\}$.}
		\end{equation}
				
		For a function $f$ defined on $\Omega$ and the function $\Tilde f:=f\circ\Phi$
		defined on $\Tilde\Omega$ one has
		\begin{equation}
			\begin{aligned}
			\int_\Omega f(y)\dvol_g(y)&=\int_{\Tilde\Omega} \Tilde f(u) J_\Phi(u)\dd u,\\
			J_\Phi(u)&:=\sqrt{\det \Big(g_{\Phi}\big(\partial_j \Phi(u),\partial_k\Phi(u)\big)\Big)_{j,k\in\{1,\dots,d\}}},
			\end{aligned}
			\label{jf}
		\end{equation}
		and there exist $b_1,b_2>0$ such that $b_1\le J_\Phi(u)\le b_2$
		 for all $u\in \Tilde\Omega$. It follows that the map
		$\Theta: f\mapsto \Tilde f$
		defines an isomorphism between $L^2(\Omega)$ and $L^2(\Tilde\Omega)$,
		\[
		b_1\|\Theta f\|^2_{L^2(\Tilde\Omega)}\le \|\Theta f\|^2_{L^2(\Omega)}
		\le b_2\|\Theta f\|^2_{L^2(\Tilde\Omega)}
		\text{ for all } f\in L^2(\Omega),
		\]
		in particular,
		\begin{equation}
			\label{lbb}
			b_1|\Tilde\Omega_{n,k}|\le |\Omega_{n,k}|\le b_2|\Tilde\Omega_{n,k}| \text{ for all $(n,k)$.}
		\end{equation}

		For $n\in\N_0$ let $\Tilde P_n$ denote the orthogonal projector in $L^2(\Tilde\Omega)$ on the subspace
		\[
		\Tilde V_n:=\operatorname{span}\big\{\one_{\Tilde\Omega_{n,k}}: k\in\{0,\dots,p^n-1\} \big\},
		\]
then
		\begin{align*}
			P_n f&=\sum_{k=0}^{p^n-1}\dfrac{1}{|\Omega_{n,k}|}\int_{\Omega_{n,k}} f(y)\dvol_g(y),\nonumber\\
			\text{use \eqref{jf}: }
			\Theta P_n f(y)&\equiv (P_n f)\big(\Phi(y)\big)=
			\sum_{k=0}^{p^n-1}\dfrac{1}{|\Omega_{n,k}|}\int_{\Tilde \Omega_{n,k}} \Tilde f(u) J_\Phi(u)\dd u\,\one_{\Tilde \Omega_{n,k}}(y),\\
			\Tilde P_n\Theta f (y)&\equiv \Tilde P_n  \Tilde f(y)
			=\sum_{k=0}^{p^n-1}\dfrac{1}{|\Tilde \Omega_{n,k}|}\int_{\Tilde \Omega_{n,k}} 
			\Tilde f(u) \dd u\,\one_{\Tilde \Omega_{n,k}}(y).
		\end{align*}
		Therefore,
		\begin{equation}
			\label{ipf}
			\big(\Theta P_n f -\Tilde P_n\Theta f\big)(y)
			=
			\sum_{k=0}^{p^n-1}\int_{\Tilde \Omega_{n,k}} \Tilde f(u) \Big(  \dfrac{J_\Phi(u)}{|\Omega_{n,k}|} - \dfrac{1}{|\Tilde \Omega_{n,k}|}   \Big)\dd u\,\one_{\Tilde \Omega_{n,k}}(y).
		\end{equation}

		As $J_\Phi$ is a smooth function on a neighborhood of the closure of $\Tilde\Omega$, it is a Lipschitz function, and one finds some $a>0$
		with 
		\[
		|J_\Phi(u)-J_\Phi(u')|\le a|u-u'| \text{ for all } u,u'\in \Tilde \Omega.
		\]
		Pick any $u_{n,k}\in \Tilde\Omega_{n,k}$ and denote $J_{n,k}:=J_\Phi(u_{n,k})$, then for any $u\in\Tilde \Omega_{n,k}$
		there holds
		\begin{align*}
			|J_\Phi(u)-J_{n,k}|&\equiv |J_\Phi(u)-J_\Phi(u_{n,k})|\\
			&\le a|u-u_{n,k}|
			\le a\diam \Tilde \Omega_{n,k}\stackrel{\eqref{tc3}}{\le} \Tilde a p^{-\frac{n}{d}},
			\quad \Tilde a:=a\Tilde c_1.
		\end{align*}
		We have
		\begin{align*}
			\big||\Omega_{n,k}|-J_{n,k}|\Tilde\Omega_{n,k}|\big|&=\Big|\int_{\Tilde \Omega_{n,k}} \big(J_\Phi(y)- J_{n,k}\big)\dd y\Big|\\
			&	\le \int_{\Tilde \Omega_{n,k}} \big|J_\Phi(y)- J_{n,k}\big|\dd y\le \Tilde a p^{-\frac{n}{d}} |\Tilde\Omega_{n,k}|,
		\end{align*}
		and we can find $a_{n,k}\in[-\Tilde a,\Tilde a]$ such that $|\Omega_{n,k}|=\big(J_{n,k} +a_{n,k}p^{-\frac{n}{d}}\big) |\Tilde\Omega_{n,k}|$. Then for any $u\in \Tilde\Omega_{n,k}$
		we have
		\begin{align*}
			\Big|  \dfrac{J_\Phi(u)}{|\Omega_{n,k}|} - \dfrac{1}{|\Tilde \Omega_{n,k}|}   \Big| &=
			\Big|  \dfrac{J_\Phi(u)|\Tilde \Omega_{n,k}|- |\Omega_{n,k}|}{|\Omega_{n,k}| |\Tilde\Omega_{n,k}|}\Big|\\
			&\equiv \dfrac{\big| \big(
				J_\Phi(u)-J_{n,k}-a_{n,k}p^{-\frac{n}{d}}\big)
				\big|\, |\Tilde \Omega_{n,k}|}{|\Omega_{n,k}| |\Tilde\Omega_{n,k}|}\\
			&\le \dfrac{(\Tilde a +|a_{n,k}|)p^{-\frac{n}{d}}}{|\Omega_{n,k}|}\stackrel{\eqref{lbb}}{\le} \dfrac{b  p^{-\frac{n}{d}}}{|\Tilde\Omega_{n,k}|} \text{ with } b:=\dfrac{2\Tilde a}{b_1}.
		\end{align*}
		
		It follows from \eqref{ipf} that 
		\[
		\Big|(\Theta P_n f -\Tilde P_n\Theta f)(y)\Big|
		\le
		\sum_{k=0}^{p^n-1}\dfrac{b  p^{-\frac{n}{d}}}{|\Tilde\Omega_{n,k}|}\int_{\Tilde \Omega_{n,k}} \big|\Tilde f(u)\big|\dd u\,\one_{\Tilde \Omega_{n,k}}(y)
		\equiv b  p^{-\frac{n}{d}} \big(\Tilde P_n \Theta |f|\big)(y).
		\]
It follows that
		\begin{equation}
			\label{pnf}
			\begin{aligned}
				\|\Theta P_n f -\Tilde P_n\Theta f\|^2_{L^2(\Tilde \Omega)}&
				\le b^2  p^{-\frac{2n}{d}}\big\|\Tilde P_n\Theta |f|\big\|^2_{L^2(\Tilde \Omega)}\\
				&\le b^2  p^{-\frac{2n}{d}}\big\|\Theta |f|\big\|^2_{L^2(\Tilde \Omega)}
				\stackrel{\eqref{lbb}}{\le} b^2 b_1^{-1}p^{-\frac{2n}{d}}\|f\|^2_{L^2(\Omega)}.
			\end{aligned}
		\end{equation}
		
		Now let $f\in A^r(\Omega)$, then
		\begin{align*}
			\|\Tilde f-\Tilde P_n\Tilde f\|^2_{L^2(\Tilde \Omega)}&\equiv 
			\|\Theta f-\Tilde P_n\Theta f\|^2_{L^2(\Tilde \Omega)}\\
			&=\big\| (\Theta f-\Theta P_n f) + (\Theta P_n f-\Tilde P_n\Theta f)\big\|^2_{L^2(\Tilde \Omega)}\\
			&\le
			2\|\Theta (f-P_n f)\|^2_{L^2(\Tilde \Omega)} + 2\|\Theta P_n f-\Tilde P_n\Theta f\|^2_{L^2(\Tilde \Omega)}\\
			\text{use \eqref{lbb} and \eqref{pnf}: }	&\le 2b_1^{-1}\|f-P_n f\|_{L^2(\Omega)}+2b^2 b_1^{-1}p^{-\frac{2n}{d}}\|f\|_{L^2(\Omega)},\\[\smallskipamount]
			\|\Tilde P_0 \Tilde f\|^2_{L^2(\Tilde \Omega)}&\equiv \|\Tilde P_0 \Theta f\|^2_{L^2(\Tilde \Omega)}
			=\|\Theta P_0 f-(\Theta P_0 f -\Tilde P_0\Theta f)\|^2_{L^2(\Tilde\Omega)}	\\
			&\le 2\|\Theta P_0 f\|^2_{L^2(\Tilde\Omega)}+2\|\Theta P_0 f -\Tilde P_0\Theta f\|^2_{L^2(\Tilde \Omega)}\\
			\text{use \eqref{lbb} and \eqref{pnf}: }	& \le 2b_1^{-1}\|P_0 f\|_{L^2(\Omega)}+2b^2 b_1^{-1}\|f\|_{L^2(\Omega)}.
		\end{align*}
		
Therefore,
		\begin{align*}
			\|\Tilde f\|^2_{A^r(\Tilde \Omega)}&=\|\Tilde P_0 \Tilde f\|^2_{L^2(\Tilde \Omega)}
			+\sum_{n=0}^\infty p^{2\frac{nr}{d}} \|\Tilde f-\Tilde P_n\Tilde f\|^2_{L^2(\Tilde \Omega)}\\
			&\le 2b_1^{-1}\Big( \underbrace{
				\|P_0 \Tilde f\|^2_{L^2(\Omega)}
				+\sum_{n=0}^\infty p^{2\frac{nr}{d}} \| f-\Tilde P_n f\|^2_{L^2(\Omega)}
			}_{\equiv \|f\|^2_{A^r(\Omega)}}\Big)\\
		&\quad +2b^2 b_1^{-1}\Big( 1+ \underbrace{\sum_{n=0}^\infty p^{\frac{2n}{d}(r-1)}}_{=:S\,<\infty}\Big) \underbrace{\|f\|^2_{L^2(\Omega)}}_{\le \|f\|^2_{A^r(\Omega)}}\\
			&\le \big(2b_1^{-1} + 2b^2 b_1^{-1}(1+S)\big)\|f\|^2_{A^r(\Omega)},
		\end{align*}
		which shows that $\Tilde f\in A^r(\Tilde \Omega)$ and that $\Theta$ defines a bounded operator $A^r(\Omega)\to A^r(\Tilde\Omega)$, and remark that it is injective by construction.
		
		On the other hand, let $\Tilde f\in A^r(\Tilde\Omega)$
		and $f\in L^2(\Omega)$ with $\Tilde f=\Theta f$. Then
		\begin{align*}
			\|f-P_n f\|^2_{L^2(\Omega)}&\stackrel{\eqref{lbb}}{\le} b_2\|\Theta f-\Theta P_n f\|^2_{L^2(\Tilde\Omega)}\\
			&=b_2\big\|(\Theta f-\Tilde P_n\Theta f)+(\Tilde P_n\Theta f-\Theta P_n f)\big\|^2_{L^2(\Tilde\Omega)}\\
			\text{use $\Theta f=\Tilde f$: }	&\le 2b_2 \|\Tilde f - \Tilde P_n \Tilde f\|^2_{L^2(\Tilde\Omega)}
			+2b_2 \|\Tilde P_n\Theta f-\Theta P_n f\|^2_{L^2(\Tilde\Omega)}\\
			\text{use \eqref{pnf}: }
			&\le 2b_2 \|\Tilde f - \Tilde P_n \Tilde f\|^2_{L^2(\Tilde\Omega)}
			+2b_2 b^2 b_1^{-1}p^{-\frac{2n}{d}}\|f\|^2_{L^2(\Omega)}\\
			\text{use \eqref{lbb}: } &
			\le 2b_2 \|\Tilde f - \Tilde P_n \Tilde f\|^2_{L^2(\Tilde\Omega)}
			+2b^2_2 b^2 b_1^{-1}p^{-\frac{2n}{d}}\|\Tilde f\|^2_{L^2(\Tilde \Omega)},\\[\smallskipamount]
			\|P_0 f\|^2_{L^2(\Omega)}&
			\stackrel{\eqref{lbb}}{\le} b_2\|\Theta P_0 f\|^2_{L^2(\Tilde\Omega)}\\
			&=b_2\big\| \Tilde P_0 \Theta f - (\Tilde P_n\underbrace \Theta f-\Theta P_0 f)\big\|^2_{L^2(\Tilde\Omega)}\\
			\text{use $\Theta f=\Tilde f$:  }  &\le 2b_2 \|\Tilde P_0 \Tilde f\|^2_{L^2(\Tilde\Omega)}
			+2b_2 \|\Tilde P_n\Theta f-\Theta P_n f\|^2_{L^2(\Tilde\Omega)}\\
			\text{use \eqref{pnf}: }
			&\le 2b_2 \| \Tilde P_0 \Tilde f\|^2_{L^2(\Tilde\Omega)}
			+2b_2 b^2 b_1^{-1}\|f\|^2_{L^2(\Omega)}\\
			\text{use \eqref{lbb}: }  & 
			\le 2b_2 \|\Tilde P_0 \Tilde f\|^2_{L^2(\Tilde\Omega)}
			+2b^2_2 b^2 b_1^{-1}\|\Tilde f\|^2_{L^2(\Omega)}.
		\end{align*}
		Therefore,
		\begin{align*}
			\|f\|^2_{A^r(\Omega)}&=\|P_0 f\|^2_{L^2(\Tilde \Omega)}
			+\sum_{n=0}^\infty p^{2\frac{nr}{d}} \|f- P_n f\|^2_{L^2(\Tilde \Omega)}\\
			&\le 2b_2\Big( \underbrace{
				\|P_0 \Tilde f\|^2_{L^2(\Omega)}
				+\sum_{n=0}^\infty p^{2\frac{nr}{d}} \| \Tilde f-\Tilde P_n \tilde f\|^2_{L^2(\Omega)}
			}_{\equiv \|\Tilde f\|^2_{A^r(\Tilde \Omega)}}\Big)\\
			&\quad +2b^2_2 b^2 b_1^{-1}\Big( 1+ \underbrace{\sum_{n=0}^\infty p^{\frac{2n}{d}(r-1)}}_{=:S\,<\infty}\Big) \underbrace{\|\Tilde f\|^2_{L^2(\Tilde \Omega)}}_{\le \|\Tilde f\|^2_{A^r(\Tilde \Omega)}}\\
			&\le \big(2b_2 + b^2_2 b^2 b_1^{-1} (1+S)\big)\|\Tilde f\|^2_{A^r(\Tilde \Omega)},
		\end{align*}
		which shows that $f\in A^r(\Omega)$, and, consequently, that $\Theta^{-1}:A^r(\Tilde \Omega)\to A^r(\Omega)$ is everywhere defined and bounded.	
	\end{proof}

Now we can transfer the relations between $A^r$ and $H^s$ known for the Euclidian case to the case of manifolds.

\begin{theorem}\label{thmappr}
Assume \eqref{eq-omega2}. Then for all $r \geq 0$ and $0<s<1$ there holds
\begin{align}
	A^r(\Omega)&\hookrightarrow H^{s}(\Omega) \text{ if } s<\frac{1}{2} \text{ and }
	s\le r, \label{emb1a}\\
	H^{s}(\Omega)&\hookrightarrow A^{r}(\Omega) \text{ if } 0\leq r\leq s<1. \label{emb2a}
\end{align}
In particular,
\[
A^r(\Omega)=H^{r}(\Omega) \text{ for any } 0\le r<\frac{1}{2}.
\]
\end{theorem}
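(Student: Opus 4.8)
The plan is to reduce both embeddings \eqref{emb1a} and \eqref{emb2a} to the already established Euclidean statements (Theorems \ref{thmemb1} and \ref{thmemb2}) by localizing to the level-$N$ pieces $\Omega_{N,K}$ and passing to the charts $\Phi_{N,K}$ furnished by (B4)--(B5). For each $K\in\{0,\dots,p^N-1\}$ the family $(\Omega^{N,K}_{n,k})$ of \eqref{omk} is a regular weakly balanced $p$-multiscale decomposition of $\Omega_{N,K}$ whose pullback under $\Phi_{N,K}$ is a regular weakly balanced decomposition of the bounded Lipschitz set $\Tilde\Omega_{N,K}\subset\R^d$. Hence Lemma \ref{lem49} (applied with the role of $N$ played by $0$) yields an isomorphism $A^r(\Omega_{N,K})\cong A^r(\Tilde\Omega_{N,K})$, while the chart property of Sobolev spaces on manifolds recorded in Subsection \ref{sec44} yields $H^s(\Omega_{N,K})\cong H^s(\Tilde\Omega_{N,K})$. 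Composing these with the Euclidean embeddings of Theorems \ref{thmemb1} and \ref{thmemb2} on each $\Tilde\Omega_{N,K}$ gives, in the appropriate parameter ranges, the \emph{local} embeddings $A^r(\Omega_{N,K})\hookrightarrow H^s(\Omega_{N,K})$ and $H^s(\Omega_{N,K})\hookrightarrow A^r(\Omega_{N,K})$, with norm bounds that are uniform over the finitely many indices $K$.

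For the embedding \eqref{emb2a}, i.e. $H^s(\Omega)\hookrightarrow A^r(\Omega)$ with $0\le r\le s<1$, I would argue directly. Since the restriction map $J$ of Lemma \ref{lem48} is an isomorphism, one has $\|f\|_{A^r(\Omega)}^2\le C\sum_{K=0}^{p^N-1}\|f_{N,K}\|_{A^r(\Omega_{N,K})}^2$. Applying the local embedding $H^s(\Omega_{N,K})\hookrightarrow A^r(\Omega_{N,K})$ and then the boundedness of the restriction $H^s(\Omega)\ni f\mapsto f|_{\Omega_{N,K}}\in H^s(\Omega_{N,K})$, each summand is bounded by $C'\|f\|_{H^s(\Omega)}^2$, and as the sum runs over the finite set $\{0,\dots,p^N-1\}$ we conclude $\|f\|_{A^r(\Omega)}\le C''\|f\|_{H^s(\Omega)}$. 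This direction needs no gluing: only the boundedness of restriction and the finiteness of the covering are used.

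The embedding \eqref{emb1a}, i.e. $A^r(\Omega)\hookrightarrow H^s(\Omega)$ for $0<s<\tfrac12$ and $s\le r$, is the genuinely delicate one, and this is where the hypothesis $s<\tfrac12$ becomes essential. Starting again from Lemma \ref{lem48}, a function $f\in A^r(\Omega)$ satisfies $f_{N,K}=f|_{\Omega_{N,K}}\in A^r(\Omega_{N,K})\hookrightarrow H^s(\Omega_{N,K})$ for each $K$, with $\sum_K\|f_{N,K}\|_{H^s(\Omega_{N,K})}^2\le C\|f\|_{A^r(\Omega)}^2$. The remaining task — and the main obstacle — is to recover the \emph{global} membership $f\in H^s(\Omega)$ from these local bounds; this is not automatic for fractional $s$, since the Gagliardo/interpolation norm is nonlocal and registers the interfaces between neighbouring pieces. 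I would resolve it by extension by zero: for $0\le s<\tfrac12$ on a bounded Lipschitz set the extension-by-zero operator is bounded from the $H^s$ space of the subset into the $H^s$ space of the ambient space (equivalently $H^s=H^s_0$ in this range, cf. the density of $C^\infty_c$ recorded in Subsection \ref{sec31}), so that no interface compatibility is imposed. Working in the chart $\Phi_{N,K}$, where $\Tilde\Omega_{N,K}$ is Lipschitz, this provides bounded operators $E_K:H^s(\Omega_{N,K})\to H^s(S)$ with image supported in the chart neighbourhood.

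Setting $F:=\sum_{K=0}^{p^N-1}E_K f_{N,K}\in H^s(S)$ and noting $F|_\Omega=f$ almost everywhere (the $\Omega_{N,K}$ cover $\Omega$ up to a null set by (B2)--(B3)), the definition of the quotient norm together with the triangle inequality gives
\[
\|f\|_{H^s(\Omega)}\le \|F\|_{H^s(S)}\le \sum_{K=0}^{p^N-1}\|E_K f_{N,K}\|_{H^s(S)}\le C\sum_{K=0}^{p^N-1}\|f_{N,K}\|_{H^s(\Omega_{N,K})}\le C'\|f\|_{A^r(\Omega)},
\]
where the last step uses Cauchy--Schwarz over the finite index set and the bound $\sum_K\|f_{N,K}\|_{H^s(\Omega_{N,K})}^2\le C\|f\|_{A^r(\Omega)}^2$ obtained above. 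Thus the main technical point to verify carefully is the boundedness of extension by zero on $H^s$ for $s<\tfrac12$ and its transfer from the Euclidean chart to $H^s(S)$ under the bounded-geometry assumption; once this is in place, the two embeddings combine to give $A^r(\Omega)=H^r(\Omega)$ for $0\le r<\tfrac12$.
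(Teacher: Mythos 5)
Your proposal is correct and follows essentially the same route as the paper: localize to the $\Omega_{N,K}$ via Lemma \ref{lem48}, pass to the Euclidean charts via Lemma \ref{lem49} and the chart-invariance of $H^s$, invoke Theorems \ref{thmemb1} and \ref{thmemb2}, and for the direction $A^r(\Omega)\hookrightarrow H^s(\Omega)$ glue the local pieces back using the boundedness of extension by zero on $H^s$ for $s<\tfrac12$ (density of $C^\infty_c$), which is exactly the paper's argument. The only point you leave implicit is that Lemma \ref{lem49} requires $r<1$, so for \eqref{emb1a} with $r\ge 1$ one should first pass to some $r'$ with $s\le r'<1$ using the trivial embedding $A^r(\Omega)\hookrightarrow A^{r'}(\Omega)$, as the paper does at the start of its proof.
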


\begin{proof}
As $A^r(\Omega)\hookrightarrow A^{r'}(\Omega)$ for $r\ge r'$, it is sufficient to prove \eqref{emb1a} under the additional assumption $r<1$.
We first use the map
\[
A^r(\Omega)\ni f\mapsto (f_{N,K})_{K\in\{0,\dots,p^N-1\}}\in \bigoplus_{K=0}^{p^N-1} A^r(\Omega_{N,K}),
\quad
f_{N,K}:=f|_{\Omega_{N,K}},
\]
which is an isomorphism by Lemma \ref{lem48}.
For each $K$, the map
\[
A^r(\Omega_{N,K})\ni f_{N,K}\mapsto \Tilde f_{N,K}:=f_{N,K}\circ \Phi_{N,K}\in A^r(\Tilde\Omega_{N,K})
\]
is also an isomorphism by Lemma \ref{lem49},
and $A^r (\Tilde\Omega_{N,K})\ni \Tilde f_{N,K}\to \Tilde f_{N,K}\in H^s(\Tilde\Omega_{N,K})$
is an embedding by Theorem \ref{thmemb2}. In addition,
\[
H^s(\Tilde\Omega_{N,K})\ni \Tilde f_{N,K}\mapsto \Tilde f_{N,K}\circ \Phi_{N,K}^{-1}\equiv f_{N,K}\in H^s(\Omega_{N,K})
\]
is an isomorphism due to the construction of Sobolev spaces. Therefore, we have shown that
\[
A^r(\Omega)\ni f\mapsto (f_{N,K})_{K\in\{0,\dots,p^N-1\}}\in \bigoplus_{K=0}^{p^N-1} H^s(\Omega_{N,K})
\]
is an embedding. We now recall that due to $s<\frac{1}{2}$ the subspaces $C^\infty_c(\Tilde \Omega_{N,K})$ are dense in $H^s(\Tilde \Omega_{N,K})$, which in turn means that $C^\infty_c(\Omega_{N,K})$ are dense in $H^s(\Omega_{N,K})$, therefore, the operator $J_{N,K}$ of extension by zero from $\Omega_{N,K}$ to $\Omega$
extends by density from $C^\infty_c(\Omega_{N,K})$ to an embedding
$J_{N,K}:H^s(\Omega_{N,K})\to H^s(\Omega)$. Then
\[
J:\bigoplus_{K=0}^{p^N-1} H^s(\Omega_{N,K})\ni (\varphi_{N,K})_{K\in\{0,\dots,p^N-1\}}\mapsto
\sum_{k=0}^{p^N-1}J_{N,K}\varphi_{N,K}\in H^s(\Omega)
\]
is an embedding, which finishes the proof of \eqref{emb1a}.

For \eqref{emb2a} we consider the following maps:
\begin{align*}
	H^s(\Omega)\ni f\mapsto (f_{N,K})_{K\in\{0,\dots,p^N-1\}}&\in \bigoplus_{K=0}^{p^N-1} H^s(\Omega_{N,K})\\
	& \text{ with } f_{N,K}:=f|_{\Omega_{N,K}},  &&\text{(a)}\\
    H^s(\Omega_{N,K})\ni f_{N,K}\mapsto \Tilde f_{N,K}:=f_{N,K}\circ \Phi_{N,K}&\in H^s(\Tilde\Omega_{N,K}), && \text{(b)}\\
    H^s(\Tilde \Omega_{N,K})\ni \Tilde f_{N,K}\mapsto \Tilde f_{N,K}&\in A^r(\Tilde \Omega_{N,K}),
    && \text{(c)}\\
    A^r(\Tilde \Omega_{N,K})\ni \Tilde f_{N,K}\mapsto \Tilde f_{N,K}\circ \Phi_{N,K}^{-1}\equiv f_{N,K}&\in A^r(\Omega_{N,K}), && \text{(d)}\\
    \bigoplus_{K=0}^{p^N-1} A^r(\Omega_{N,K})\ni (f_{N,K})_{K\in\{0,\dots,p^N-1\}}\mapsto f&\in A^r(\Omega). && \text{(e)} 
\end{align*}
The map (a) is an embedding due to the definition of Sobolev spaces (in fact, even as an isomorphism due to the first half of the proof), (b) is an isomorphism due to the definition of Sobolev spaces, (c) is an embedding by Theorem \ref{thmemb1}, (d) is an isomorphism by Lemma \ref{lem49}, and (e) is an ismorphism by Lemma \ref{lem48}.
Taking the composition one arrives at the conclusion.
\end{proof}

We discuss the existence of multiscale decompositions and the additional condition (B7)
for some classes $\Omega$ in Section \ref{sec52}.

\section{Embedded traces}\label{sec4}

\subsection{Abstract trace space as an approximation space}\label{sec41}

Let $\Omega$ be an open set in $\R^d$ (as in Subsection \ref{ssmult}) or in a $d$-dimensional manifold (as in Subsection \ref{sec44}) admitting a $p$-multiscale decomposition $(\Omega_{n,k})$. We introduce an operator of identification $I_\Omega$ between the functions defined on $\cZ$ (see Subsection \ref{sec-abstrace}) and the functions defined on $\Omega$ as follows. First, for each $z\in \cZ$ consider the basis sequences
\[
e_z:=(\delta_{z,\zeta})_{\zeta\in \cZ}.
\]
Then we consider the linear map
\begin{gather}
I_\Omega: \ \operatorname{span}\{e_z:\, z\in\cZ\}\to \operatorname{span}\{\one_{\Omega_{n,k}}:\ n\in\N_0,\ k=0,\dots,p^n-1\},\nonumber \\
I_\Omega: e_z\mapsto \begin{cases}
	\one_\Omega, & z=\rad,\\[\bigskipamount]
	\displaystyle	p^{\frac{n}{2}}\,\sum_{j=0}^{p-1} \theta_s^j \one_{\Omega_{n+1,pk+j}},& z=(n,k,s).
\end{cases}
  \label{eq-iom}
\end{gather}

\begin{prop}[Euclidean case]\label{proptrace1}
Let $\Omega\subset\R^d$	be a bounded open set with Lipschitz boundary and the decomposition $(\Omega_{n,k})$ be regular and \underline{strongly} balanced.
Then for any $r\ge 0$ the map $I_\Omega$ extends by continuity to an isomorphism between $\ell^2_r(\Omega)$ and $A^{rd}(\Omega)$.
\end{prop}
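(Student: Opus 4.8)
The plan is to show that $I_\Omega$ carries the canonical basis $(e_z)_{z\in\cZ}$ of $\ell^2_r(\cZ)$ to a family of pairwise orthogonal functions of constant norm $\sqrt{|\Omega|}$ that is perfectly adapted to the multiscale decomposition $L^2(\Omega)=\bigoplus_{n\ge0}U_n$ from \eqref{un0}, and then to read off the $A^{rd}(\Omega)$-norm through the equivalent norm of Lemma~\ref{lem35}. First I would locate the images. Clearly $I_\Omega e_\rad=\one_\Omega\in V_0=U_0$. For $z=(n,k,s)$ the function $I_\Omega e_z=p^{n/2}\sum_{j}\theta_s^j\one_{\Omega_{n+1,pk+j}}$ lies in $V_{n+1}$, and I claim it is orthogonal to $V_n$, hence lies in $U_{n+1}$. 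Since the $\Omega_{n,m}$ are disjoint and each $\Omega_{n+1,pk+j}\subset\Omega_{n,k}$, the inner product $\langle I_\Omega e_z,\one_{\Omega_{n,m}}\rangle$ vanishes for $m\ne k$, while for $m=k$ the strong balance (A4*) gives $|\Omega_{n+1,pk+j}|=|\Omega|/p^{n+1}$ independently of $j$, so the inner product is proportional to $\sum_{j=0}^{p-1}\theta_s^j=0$ by \eqref{thetas}. Thus $I_\Omega e_z\in U_{\nu(z)+1}$ for every $z$ (with the convention $\nu(\rad)=-1$).

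Next I would prove that, for each $n\ge1$, the vectors $\{I_\Omega e_z:\nu(z)=n-1\}$ form an orthogonal basis of $U_n$ (and $\{I_\Omega e_\rad\}$ one of $U_0$). Orthogonality is again a root-of-unity computation: for $(n-1,k,s)\ne(n-1,k',s')$ the disjointness of supports forces $k=k'$, and then (A4*) reduces the inner product to $\tfrac{|\Omega|}{p}\sum_{j}(\theta_s/\theta_{s'})^j=0$ because $s\ne s'$. The same computation with $s=s'$ gives $\|I_\Omega e_z\|_{L^2(\Omega)}^2=|\Omega|$ for all $z$. A dimension count then closes the argument: the indicators $\one_{\Omega_{n,k}}$ are linearly independent (disjoint supports of positive measure $|\Omega|/p^n$), so $\dim V_n=p^n$ and $\dim U_n=p^{n-1}(p-1)$ for $n\ge1$, which is exactly the number of $z$ with $\nu(z)=n-1$; an orthogonal family of the correct cardinality is a basis.

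With this structure in hand the rest is bookkeeping. For a finitely supported sequence $(a_z)$ the vector $I_\Omega(a_z)=\sum_z a_z I_\Omega e_z$ has $U_n$-component $Q_n I_\Omega(a_z)=\sum_{\nu(z)=n-1}a_z I_\Omega e_z$, so by orthogonality $\|Q_n I_\Omega(a_z)\|_{L^2(\Omega)}^2=|\Omega|\sum_{\nu(z)=n-1}|a_z|^2$. Inserting this into the equivalent $A^{rd}$-norm of Lemma~\ref{lem35} (applied with approximation order $rd$, so that the weight is $p^{2nr}$) and reindexing via $n=\nu(z)+1$ gives
\[
\|I_\Omega(a_z)\|_{A^{rd}(\Omega)}^2\;\asymp\;\sum_{n\ge0}p^{2nr}|\Omega|\!\!\sum_{\nu(z)=n-1}\!\!|a_z|^2=|\Omega|\,p^{2r}\sum_{z\in\cZ}p^{2r\nu(z)}|a_z|^2=|\Omega|\,p^{2r}\,\|(a_z)\|_{\ell^2_r(\cZ)}^2 .
\]
This two-sided estimate shows $I_\Omega$ extends continuously to a bounded, bounded-below (hence injective, closed-range) map $\ell^2_r(\cZ)\to A^{rd}(\Omega)$. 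Surjectivity follows by running the expansion backwards: given $f\in A^{rd}(\Omega)$, expand $f=\sum_n Q_n f$, express each $Q_n f$ in the orthogonal basis of $U_n$ to read off coefficients $(a_z)$, and note that the displayed identity forces $(a_z)\in\ell^2_r(\cZ)$ with $I_\Omega(a_z)=f$. Hence $I_\Omega$ is an isomorphism.

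I expect the only genuinely delicate point to be the orthogonal-basis claim of the second step — checking that the root-of-unity cancellation and the equal-measure hypothesis (A4*) combine to make the family both orthogonal and of full rank in $U_n$; everything afterwards is a direct transcription of Lemma~\ref{lem35}. It is worth stressing that strong balance is used in an essential way here: the equal measures $|\Omega_{n+1,pk+j}|$ are precisely what make the $\theta_s$-sums vanish cleanly, and for a merely weakly balanced decomposition one would expect only a Riesz-basis estimate in place of exact orthogonality.
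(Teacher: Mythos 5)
Your proposal is correct and follows essentially the same route as the paper: the same root-of-unity and equal-measure computation shows $I_\Omega e_z\in U_{\nu(z)+1}$ with $\langle I_\Omega e_z,I_\Omega e_\zeta\rangle_{L^2(\Omega)}=|\Omega|\delta_{z,\zeta}$, and the same isometry identity against the equivalent norm of Lemma~\ref{lem35} gives boundedness above and below. The only (immaterial) difference is in packaging surjectivity — you identify $\{I_\Omega e_z:\nu(z)=n-1\}$ as an orthogonal basis of each $U_n$ by counting $\dim U_n=p^{n-1}(p-1)$, whereas the paper counts $\dim I_\Omega(W_N)=\dim V_N=p^N$ and invokes density of $\bigcup_N V_N$ plus closedness of the range; both are valid.
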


\begin{proof}
(i) The linear span of $e_\zeta$ is dense in $\ell^2_r(\cZ)$,
and $\langle e_z,e_\zeta\rangle_{\ell^2_r(\cZ)}=p^{2r\nu(z)}\delta_{z,\zeta}$ for all $z,\zeta\in \cZ$.

(ii) Now remark that $I_\Omega e_\rad\in V_0$ and $I_\Omega e_{n,k,s}\in V_{n+1}$. At the same time (using the fact that the decomposition is strongly balanced), 
\[
\int_{\Omega_{n,k}}I_\Omega e_{n,k,s} (x)\dd x=p^{\frac{n}{2}}\sum_{j=0}^{p-1} \theta_s^j |\Omega_{n+1,pk+j}|=
p^{\frac{n}{2}}\frac{|\Omega_{n,k}|}{p}\sum_{j=0}^{p-1} \theta_s^j=0,
\]
and for any $k_0\ne k$ one has
\[
\int_{\Omega_{n,k_0}}I_\Omega e_{n,k,s} (x)\dd x=0
\]
as $e_{n,k,s}$ vanishes identically in $\Omega_{n,k_0}$. This shows that $I_\Omega e_{n,k,s}$
is orthogonal in $L^2(\Omega)$ to all $\one_{\Omega_{n,k_0}}$, $k_0\in\{0,\dots,p-1\}$,
in other words, $I_\Omega e_{n,k,s}\perp V_n$. Therefore, we have shown that
$I_\Omega e_\rad\in U_0$ and $I_\Omega e_{n,k,s}\in U_{n+1}$ or, in other words,
\begin{equation}
	\label{uu00}
I_\Omega e_z\in U_{\nu(z)+1} \text{ for all } z\in \cZ.
\end{equation}
As the subspaces $U_j$ are mutually orthogonal in $L^2(\Omega)$, one has $\langle I_\Omega e_z,I_\Omega e_\zeta\rangle_{L^2(\Omega)}=0$ for $\nu(z)\ne\nu(\zeta)$. In addition, if $z=(n,k_1,s_1)$ and $\zeta=(n,k_2,s_2)$ with $k_1\ne k_2$, then $I_\Omega e_z$ and $I_\Omega e_\zeta$ have disjoint supports (contained in the disjoint sets $\Omega_{n,k_1}$ and $\Omega_{n,k_2}$), so one has again $\langle I_\Omega e_z,I_\Omega e_\zeta\rangle_{L^2(\Omega)}=0$.
Finally,
\begin{align*}
\langle I_\Omega e_{n,k,s},	I_\Omega e_{n,k,s'}\rangle_{L^2(\Omega)}&=
p^n\sum_{j=0}^{p-1} \theta_s^j \overline{\theta_{s'}^j} |\Omega_{n+1,pk+j}|\\
&=p^n\dfrac{|\Omega_{n,k}|}{p}\sum_{j=0}^{p-1} \Big(\dfrac{\theta_s}{\theta_{s'}}\Big)^j=p^n|\Omega_{n,k}|\delta_{s,s'}.
\end{align*}
Alltogether we obtain
\begin{equation}
   \label{zzz}
\langle  I_\Omega e_z,I_\Omega e_\zeta\rangle_{L^2(\Omega)}=|\Omega|\,\delta_{z,\zeta}, \quad
z,\zeta\in\cZ.
\end{equation}

(iii) We will equip $A^{rd}(\Omega)$	with the norm
\[
\|f\|^2_{A,rd}:=\sum_{n=0}^\infty p^{2rn}\|Q_n f\|^2_{L^2(\Omega)},
\]
see Lemma \ref{lem35}.
For $N\in\N$ we denote $W_N:=\operatorname{span}\{e_z: \nu(z)\le N-1\}\subset \ell^2_r(\cZ)$.
Let $f\in W_N$, then
\begin{gather*}
f=\sum_{\nu(z)\le N-1} f_z e_z, \quad f_z\in \C,\\
I_\Omega f=\sum_{n=0}^{N} F_n, \quad F_n:=\sum_{\nu(z)=n-1}
f_z I_\Omega e_z.
\end{gather*}
Due to \eqref{uu00} one has $Q_n I_\Omega f=F_n$ for all $n\le N$ and $Q_n I_\Omega f=0$ for $n\ge N+1$, therefore,
\begin{align*}
	\|f\|^2_{A,rd}&=\sum_{n=0}^{N} p^{2rn} \|F_n\|^2_{L^2(\Omega)}\stackrel{\eqref{zzz}}{=}
\sum_{n=0}^{N} p^{2rn} |\Omega|\sum_{\nu(z)=n-1} |f_z|^2\\
&=p^{2r}|\Omega|\sum_{z\in \cZ} p^{2 r\nu(z)} |f_z|^2\equiv p^{2r}|\Omega| \big\| (f_z)\big\|^2_{\ell^2_r(\cZ)},
\end{align*}	
which shows that $I_\Omega$ is an isometry (up to a constant factor), in particular,
it is bounded and extends an isometry between of $\ell^2_r(\cZ)$ and some closed subspace $\ran I_\Omega\subset A^{rd}(\Omega)$.

(iv) It remains to show that $\ran I_\Omega=A^{rd}(\Omega)$. Remark that by construction we have $\dim W_N=p^{n}$. At the same time, $I_\Omega W_N\subset V_{N}$, so we obtain
\[
p^{N}=\dim W_N=\dim I_\Omega (W_N)\le \dim V_{N}=p^{N},
\]
which shows that $I_\Omega (W_N)= V_{N}$ for any $N$. As $N$ can be arbitrarily large, $\ran I_\Omega$ contains any finite linear combination of $\one_{\Omega_{n,k}}$.
As these linear combinations span a dense subset of $A^{rd}(\Omega)$ and $\ran I_\Omega$ is closed, we have $\ran I_\Omega=A^{rd}(\Omega)$.
\end{proof}
%
%

\begin{prop}[Manifold case]\label{proptrace2}
Let $\Omega$ be an open set with compact closure in a manifold of bounded geometry 
and the decomposition $(\Omega_{n,k})$ be regular and \underline{strongly} balanced.
Let $0\le s<1$ with $s\le rd$, then the map $I_\Omega$ extends by continuity to an
embedding $\ell^2_r(\Omega) \hookrightarrow A^{s}(\Omega)$. For  $s=rd<1$ this embedding is an isomorphism.
\end{prop}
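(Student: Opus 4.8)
The plan is to reduce the whole statement to the endpoint isomorphism $I_\Omega\colon\ell^2_r(\cZ)\to A^{rd}(\Omega)$ for $rd<1$, and then to transfer the Euclidean Proposition~\ref{proptrace1} to the manifold through the charts supplied by (B4)--(B7). Indeed, once the isomorphism is known for every $r\ge0$ with $rd<1$, the general embedding follows for free: if $0\le s<1$ and $s\le rd$, then $s/d\le r$, so the inclusion $\ell^2_r(\cZ)\hookrightarrow\ell^2_{s/d}(\cZ)$ is bounded and injective (the supremum of the weight ratios in \eqref{l2r} is finite as $s/d-r\le0$), and composing it with the isomorphism $\ell^2_{s/d}(\cZ)\to A^{(s/d)d}(\Omega)=A^s(\Omega)$ gives exactly the asserted continuous embedding. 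So the entire content sits at $s=rd<1$.

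First I would localize. Fix $N$, the charts $\Phi_{N,K}$ and the sets $\Tilde\Omega_{N,K}=\Phi_{N,K}^{-1}(\Omega_{N,K})$ as in (B4)--(B7); each $\Tilde\Omega_{N,K}$ is a bounded Lipschitz domain carrying a regular \emph{strongly} balanced decomposition. For each $K$, Proposition~\ref{proptrace1} gives an isomorphism $I_{\Tilde\Omega_{N,K}}\colon\ell^2_r(\cZ^{N,K})\to A^{rd}(\Tilde\Omega_{N,K})$, where $\cZ^{N,K}$ is the index set of the decomposition $(\Omega^{N,K}_{n,k})$ from \eqref{omk}. Since the pullback $\Theta_{N,K}\colon f\mapsto f\circ\Phi_{N,K}$ sends $\one_{\Omega^{N,K}_{n,k}}$ to $\one_{\Tilde\Omega^{N,K}_{n,k}}$, it intertwines the two identification operators \eqref{eq-iom}, and Lemma~\ref{lem49} (applicable precisely because $rd<1$) makes $\Theta_{N,K}$ an isomorphism $A^{rd}(\Omega_{N,K})\to A^{rd}(\Tilde\Omega_{N,K})$. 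Hence $I_{\Omega_{N,K}}:=\Theta_{N,K}^{-1}I_{\Tilde\Omega_{N,K}}$ is an isomorphism $\ell^2_r(\cZ^{N,K})\to A^{rd}(\Omega_{N,K})$ agreeing with \eqref{eq-iom} on finite sums. Lemma~\ref{lem48} then supplies the isomorphism $J\colon A^{rd}(\Omega)\to\bigoplus_K A^{rd}(\Omega_{N,K})$, $Jf=(f|_{\Omega_{N,K}})_K$.

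Next I would match global and local modes. Split $\cZ$ into the \emph{coarse} part $\{z:\nu(z)\le N-1\}$, which spans a finite-dimensional $W_N\subset\ell^2_r(\cZ)$ with $\dim W_N=p^N$, and the \emph{fine} part $\{z:\nu(z)\ge N\}$. A fine index $z=(N+m,p^mK+k',s)$ has support $\Omega^{N,K}_{m,k'}$ inside a single $\Omega_{N,K}$, and a direct computation from \eqref{eq-iom} gives $I_\Omega e_z|_{\Omega_{N,K}}=p^{N/2}\,I_{\Omega_{N,K}}e_{(m,k',s)}$ while $I_\Omega e_z|_{\Omega_{N,K'}}=0$ for $K'\ne K$; moreover $\nu(z)=N+m$ matches the local $\ell^2_r$-weight up to a fixed power of $p$. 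Thus the fine indices of block $K$ correspond bijectively to the \emph{non-radial} local indices $\cZ^{N,K}\setminus\{\rad\}$. For a coarse $z$ one has $I_\Omega e_z\in V_N$, so $I_\Omega e_z|_{\Omega_{N,K}}$ is a scalar multiple of $\one_{\Omega_{N,K}}=I_{\Omega_{N,K}}e_{\rad}$; and $I_\Omega|_{W_N}\colon W_N\to V_N$ is a bijection because both spaces have dimension $p^N$ and the $\{I_\Omega e_z\}_{\nu(z)\le N-1}$ are linearly independent. This linear independence of explicit combinations of the disjointly supported $\one_{\Omega_{N,k}}$ is purely algebraic — the invertible node-wise discrete Fourier change of basis — hence measure-independent, exactly as in the proof of Proposition~\ref{proptrace1}; so the coarse modes correspond, through $J$, to the \emph{radial} local modes.

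Finally I would assemble and extend. Using these correspondences I define a block-diagonal map $S\colon\ell^2_r(\cZ)\to\bigoplus_K\ell^2_r(\cZ^{N,K})$ which on the fine part is the scaled isometry $e_z\mapsto p^{N/2}e_{(m,k',s)}$ into the $K$-th summand, and on $W_N$ is the finite-dimensional bijection onto $\bigoplus_K\C e_{\rad}$ reading off the coefficients $c_{z,K}$ in $I_\Omega e_z|_{\Omega_{N,K}}=c_{z,K}\one_{\Omega_{N,K}}$; both blocks are isomorphisms onto orthogonal complementary subspaces, so $S$ is an isomorphism, defined and bounded without reference to $I_\Omega$. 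With $\mathcal I:=\bigoplus_K I_{\Omega_{N,K}}$, the composition $J^{-1}\mathcal I S$ is an isomorphism $\ell^2_r(\cZ)\to A^{rd}(\Omega)$, and one checks on the dense span $\operatorname{span}\{e_z\}$ that it coincides with $I_\Omega$; it is therefore the sought continuous extension, proving the isomorphism (and, by the first paragraph, the embedding). The hard part — and the reason the manifold statement is genuinely weaker than the Euclidean one — is that strong balance in the Riemannian volume is imposed only at the top level (B6): below level $N$ the volumes $|\Omega_{n,k}|$ need not be equidistributed, so the orthogonality \eqref{zzz} and the clean grading $I_\Omega e_z\in U_{\nu(z)+1}$ of \eqref{uu00} both fail on $\Omega$ itself. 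The remedy is to push all fine scales into the charts, where (B7) restores Euclidean strong balance, while the incompatibility of the coarse scales is absorbed into the finite-dimensional bijection $I_\Omega|_{W_N}\colon W_N\to V_N$; keeping the radial/coarse and non-radial/fine modes correctly aligned throughout is the delicate bookkeeping that the proof must carry out.
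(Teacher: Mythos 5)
Your proof is correct and follows essentially the same route as the paper's: localization into the charts via Proposition~\ref{proptrace1} and Lemmas~\ref{lem48}--\ref{lem49}, a finite-dimensional bijection matching the coarse modes $W_N$ to $V_N$, and a block re-indexing of the fine modes (your map $S$ is the paper's $\Psi$ up to the $p^{N/2}$ normalization). The only cosmetic differences are that you reduce the case $s<rd$ to the endpoint via the weighted inclusion $\ell^2_r(\cZ)\hookrightarrow\ell^2_{s/d}(\cZ)$ where the paper instead runs the chart argument directly at level $s$, and that you justify the injectivity of $I_\Omega|_{W_N}$ by the algebraic invertibility of the hierarchical discrete Fourier change of basis rather than by the paper's isometry computation (both are valid, since (B6) together with (B3) makes the levels $0,\dots,N$ strongly balanced in the Riemannian volume).
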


\begin{proof}
Let $N$, $\Phi_{N,K}$ and $\Tilde \Omega_{n,k}$ be as in (B4)--(B5).

(i) If $N=0$, then the map $\Tilde I_\Omega: \xi\mapsto (I_\Omega \xi)\circ \Phi_{0,0}$ is covered by Proposition \ref{proptrace1} and defines an isomorphism between $\ell^2_r(\cZ)$ and $A^{rd}(\Tilde \Omega)$, for $\Tilde\Omega:=\Tilde \Omega_{0,0}\equiv \Phi_{0,0}^{-1}(\Omega)$.

If $rd<1$, then it follows by Lemma \ref{lem49} that  $I_\Omega$ is an isomorphism between $\ell^2_r(\cZ)$ and $A^{rd}(\Omega)$, and $A^{rd}(\Omega)\hookrightarrow A^s(\Omega)$ for $0\le s\le rd$.

If $rd\ge 1$, using $A^{rd}(\Tilde \Omega)\hookrightarrow A^s(\Tilde\Omega)$ we obtain $\Tilde I_\Omega: \ell^2_r(\cZ)\hookrightarrow A^s(\Tilde\Omega)$,
and Lemma \ref{lem49} gives $I_\Omega: \ell^2_r(\cZ)\hookrightarrow A^s(\Omega)$.

(ii) Now assume that $N\ge 1$ and consider
\begin{align*}
W_N&:=\operatorname{span}\{ e_z:\, \nu(z)\le N-1\},\\	
O_N&:=\operatorname{span}\{\one_{\Omega_{N,K}}:\, K=0,\dots,p^N-1\},
\end{align*}
then by construction one has $I_\Omega (W_N)\subset O_N$.
We will equip $A^{rd}(\Omega)$	with the norm
\[
\|f\|^2_{A,rd}:=\sum_{n=0}^\infty p^{2rn}\|Q_n f\|^2_{L^2(\Omega)},
\]
see Proposition \ref{PQod}. The computations (i)-(iii) in the proof of Proposition \ref{proptrace1} show that
\[
\|I_\Omega f\|^2_{A,rd}=p^{2r}|\Omega| \big\| (f_z)\big\|^2_{\ell^2_r(\cZ)}
\text{ for any $f\in W_N$.}
\]
in particular, $I_\Omega:W_N\to O_N$ is injective.
As both $W_N$ and $O_N$ have the same dimension $p^N$, the map $I_\Omega:W_N\to O_N$ is a linear isomorphism, and one can find a basis $b^0,\dots,b^{p^N-1}$ in $W_N$ such that
\[
I_\Omega b^K=p^{\frac{N}{2}} \one_{\Omega_{N,K}} \text{ for each } K\in\{0,\dots,p^N-1\}.
\]

(iii) For $\xi\in \ell^2_r(\cZ)$ define $\xi^N\in W_N$ by
\[
\xi^N_z:=\begin{cases}
	\xi_z, & \nu(z)\le N-1,\\
	0, & \text{otherwise,}
	\end{cases}
\]
and let $\gamma_K(\xi)$, $K\in\{0,\dots,p^N-1\}$, be the coordinates of $\xi^N$ in the basis $(b^K)$. Now we consider the map
\begin{gather*}
\Psi:\ \ell^2_r(\cZ)\ni (\xi_z)\mapsto \eta\equiv (\eta^0,\dots,\eta^{p^N-1})\in\bigoplus_{K=0}^{p^N-1} \ell^2_r(\cZ),\\
\eta^K_z=\begin{cases}
	\gamma_K(\xi), & z=\rad,\\
	\xi_{N+n,p^n K+k,s}, & z=(n,k,s),
\end{cases}
\quad K\in \{0,\dots,p^N-1\}.
\end{gather*}
By construction $\Psi$ is an isomorphism, with $\Psi^{-1}$ given by
\[
\big(\Psi^{-1}(\eta^0,\dots,\eta^{p^N-1})\big)_z=
\begin{cases}
	\displaystyle \sum_{K=0}^{p^N-1} \eta^K_\rad b^K_z, & \nu(z)\le N-1,\\
\eta^K_{n-N,k-p^{n-N}K,s}, & z=(n,k,s),\ n\ge N,\\
& p^{n-N}K\le k<p^{n-N}(K+1).
\end{cases}
\]
One computes
\[
\big(\Psi^{-1}(\underbrace{0,\dots,0}_{K-1 \text{ times}},e_\lambda,0,\dots,0)\big)_z
=\begin{cases}
	b^K_z,& \lambda=\rad,\ \nu(z)\le N-1,\\
	1, & \lambda=(n,k,s),\ z=(n+N,p^{n} K+k,s),\\
	0, & \text{otherwise},
\end{cases}
\]
or, equivalently,
\[
\Psi^{-1}(\underbrace{0,\dots,0}_{K-1 \text{ times}},e_\lambda,0,\dots,0)=\begin{cases}
	b^K, & \lambda=\rad,\\
	e_{(n+N,p^{n}K+k,s)}, & \lambda=(n,k,s).
\end{cases}
\]
Due to the definition of $I_\Omega$ one has then
\begin{align*}
I_\Omega\Psi^{-1}(\underbrace{0,\dots,0}_{K-1 \text{ times}},e_\lambda,0,\dots,0)
&=\begin{cases}
	p^{\frac{N}{2}} \one_{\Omega_{N,K}}, & \lambda=\rad,\\
	\displaystyle p^{\frac{N+n}{2}}\sum_{j=0}^{p-1} \theta_s^j \one_{\Omega_{N+n+1, p^{n+1}K+pk+j}}
\end{cases}\\
	&
	=
	p^{\frac{N}{2}}J_{N,K} I_{\Omega_{N,K}} e_\lambda,
\end{align*}
where
\begin{itemize} 
	\item $J_{N,K}$ is the operator of extension by zero from $\Omega_{N,K}$ to $\Omega$,
	\item $I_{\Omega_{N,K}}:\ell^2_r(\cZ)\to A^{rd}(\Omega_{N,K})$ is the identification operator for the decomposition $(\Omega_{N+k,p^n K+k})_{n,k}$, which is already covered by (i). 
\end{itemize}	

(iv) The above computations show that $I_\Omega \Psi^{-1}$ acts as
\[
I_\Omega \Psi^{-1}(\eta^0,\dots,\eta^{p^N-1})=p^{\frac{N}{2}}\sum_{K=0}^{p^K-1}J_{N,K} I_{\Omega_{N,K}} \eta^K.
\]
By (i), each $I_{\Omega_{N,K}}$ is an isomorphism (for $s=rd<1$) or an embedding (for all other cases), and it follows by Lemma \ref{lem48} that
\[
I_\Omega \Psi^{-1}: \bigoplus_{K=0}^{p^N-1} \ell^2_r(\cZ)\to A^{rd}(\Omega)
\]
is an isomorphism (for $s=rd<1$) or an embedding (for all other cases),
and then $I_\Omega\equiv (I_\Omega \Psi^{-1})  \Psi$ preserves the same properties.
\end{proof}

\subsection{Embedded trace operator}\label{sec42}
For all assertions in this subsection, let $\Omega$ be an open set with compact closure in a $d$-dimensional manifold of bounded geometry $S$ admitting a regular strongly balanced $p$-multiscale decomposition $(\Omega_{n,k})$ as described in Subsections \ref{ssmult} and \ref{sec44}.

Recall (Theorem \ref{th28}) that we have constructed an abstract
trace operator
\[
\tau:\, H^1(\T)\to\ell^2_\sigma(\cZ)
\]
with 
\begin{equation}
	\label{eq-sigma}
\sigma:=\dfrac{1}{2\log p}\log\dfrac{\alpha p}{\ell}\equiv \frac{1}{2}\Big ( 1-\frac{\log\ell-\log\alpha}{\log p}\Big)>0,
\end{equation}
which is bounded and surjective with $\ker\tau=H^1_0(\T)$.
We recall that $p\in \N$
with $p\ge 2$ and that the parameters $\alpha$ and $\ell$ satisfy 
\begin{equation}
	\label{lap2}
0<\ell<1, \quad 	\ell<\alpha p<\dfrac{1}{\ell},
\end{equation}
see Lemma \ref{onb-delta}. We define the identification/embedding operator
\[
I_\Omega:\ell^2_\sigma(\cZ)\to A^{s}(\Omega)
\]
as in Propositions \ref{proptrace1} and \ref{proptrace2}.
This gives rise to the \emph{embedded trace operator}
\[
\gamma_\Omega:=I_\Omega \tau:H^1(\T)\to A^{s}(\Omega),
\]
with the following options for $s$:
\begin{itemize}
	\item if $\Omega$ is a $d$-dimensional Euclidean open set (as in Subsection \ref{ssmult}), then $\gamma_\Omega$ is a bounded linear operator for any $0\le s\le \sigma d$, surjective for $s=\sigma d$,
	\item if $\Omega$ is an open set in $d$-dimensional manifold (as in Subsection \ref{sec44}), then $\gamma_\Omega$ is a bounded
	linear operator for any $0\le s< 1$ such that $s\le \sigma d$, surjective for $s=\sigma d<1$.
\end{itemize}
In all these cases one has by construction
\[
\ker \gamma_\Omega=\ker\tau= H^1_0(\T).
\]

In addition, using the identification between the approximation and Sobolev spaces (Theorem \ref{thmappr}) we obtain:
\begin{itemize}
	\item if $\Omega$ is an open set in $d$-dimensional manifold, then
	\[
	\gamma_\Omega: H^1(\T)\hookrightarrow H^s(\Omega)
	\text{ for and $0\le s<\frac{1}{2}$ with $s\le \sigma d$.}	
	\]
	In particular,
	\begin{equation}
		\label{tracesob}
	\gamma_\Omega\big(H^1(\T)\big)=H^{\sigma d}(\Omega) \text{ if } \sigma d<\frac{1}{2}.
	\end{equation}
\end{itemize}

\begin{remark}
	It is useful to check that the condition $\sigma d<\frac{1}{2}$ in \eqref{tracesob} can really be satisfied under the restrictions \eqref{eq-sigma} and \eqref{lap2}. In view of \eqref{eq-sigma} the condition can be rewritten as
	\[
	\dfrac{d}{\log p}\log\dfrac{\alpha p}{\ell}<1
	\quad
	\text{i.e.}\quad
	\dfrac{\alpha p}{\ell}<p^{\frac{1}{d}},
	\]
	so together with \eqref{lap2} we arrive at
	\[
	0<\ell<1, \quad
	1<\dfrac{\alpha p}{\ell}<\min\big\{ \dfrac{1}{\ell^2}, p^\frac{1}{d}\big\}.
	\]
Therefore, if one fixes arbitrary $\ell\in(0,1)$	and $p\in \N$ with $p\ge 2$, the required condition is satisfied
for 
\[
\dfrac{\ell}{p}<\alpha<\dfrac{\ell}{p} \, \min\big\{ \dfrac{1}{\ell^2}, p^\frac{1}{d}\big\},
\]
i.e. for a non-trivial range of $\alpha$.
\end{remark}

Finally, we give a more illustrative description of the embedded trace operator, which uses more classical terms:
\begin{theorem}[Embedded trace using limit values]\label{thm-geom}
Let $0\le s <\frac{1}{2}$ with $s\le \sigma d$, then for any $f\in H^1(\T)$ there holds
\[
\gamma_\Omega f =\lim_{N\to \infty} \sum_{K=0}^{p^N-1} f(X_{N,K})\,\one_{\Omega_{N,K}},
\]
where the limit is taken in $H^s(\Omega)$.
\end{theorem}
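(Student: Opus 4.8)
The plan is to combine the continuity of $\gamma_\Omega$ with the approximation Lemma~\ref{proplim} and then to identify the finite-rank objects $\gamma_\Omega f_N$ explicitly. First I would recall that $\gamma_\Omega=I_\Omega\tau$ is bounded from $H^1(\T)$ to $A^s(\Omega)$ and that, for $0\le s<\tfrac12$, the norms of $A^s(\Omega)$ and $H^s(\Omega)$ are equivalent by Theorem~\ref{thmappr}. By Lemma~\ref{proplim} one has $f_N\to f$ in $H^1(\T)$, so $\gamma_\Omega f_N\to\gamma_\Omega f$ in $A^s(\Omega)=H^s(\Omega)$. Since $f_N$ is, by construction, equal to $f$ on $\T^N$ and equal to the constant $f(X_{N,K})$ on each subtree $\T_{N,K}$, the theorem reduces to the algebraic identity
\[
I_\Omega\tau f_N=\sum_{K=0}^{p^N-1}f(X_{N,K})\,\one_{\Omega_{N,K}}=:g_N
\]
for every fixed $N$.

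Next I would pin down $\tau f_N$. Because $f_N$ is locally constant beyond level $N$, its harmonic part contains no fine-scale detail: writing $\Tilde f_N$ for the modification of $f_N$ lying in $\Tilde H^1(\T)$ and $\Tilde P_\Delta\Tilde f_N=\sum_{z}a_z\phi_z$ in the orthonormal basis of Lemma~\ref{onb-delta}, the coefficient $a_z=\langle \Tilde f_N',\phi_z'\rangle_{L^2(\T)}$ vanishes whenever $\nu(z)\ge N$, since $\phi_z$ is then supported on a subtree (of radius $\ge t_{N-1}\ge 1$) on which $f_N$ is constant. Hence $\tau f_N$ is supported on $\{z:\nu(z)\le N-1\}$, and since $I_\Omega e_z\in U_{\nu(z)+1}$ by \eqref{uu00}, we get $I_\Omega\tau f_N\in\bigoplus_{m=0}^N U_m=V_N$, so $I_\Omega\tau f_N$ and $g_N$ both live in $V_N=\operatorname{span}\{\one_{\Omega_{N,K}}\}$. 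The key point is then a reconstruction formula for the vertex values: as $f_N$ is constant on $\T_{N,K}$, its boundary limit along any ray entering $\T_{N,K}$ equals $f(X_{N,K})$, while $\Tilde P_0\Tilde f_N\in\Tilde H^1_0(\T)$ has vanishing boundary limit along every ray (a consequence of the one-dimensional structure and Lemma~\ref{lem39}). Taking the boundary limit of the \emph{finite} sum $\Tilde P_\Delta\Tilde f_N=\sum_{\nu(z)\le N-1}a_z\phi_z$ and using the explicit limits from \eqref{finf1}--\eqref{finf2} together with the fact that $\phi_{n,k,s}$ equals $\theta_s^jF_n(|x|)$ on the $j$-th branch $\T_{n,k}^j$, I would obtain
\[
f(X_{N,K})=a_\rad F^\infty_\rad+\sum_{n=0}^{N-1}\sum_{s=1}^{p-1}a_{n,k_n(K),s}\,\theta_s^{\,j_n(K)}F^\infty_n,
\]
where $X_{n,k_n(K)}$ is the level-$n$ ancestor of $X_{N,K}$ and $j_n(K)$ the branch index along the path to $X_{N,K}$.

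Finally I would match this against the $\Omega$-side. Using the explicit form \eqref{eq-iom} of $I_\Omega$ and the strong balance $|\Omega_{n,k}|=|\Omega|/p^n$, one computes for each $K$ that $\langle g_N,\one_{\Omega_{N,K}}\rangle_{L^2(\Omega)}=f(X_{N,K})\,|\Omega|p^{-N}$, and
\[
\langle I_\Omega\tau f_N,\one_{\Omega_{N,K}}\rangle_{L^2(\Omega)}=\frac{|\Omega|}{p^N}\Big((\tau f_N)_\rad+\sum_{n=0}^{N-1}\sum_{s=1}^{p-1}p^{\frac n2}\theta_s^{\,j_n(K)}(\tau f_N)_{n,k_n(K),s}\Big),
\]
the restriction to ancestors of $X_{N,K}$ arising because $\one_{\Omega_{N,K}}$ meets $\Omega_{n+1,pk+j}$ only when $\Omega_{N,K}\subset\Omega_{n+1,pk+j}$. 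Substituting the normalization $(\tau f_N)_z=p^{-\nu(z)/2}F^\infty_{\nu(z)}a_z$ from the definition of $M$ (Subsection~\ref{sec-abstrace}), the factors $p^{\pm n/2}$ and $F^\infty_{\nu(z)}$ cancel and the bracket becomes exactly the reconstruction formula above, i.e. $f(X_{N,K})$. Since the Gram matrix of $\{\one_{\Omega_{N,K}}\}$ is diagonal, equality of all these pairings forces $I_\Omega\tau f_N=g_N$, completing the reduction.

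The hard part is the middle step: establishing the reconstruction formula, for which one must (i) justify that elements of $\Tilde H^1_0(\T)$ have zero boundary limit along each ray, and (ii) keep careful track of the three competing normalizations — the harmonic limits $F^\infty_{\nu(z)}$, the weights $p^{-\nu(z)/2}$ hidden in $M$, and the factors $p^{n/2}\theta_s^j$ built into $I_\Omega$ — so that they cancel and the combinatorics of the branches $\T_{n,k}^j$ matches the nesting of the sets $\Omega_{n+1,pk+j}$. Everything else is continuity and a finite-dimensional comparison.
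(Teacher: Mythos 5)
Your overall architecture is sound, and its first half coincides with the paper's: reduce via Lemma~\ref{proplim} and the boundedness of $\gamma_\Omega$ to the finite-level identity $\gamma_\Omega f_N=\sum_K f(X_{N,K})\one_{\Omega_{N,K}}$, and note that both sides live in $V_N$ so it suffices to match pairings against the mutually orthogonal $\one_{\Omega_{N,K}}$. Your observations that $a_z=0$ for $\nu(z)\ge N$ and the bookkeeping of the pairings on the $\Omega$-side (using strong balance and the nesting of the $\Omega_{n,k}$) are correct.

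The genuine gap is the step you yourself flag as ``the hard part'': the reconstruction formula, which you derive from the assertion that $\Tilde P_0\Tilde f_N\in\Tilde H^1_0(\T)$ ``has vanishing boundary limit along every ray (a consequence of the one-dimensional structure and Lemma~\ref{lem39})''. As a statement about general elements of $\Tilde H^1_0(\T)$ this is not true: Lemma~\ref{lem39} only controls the \emph{radial} and \emph{single-$(n,k,s)$} components, and the boundary limit of an infinite orthogonal sum along one fixed ray is not the sum of the componentwise limits. Worse, in the part of the admissible regime where $\alpha<\ell$ the restriction of an $H^1(\T)$-function to a single ray need not have any limit at all (one can build $g\in H^1_0(\T)$ as a sum of bumps supported on single edges $e_{n,0}$ of a fixed ray, with $H^1$-cost of order $(\alpha/\ell)^n\to 0$, whose restriction to that ray oscillates between $0$ and $1$). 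So the claim must be proved for the \emph{specific} function $\Tilde P_0\Tilde f_N$, using that it is radial on each subtree $\T_{N,K}$ (being the difference of $\Tilde f_N$, constant there, and a finite sum of $\phi_z$ with $\nu(z)\le N-1$, each radial there); one then needs an extra argument --- e.g.\ that restriction to $\T_{N,K}$ followed by spherical averaging maps $H^1_0(\T)$ into the radial $H^1_0$ of the subtree, after which the one-dimensional embedding of Lemma~\ref{lem39} applies. None of this is in your proposal, and it is exactly the point where the work lies. The paper avoids the issue entirely: it never computes the coefficients $a_z$, but instead introduces a cutoff $\varphi=U_\rad F$, observes that $\operatorname{span}\{\varphi\one_{\T_{N,K}}\}=\operatorname{span}\{U_zF:\nu(z)\le N-1\}$ as a $p^N$-dimensional subspace, evaluates $\tau$ on the basis $U_zF$ directly from Lemma~\ref{prop-unique}, and matches the two linear maps on that basis --- a purely finite-dimensional change of basis with no boundary-limit statement about $H^1_0$ needed. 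I would either adopt that route or supply the symmetrization argument above before the reconstruction formula can be used.
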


\begin{proof}
(i) Let $f\in H^1(\T)$. For $N\in \N$ consider $f_N:\T\to\C$ defined by
	\[
	f_N(x):= \begin{cases}
		f(x), & x\in \T^N,\\
		f(X_{N,k}), & x \in \T_{N,K}, \ K\in\{0,\dots,p^N-1\}.
	\end{cases}
	\]	
In Lemma	\ref{proplim} we have shown that $f_N\in H^1(\T)$ with
$f_N\xrightarrow{N\to \infty}f$ in $H^1(\T)$. Due to the boundedness of $\gamma_\Omega$
we have $\gamma_\Omega f_N\xrightarrow{N\to\infty} \gamma_\Omega f$ in $H^s(\Omega)$. Therefore, it is sufficient to show that for any $N\in\N$ one has
\begin{equation}
	\label{eq-temp1}
\gamma_\Omega f_N=\sum_{K=0}^{p^N-1} f(X_{N,K})\,\one_{\Omega_{N,K}}.
\end{equation}

(ii) Let $N\in\N$ be fixed. Pick a function $F\in H^1\big((0,L),q(t)\dd t\big)$ such that $F(t)=0$ for $t\le t_{N}$ and $F(t)=1$ for $t\ge t_{N+1}$. Define $\varphi:=U_\rad F\in H^1(\T)$, i.e. $\varphi:\T\ni x\mapsto F(|x|)$, then $\varphi \one_{\T_{N,K}}\in H^1(\T)$ 
for any $K\in\{0,\dots,p^N-1\}$.  We will show that
\begin{equation}
   \label{eq-temp2}
	\gamma_\Omega (\varphi \one_{\T_{N,K}})=\one_{\Omega_{N,K}} \text{ for any } K\in\{0,\dots,p^N-1\}.
\end{equation}
In fact, if \eqref{eq-temp2} is proved, then \eqref{eq-temp1} follows directly:
one has
\[
f_N=\sum_{K=0}^{p^N-1}f(X_{N,K})\varphi\,\one_{\T_{N,K}} \text{ in }\T\setminus\T^{N+1},
\]
which implies
\[
\gamma_\Omega f_N =\sum_{K=0}^{p^N-1}f(X_{N,K})\gamma_\Omega\big(\varphi \one_{\T_{N,K}}\big)\stackrel{\eqref{eq-temp2}}{=}\sum_{K=0}^{p^N-1}f(X_{N,K}) \one_{\Omega_{N,K}}.
\]

(iii) It remains to prove \eqref{eq-temp2}. Consider
\[
S_N:=\operatorname{span}\{\varphi 1_{\T_{N,K}}:\, K=0,\dots,p^N-1\}\subset H^1(\T).
\]
The functions $\varphi 1_{\T_{N,K}}$ form a basis of $S_N$, so $\dim S_N=p^N$.
Now we remark that that for any $z\in\cZ$ one has the inclusion $\dim S\cap H^1_z(\T)\subset \C U_z F$, and
\[
\dim \big(S\cap H^1_\rad(\T)\big)=1,
\quad
\dim \big(S\cap H^1_{n,k,s}(\T)\big)=\begin{cases}
	1, &n\le N-1,\\
	0, &n\ge N.
\end{cases}
\]
Due to the orthogonal decomposition $H^1(\T)=\bigoplus_{z\in\cZ} H^1_z(\T)$ we conclude
that
\[
S_N=\operatorname{span}\{U_z F:\, z\in\cZ,\, \nu(z)\le N-1\},
\]
and the functions $U_z F$ with $\nu(z)\le N-1$ form a basis in $S_N$.
Recall that by Lemma \ref{prop-unique} we have
$\tau U_z F=p^{-\frac{\nu(z)}{2}}e_z$ for $\nu(z)\le N-1$, and then, using \eqref{eq-iom},
\begin{equation*}
\gamma_\Omega U_z F=p^{-\frac{\nu(z)}{2}} I_\Omega e_z=
	  \begin{cases}
		\one_\Omega, & z=\rad,\\[\bigskipamount]
		\displaystyle \sum_{j=0}^{p-1} \theta_s^j \one_{\Omega_{n+1,pk+j}},& z=(n,k,s),
	\end{cases}
\quad \nu(z)\ne N-1.
\end{equation*}
Recall that $U_\rad F=\varphi$ and that for $(n,k,s)\in\cZ$ with  $n\le N-1$ one has
\[
U_{n,k,s} F=\sum_{j=0}^{p-1} \theta_s^j \varphi \one_{\T_{n+1,pk+j}}.
\]

Now let us define a linear map $R:S_N\to L^2(\Omega)$ by
\[
R(\varphi \one_{\T_{N,K}}):=\one_{\Omega_{N,K}} \text{ for any } K\in\{0,\dots,p^N-1\}.
\] 
Using the linearity one obtains
\[
R(U_\rad F)=R(\varphi)=R\Big( \sum_{K=0}^{p^N-1} \varphi\one_{\T_{N,K}}\Big)=
	\sum_{K=0}^{p^N-1} R(\varphi \one_{\T_{N,K}})=\sum_{K=0}^{p^N-1} \one_{\Omega_{N,K}}=\one_\Omega,
\]
and for any $(n,k,s)\in\cZ$ with $n\le N-1$
\begin{align*}
R(&\varphi \one_{\T_{n+1,pk+j}})=
R( \sum_{K:\, \T_{N,K}\subset \T_{n+1,pk+j}} \varphi \T_{N,K}\Big)
=\sum_{K:\, \T_{N,K}\subset \T_{n+1,pk+j}} R(\varphi \T_{N,K})\\
&=\sum_{K:\, \T_{N,K}\subset \T_{n+1,pk+j}} \one_{\Omega_{N,K}}
=\sum_{K:\, \Omega_{N,K}\subset \Omega_{n+1,pk+j}} \one_{\Omega_{N,K}}=\one_{\Omega_{n+1,pk+j}},\\
R(&U_{n,k,s} F)=\sum_{j=0}^{p-1} \theta_s^j R(\varphi \one_{\T_{n+1,pk+j}})
=\sum_{j=0}^{p-1} \theta_s^j \one_{\Omega_{n+1,pk+j}},
\end{align*}
which shows that $R(U_z F)=\gamma_\Omega U_z F$ for any $z\in\cZ$ with $\nu(z)\le N-1$.
As $U_zF$ form a basis of $S_N$, it follows that $R=\gamma_\Omega|_{S_N}$. In particular,
$\gamma_\Omega(\varphi \one_{\T_{N,K}})=R(\varphi \one_{\T_{N,K}})$ for all $K\in\{0,\dots,p^N-1\}$, which shows \eqref{eq-temp2} and concludes the proof.	
\end{proof}

\begin{remark}
In Theorem \ref{thm-geom} one can also take the limit in $A^s(\Omega)$ with any $s$ such that $\gamma_\Omega:H^1(\Omega)\to A^s(\Omega)$ is bounded:
the proof remains unchanged.
\end{remark}

\subsection{Proof of Theorem \ref{thm11}}\label{sec43}

By now we have proved all assertions of Theorem \ref{thm11} for the special case $\cT=\T$. Recall that in Theorem \ref{thm11} we require the condition \eqref{cc}, i.e.
\begin{equation}
	\label{cc2}
	c^{-1}\le \dfrac{\ell_{n,k}}{\ell^n}\le c,\quad 
	c^{-1}\le \dfrac{w_{n,k}}{\alpha_n}\le c,
\end{equation}
and that $\T$ corresponds to $c=1$.
In order to cover the case of general $\cT$ we employ a suitable bijection
between $\cT$ and $\T$. Namely, define $\varphi:\T\to \cT$ by
\[
\varphi(n,k,t):=\Big(n,k,L_{n,k}-\ell_{n,k}+\dfrac{t-t_{n-1}}{\ell^n}\, \ell_{n,k}\Big),
\]
then $\varphi$ maps the vertices $X_{n,k}$ on $\T$ to the same vertices on $\cT$,
the restrictions $\varphi|_{e_{n,k}}$ are dilations by constant factors, and both $\varphi$ and $\varphi^{-1}$ are continious.

If $f:\cT\to \C$, consider $g:=f\circ\varphi:\T\to \C$. Remark that $f$ is continuous if and only if $g$ is continuous. Furthermore, if $f=(f_{n,k})$ and $g=(g_{n,k})$, then
\begin{align*}
	\|f\|^2_{L^2(\cT)}&=\sum_{n=0}^\infty\sum_{k=0}^{p^n-1} \int_{L_{n,k}-\ell_{n,k}}^{L_{n,k}}\big|f_{n,k}(s)\big|^2w_{n,k}(s)\dd s,\\	
	\|g\|^2_{L^2(\T)}&=\sum_{n=0}^\infty\sum_{k=0}^{p^n-1} \alpha^n\int_{t_{n-1}}^{t_n} \big|g_{n,k}(t)\big|^2\dd t\\
	&=\sum_{n=0}^\infty\sum_{k=0}^{p^n-1} \alpha^n\int_{t_{n-1}}^{t_n} \Big|g_{n,k}\Big(n,k,L_{n,k}-\ell_{n,k}+\frac{t-t_{n-1}}{\ell^n}\, \ell_{n,k}\Big)\Big|^2\dd t\\
	&=\sum_{n=0}^\infty\sum_{k=0}^{p^n-1} \alpha^n \dfrac{\ell^n}{\ell_{n,k}}
	\int_{L_{n,k}-\ell_{n,k}}^{L_{n,k}}\big|g_{n,k}(s)\big|^2\dd s.
\end{align*}
In view of \eqref{cc2} we have
\[
c^{-2}\dfrac{\ell^n}{\ell_{n,k}}\le c^{-1}\alpha^n \le w_{n,k}\le c \alpha^n\le c^2 \alpha^n \dfrac{\ell^n}{\ell_{n,k}},
\]
and we infer
\begin{equation}
	c^{-2}\|g\|^2_{L^2(\T)}\le \|f\|^2_{L^2(\cT)}\le c^{2}\|g\|^2_{L^2(\T)}.
	\label{fgcc}
\end{equation}
In addition, $f_{n,k}$ is weakly differentiable if and only if $g_{n,k}$ is weakly differentiable, and then
\[
g'_{n,k}=\dfrac{\ell_{n,k}}{\ell^n}(f'\circ\varphi)_{n,k}.
\]
By \eqref{cc2}
It follows that $c^{-1} |f'\circ\varphi|\le |g'|\le c |f'\circ\varphi|$, and then
\[
c^{-4}\|f'\|^2_{L^2(\cT)}
\stackrel{\eqref{fgcc}}{\le}
c^{-2}\|f'\circ \varphi\|^2_{L^2(\T)}
\|g'\|^2_{L^2(\T)}\le c^2\|f'\circ \varphi\|^2_{L^2(\T)}\stackrel{\eqref{fgcc}}{\le}c^4\|f'\|^2_{L^2(\cT)}.
\]

It follows that the linear operator $\Theta:f\mapsto f\circ\varphi$ is an isomorphism between
$L^2(\cT)$ and $L^2(\T)$ as well as between $H^1(\cT)$ and $H^1(\T)$. In addition, it is bijective from $H^1_c(\cT)$ and $H^1_c(\T)$ by construction, so it is also an isomorphism
between $H^1_0(\cT)$ and $H^1_0(\T)$. This shows that $H^1(\cT)\ne H^1_0(\cT)$ if and only if
$H^1(\T)\ne H^1_0(\T)$, which is equivalent to the inequalities \eqref{lap}.

Due to Theorem \ref{thm-geom} we actually have $\gamma_\Omega^\cT=\gamma_\Omega\circ\Theta$, so the properties of $\gamma_\Omega$ from the preceding subsection are directly transferred to $\gamma^\cT_\Omega$. In particular:
\begin{itemize}
\item $\ker\gamma^\cT_\Omega=\Theta^{-1}(\ker\gamma_\Omega)=\Theta^{-1}\big(H^1_0(\T)\big)
=H^1_0(\cT)$,
\item if $\Omega$ is a $d$-dimensional Euclidean open set admitting a regular strongly balanced $p$-multiscale decomposition (Subsection \ref{ssmult}), then $\gamma^\cT_\Omega:H^1(\cT)\to A^s(\Omega)$ is a bounded linear operator for any $0\le s\le \sigma d$, and it is surjective for $s=\sigma d$,
\item if $\Omega$ is an open set in $d$-dimensional manifold and admitting a regular strongly balanced $p$-multiscale decomposition (Subsection \ref{sec44}), then:
\begin{itemize}
\item $\gamma_\Omega:H^1(\cT)\to A^s(\Omega)$ is a bounded linear operator for any $0\le s< 1$ such that $s\le \sigma d$, surjective if $s=\sigma d<1$,
\item $\gamma_\Omega:H^1(\cT)\to H^s(\Omega)$ is a bounded linear operator for any $0\le s< \frac{1}{2}$ such that $s\le \sigma d$, surjective if $s=\sigma d<\frac{1}{2}$.
\end{itemize}
\end{itemize}
All assertions are proved.

\section{Existence of regular balanced decompositions}\label{sec52}

The construction of the embedded trace in the preceding subsection requires the existence of a regular strongly balanced $p$-multiscale decomposition. Let us show that such decomposition really exist for a wide class of $\Omega$.

\begin{example}[Hypercubes]\label{hcubes} Let us show first how to construct
a regular strongly balanced $p$-multiscale decomposition
of the $d$-dimensional hypercube
\[
Q^{(d)}:=(0,1)^d\subset\R^d.
\]

For $d=1$ we decompose iteratively each interval into $p$ equal subintervals 
to obtain the decomposition
\begin{equation}
	\label{1ddecomp}
	Q^{(1)}_{n,k}=\big(kp^{-n}, (k+1)p^{-n}\big), \quad n\in\N_0,\quad k\in\{0,\dots,p^n-1\}.
\end{equation}

For $d\ge 2$ we obtain a decomposition by dividing alternately each side into $p$ equal parts.
First, set $\Omega_{0,0}:=\Omega$. Now assume that $Q^{(d)}_{n,k}$ are already constructed for some $n\in\N_0$ and all $k\in\{0,\dots,p^n-1\}$ and that for each $(n,k)$ one has
\[	
Q^{(d)}_{n,k}=Q^{(1)}_{n_1,k_1}\times \dots\times Q^{(1)}_{n_d,k_d}
\]
with suitable $n_s\in\N_0$ and $k_s\in \{0,\dots,p^{n_s}-1\}$.
Let $i\in\{1,\dots,d\}$ be such that $(n+1)\equiv i \mod d$, then we obtain $Q^{(d)}_{n+1,pk+j}$ with $j\in \{0,\dots,p-1\}$ by subdividing the $i$-th side $Q^{(1)}_{n_i,k_i}$ of $Q^{(d)}_{n,k}$
into  $p$ equal subintervals
\[
\Tilde I_j:=Q^{(1)}_{n_i+1,p k_i+j},\quad j\in\{0,\dots,p-1\},
\]
and then by setting, for each $j\in\{0,\dots,p-1\}$,
\[
Q^{(d)}_{n+1,pk+j}=Q^{(1)}_{n_1,k_1}\times Q^{(1)}_{n_{i-1},k_{i-1}} \times \Tilde I_j \times Q^{(1)}_{n_{i+1},k_{i+1}}\times\dots \times Q^{(1)}_{n_d,k_d}.
\]

Let us show that this decomposition is regular and strongly balanced. The assumptions (A1)--(A3)
are obviously satisfied, as well as (A4*), as on each passage from $Q^{(d)}_{n,k}$ to $Q^{(d)}_{n+1,k}$ one divides the volumes exactly by $p$. In order to check (A5)--(A6) we remark that for $n\ge d+1$ each $Q^{(d)}_{n,k}$ has the form $I_1\times\dots\times I_d$, where $I_j$ are intervals with
\[
p^{-K}\le|I_j|\le p^{1-K} \text{ for } (K-1)d+1\le n\le Kd, \quad K\in\N_0.
\]
We infer 
\begin{equation}
	\label{ipd}
	p^{-\frac{n+d}{d}}\le |I_j|\le p^{-\frac{n-d}{d}}
\end{equation}
and it follows that
\[
\diam Q^{(d)}_{n,k}\equiv\sqrt{|I_1|^2+\dots+|I_d|^2}\le \sqrt{d}\,p^{-\frac{n-d}{d}}\equiv p\sqrt{d} \,p^{-\frac{n}{d}},
\]
i.e. (A5) is satisfied. Now let $h=(h_1,\dots,h_d)\in\R^d$, then
\begin{align*}
Q^{(d)}_{n,k}&\setminus(Q^{(d)}_{n,k}+h)\\
&=(I_1\times I_2\times \dots\times I_d) \setminus \big( (I_1+h_1)\times (I_2+h_2)\times \dots \times (I_d+h_d)
\big)\\
&\subset \big(I_1\setminus(I_1+h_1)\big)\times I_2\times \dots\times I_d\\
&\qquad \cup \ I_1\times\big(I_2\setminus(I_2+h_2)\big)\times  \dots\times I_d\\
&\qquad\ldots\cup\  I_1\times I_2\times \dots\times \big(I_d\setminus (I_d+h_d)\big).
\end{align*}
We have $\big|I_k\setminus(I_k+h_k)\big|\le|h_k|\le|h|$, which gives the volume estimate
\[
\big|Q^{(d)}_{n,k}\setminus(Q^{(d)}_{n,k}+h)\big|\le  \sum_{k=1}^d |h_k| \prod_{j\ne k} |I_j|\stackrel{\eqref{ipd}}{\le}
\sum_{k=1}^d |h_k| \big(p^{-\frac{n-d}{d}}\big)^{d-1}
\le dp^{d-1}|h| p^{-n\frac{d-1}{d}}
\]
and shows (A6).
\end{example}

\begin{example}[Piecewise smooth star-shaped open sets]\label{hstar}
One says that a bounded open set $\Omega\subset\R^d$ belongs to the class (H) if:
\begin{itemize}
	\item $\Omega$ is star-shaped with respect to a point $x_0\in\Omega$.
	\item there exist $\eps>0$ with $B_\eps(x_0)\subset \Omega$ and a finite partition of $\Omega=\Omega_1\cup\dots\cup\Omega_n$ such that for each $j=1,\dots,n$:
\begin{itemize}
	\item each $\Omega_j$ is a cone with vertex at $x_0$,
	\item $\partial \Omega_j\cap \partial\Omega$ is a $C^1$ surface,
	\item the set $B_\eps(x_0)\cap \Omega_j$ is convex.
\end{itemize}	
\item there exists $\delta>0$ such that $\nu(x)\cdot (x-x_0)\ge \delta$ for all $x\in\partial\Omega$, where $\nu(x)$ denotes the outward unit normal to $\partial\Omega$ at $x$ (defined almost everywhere on $\partial\Omega$).
\end{itemize}
Remark that the class (H) contains all convex polyhedrons and all convex open sets with smooth boundaries. It is shown in \cite[Thm.~5.4]{fonseca} that for arbitrary $\Omega,\Omega'$ in (H) with $|\Omega|=|\Omega'|$ there exists a bi-Lipschitz bijection $\Phi:\Omega'\to \Omega$ with $|\det D\Phi|=1$ (i.e. $\Phi$ preserves the volumes). Note that for many special classes of $\Omega,\Omega'$ like cubes, balls, cylinders, simplices such a map $\Phi$ can be given by explicit formulas, see \cite{bilip,hr} and references therein.

If $\Omega'$ admits a regular strongly balanced $p$-multiscale decomposition $(\Omega'_{n,k})$, then the sets $\Omega_{n,k}:=\Phi(\Omega'_{n,k})$ form a regular strongly balanced $p$-multiscale decomposition of $\Omega$, as the conditions (A5)--(A6) remain true under bi-Lipschitz transformations. In particular, for each $\Omega$ in (H)
one can take a hypercube $Q$ with $|Q|=|\Omega|$ and translate a decomposition of $Q$
(Example \ref{hcubes}) into a decomposition of $\Omega$. 

This discussion shows that any open set $\Omega$ of the class  (H) admits a regular strongly balanced $p$-multiscale decomposition.
\end{example}

\begin{example}[Composed open sets]
Let $\Omega$ be an open set with compact closure in a $d$-dimensional manifold of bounded geometry $S$. Assume that $\Omega$ can be decomposed (up to zero measure sets) into disjoint open pieces $W_j$, $j=1,\dots,p^N$ such that
\begin{itemize}
\item all $W_j$ have the same volume,
\item there exist local charts $\Phi_j:\,\R^d\ni \Tilde O_j\to O_j\subset S$
with $\overline{W_j}\subset O_j$,
\item the sets $\Tilde W_j:=\Phi_j^{-1}(W_j)\subset \R^d$ 
are with Lipschitz boundaries and admit regular strongly balanced $p$-multiscale decompositions,
\end{itemize}
then the decompositions of $\Tilde W_j$ are first transfered to $W_j$ with the help
of $\Phi_j$ and then suitably renumerated to produce a regular strongly balanced $p$-multiscale decomposition of the whole $\Omega$.
\end{example}

\begin{example}[Compact manifolds]
By combining the preceding observations one can show that each compact manifold
admits a regular strongly balanced $p$-multiscale decomposition. The idea comes from Beno\^{\i}t Kloeckner's comments
in the MathOverflow discussion \cite{tl}.
	
Let $(\Omega,g)$ be a compact $d$-dimensional Riemannian manifold. It is known \cite{wh} that $\Omega$ admits a triangulation: there exist disjoint open $W_1,\dots,W_N\subset \Omega$ 
with 
\[
\big|\Omega\setminus(W_1\cup\dots\cup W_N)\big|=0
\]
and local charts
$\Phi_j:\R^d\ni \Tilde O_j\mapsto O_j\subset\Omega$ with $\overline{W_j}\subset O_j$
such that the sets $\Tilde\Omega_j:=\Phi^{-1}_j(W_j)$ are $d$-dimensional simplices.
Without loss of generality we assume that $N=p^n$ (otherwise one cuts some of the simplices $\Tilde\Omega_j$ into smaller subsimplices to obtain a required number). Then one can find a smooth function $f:\Omega\to(0,\infty)$ such that
\[
\int_{W_j} f \dvol_g=\dfrac{|\Omega|}{N}.
\]
By \cite{jm} there exists a diffeomorpism $\phi:\Omega\to\Omega$ with
$\phi_*(f\dvol_g)=\dvol_g$. The open sets $\Omega_j:=\phi(W_j)$ satisfy
\[
|\Omega_j|=\int_{\Omega_j} 1\dvol_g=\int_{\phi(W_j)} 1\dvol_g=\int_{W_j} f\dvol_g=\dfrac{|\Omega|}{N},
\]
i.e. they have the same volume and exhaust $\Omega$ up to a zero-measure subset.
In addition, each $\Omega_j$ is covered by the local chart $\Psi_j:=\phi\circ \Phi_j$ with
$\Psi_j^{-1} (\Omega_j)=\Tilde\Omega_j$.
As discussed in Example \ref{hstar}, each simplex $\Tilde\Omega_j$ admits a regular strongly balanced $p$-multiscale decomposition. This decomposition is transferred to $\Omega_j$
with the help of $\Psi_j$, and the resulting decompositions of $\Omega_j$ are then combined
into  a regular strongly balanced $p$-multiscale decomposition of $\Omega$.	
\end{example}



\end{document}